\newcommand{\NP}{$\mathcal{\mathbb{NP}}$}
\newtheorem{theorem}{Theorem}[section]
\newtheorem{proposition}[theorem]{Proposition}
\newtheorem{corollary}[theorem]{Corollary}
\newtheorem{definition}[theorem]{Definition}
\begin{document}
  \author[1,*]{Rajagopal Venkatesaramani}
  \author[3]{Zhiyu Wan}
\author[2,3,4]{Bradley A. Malin}
\author[1]{Yevgeniy Vorobeychik}
\affil[1]{Washington University in St. Louis, Computer Science and Engineering, St. Louis, 63130}
\affil[2]{Vanderbilt University, Electrical Engineering and Computer Science, Nashville, 37232}
\affil[3]{Vanderbilt University Medical Center, Biomedical Informatics, Nashville, 37203}
\affil[4]{Vanderbilt University Medical Center, Biostatistics, Nashville, 37203}

\affil[*]{rajagopal@wustl.edu}

  \title{\huge Defending Against Membership Inference Attacks on Beacon Services}
  
\date{}
\maketitle
\begin{abstract}
{Large genomic datasets are now created through numerous activities, including  recreational genealogical investigations, biomedical research, and clinical care. At the same time, genomic data has become valuable for reuse beyond their initial point of collection, but privacy concerns often hinder access.
Over the past several years, Beacon services have emerged to broaden accessibility to such data. These services enable users to query for the presence of a particular minor allele in a private dataset, information that can help care providers determine if genomic variation is spurious or has some known clinical indication. However, various studies have shown that even this limited access model can leak if individuals are members in the underlying dataset. Several approaches for mitigating this vulnerability have been proposed, but they are limited in that they 1) typically rely on heuristics and 2) offer probabilistic privacy guarantees, but neglect utility.
In this paper, we present a novel algorithmic framework to ensure privacy in a Beacon service setting with a minimal number of query response flips (e.g., changing a positive response to a negative). Specifically, we represent this problem as combinatorial optimization in both the batch setting (where queries arrive all at once), as well as the online setting (where queries arrive sequentially). The former setting has been the primary focus in prior literature, whereas real Beacons allow sequential queries, motivating the latter investigation.
We present principled algorithms in this framework with both privacy and, in some cases, worst-case utility guarantees. Moreover, through an extensive experimental evaluation, we show that the proposed approaches significantly outperform the state of the art in terms of privacy and utility.}
\end{abstract}

\section{Introduction}
Genomic sequencing has become sufficiently cheap to support a wide range of services in the clinical and biomedical research domains, as well as for recreational consumers.
As a result, the creation of large databases of genome sequences has become common.  However, not all organizations have access to the same information, such that there is a need to make such information more widely available.
While the sharing of such data has the potential to stimulate further scientific and clinical advances, it also introduces privacy risks. For example, healthcare organizations harboring such data may promise their patients that they will not disclose information that can be tied back to them. This balance between privacy and the value of shared data (commonly referred to as \emph{utility}) has led to the creation of genomic data sharing services to reveal limited amounts of genomic information; for example, by sharing only summary statistics~\cite{rodriguez2013complexities,sankararaman2009genomic,wan2017expanding,weil2013nci}.

Over the past several years, the \emph{Beacon service}, which is promoted by the Global Alliance for Genomics and Health (GA4GH), has been increasingly adopted.  This service enables a user to query for the presence of particular minor alleles in an underlying private genomic dataset.
While exposing such limited information may appear safe, it has been shown to be vulnerable to membership inference attacks because it allows users to issue queries for every region of the genome~\cite{shringarpure2015privacy,wan2017controlling}.
These attacks assume that the attacker  knows the genome of the target, and leverage a statistical test, often in the form of a likelihood ratio,  that couples this information with the Beacon response to a collection of queries, to determine whether the target is a member of the Beacon dataset.
The resulting membership inference can, in turn, reveal sensitive information about the individual, such as their health status, that membership in this dataset entails, or simply be in violation of the privacy promises made to the dataset constituents when the data was collected.

A common approach to mitigate privacy risks in Beacon services is to flip the values in a subset of the query responses~\cite{cho2020privacy,raisaro2017addressing,wan2017controlling}; for example, responding that a particular allele is absent when, in fact, it is present in the dataset. However, not all methods offer privacy guarantees, and when they do, they are often probabilistic, as is the case for those that are based on statistical perturbations, such as differential privacy~\cite{al2017aftermath,cho2020privacy,dwork2006calibrating}.
Similarly, while minimizing the number of flipped queries is a standard measure of utility, no prior approaches offer formal optimality guarantees.

We introduce a novel framework for preserving privacy in the context of membership inference attacks on Beacon services. We consider both a \emph{batch} Beacon setting, where all queries by a given party are specified at once, and an \emph{online} Beacon setting, in which queries arrive sequentially.
The former has been the primary focus of prior Beacon privacy analyses~\cite{cho2020privacy,raisaro2017addressing,shringarpure2015privacy,wan2017controlling}, whereas the latter is more akin to the way that Beacon services are actually run in practice (e.g., \url{https://beacon-network.org/}).
In addition, we consider two threat models: the first, and most common in prior literature, involves an \emph{a priori} fixed threshold used by an attacker in the likelihood ratio test. 
The second is adaptive in the sense that it takes as input Beacon query responses \emph{after the flipping strategy is implemented}, along with a secondary genomic dataset, to adaptively identify a threshold that separates those in the Beacon dataset from those who are not.
The former threat model captures adversaries that are more opportunistic (e.g., a parent trying to determine sensitive information about a child), whereas the latter models stronger, highly informed, adversaries who are trying to systematically harvest data, e.g., to sell to others.

We present a series of algorithmic approaches in each of the aforementioned privacy settings. 
Our strongest results are in the batch setting with fixed-threshold attacks, where we show that in the important special case of very small sequencing rates, we can obtain \emph{both} privacy and a provable 
worst-case approximation of the minimum number of queries to flip by drawing a connection to the \emph{set cover} problem.
We further provide principled algorithmic approaches for the general problems in the batch setting, for both fixed-threshold and adaptive threat models, all of which guarantee privacy under these threat models.
Moreover, we present effective algorithms for preserving privacy while minimizing the number of flipped queries in the online setting.
Finally, through extensive experiments we demonstrate that the proposed approaches significantly outperform the state-of-the-art, including those based on differential privacy, in terms of utility (minimizing the number of flipped queries) when privacy can be guaranteed, and in either privacy, utility, or both when complete privacy cannot be achieved for all members of the Beacon dataset.

\section{Preliminaries}
\subsection{The Beacon Services}
A \emph{Beacon} is a web service that responds to queries about the presence/absence of a specific allele (say, nucleotide A) at a particular position (e.g., position 121,212,028) and on a particular chromosome (say, chromosome 10) for any genomes in the database~\cite{Knoppers2014,Torres2016}.
Such queries are only meaningful when there is variation of alleles in the overall population, and the positions that exhibit such variation are commonly known as single nucleotide variants (SNVs).
Thus, we can equivalently say that the Beacon service responds to queries about the existence of a particular SNV.

Suppose that the Beacon service (or simply the
\emph{Beacon}) responds to queries pertaining to a collection of $m$ SNVs, which we index by an integer $j \in \{1,\ldots,m\}$.
An SNV for each individual $i$ actually contains two alleles, one from each parent, but to simplify the discussion, we encode each SNV $j$ as a binary value $d_{ij}$ that indicates the presence ($d_{ij} = 1$) or the absence  ($d_{ij} = 0)$ of the minor allele.
Thus, we can represent an individual $i$ as a binary vector $d_i = \{d_{i1},\ldots,d_{ij},\ldots,d_{im}\}$.

We say that the individuals who are a part of this dataset are \emph{in the Beacon}, contrasting with those not in this dataset, who are referred to as \emph{not in the Beacon}.
If an SNV at position $j$ is queried, the Beacon returns a response $x_j = 1$ if at least one individual $i$ in the Beacon has the minor allele, and 0 otherwise.

\subsection{The Likelihood Ratio Test (LRT) Attack}
Let $B$ be the set of $n$ individuals in the Beacon, and $S$ be the set of $m$ SNV positions that can be queried in the Beacon (in other words, we can view Beacon queries simply as integer indices corresponding to SNV positions).
Define $\delta$ as the genomic sequencing error, and let $f^j$ be the alternative (minor) allele frequency (AAF) in the population for each position $j$. 
Let $D^j_n$ denote the probability that no individual in the beacon has a minor allele at position $j$, which is calculated as \[D^j_n = (1-f^j)^{2n}\]
(recall that each SNV has 2 alleles for each individual, hence $2n$).

We begin with the well-known membership inference attack on the Beacon service by Shringarpure and Bustamante~\cite{shringarpure2015privacy} (henceforth, SB), with the additional assumption that the attacker knows AAF for each position $j$~\cite{wan2017controlling}.
In this attack, the attacker first submits a collection of queries $Q$ to the Beacon service, and then uses these to calculate the likelihood ratio test (LRT) statistic for each individual in the attacker's target set $T$ (i.e., the set of individuals whose membership in the Beacon the attacker wishes to infer).
Specifically, given an individual $i \in T$, a set of queries $Q$, and the vector of query responses $x$, the LRT statistic is 
\begin{equation}
\begin{split}
  L_i(Q,x) = \sum_{j \in Q} d_{ij}\left(x_j\log\frac{1-D^j_n}{1-\delta D^j_{n-1}} + \right. \\ \left. (1-x_j)\log\frac{D^j_n}{\delta D^j_{n-1}}\right).
 \end{split}
 \label{eq:LRT}
\end{equation}
Finally, the attacker claims that an individual $i \in T$ is in the Beacon $B$ when $L_i(Q,x) < \theta$.
The choice of $\theta$ reflects the adversary's preferred balance between precision and recall of the membership inference attack.

\section{Threat Models}
\label{S:threat}

Our threat models for membership inference attacks on the Beacon service are anchored in the SB likelihood ratio test (LRT) attack described above.
However, the LRT attack leaves open three questions. First, what is the attacker's target set $T$? Second, how does the attacker arrive at the choice of a threshold $\theta$ to determine which Beacon membership claims are made? Third, what is the set of queries used in the attack?
Since we do not know \emph{a priori} which individuals will be targeted, we make the worst-case assumption that $B \subseteq T$. In other words, the attacker targets everyone in the Beacon, along with possibly others.
Our threat modeling leads to several variants of the LRT attack along the remaining two dimensions.

\subsection{Choosing the Inference Threshold}

We consider two approaches that an adversary may use to determine when to make a membership inference claim about an individual: \emph{fixed-threshold} and \emph{adaptive} attacks.
In \emph{fixed-threshold} attacks, an adversary uses a predefined threshold $\theta$, which is fixed for the inference attack.
This is a common threat model in the literature~\cite{Hagestedt19,wan2017controlling} and reflects an opportunistic attacker who initially uses a private dataset to simulate LRT attacks by splitting individuals into those in a simulated Beacon and those who are not.
These offline simulations are then used to identify the $\theta$ that best balances precision and recall with respect to attacker's preferences about these.
The practical consequence of assuming a fixed $\theta$ at the time of attack is that $\theta$ is not adjusted based on query responses.

Not considering queries in determining the threshold $\theta$ is consequential once we consider defensive measures that modify query responses.
For example, if modified query responses preserve a clear separation in LRT statistics between individuals who are in and not in the Beacon, a simple clustering of the statistics would enable the attacker to effectively identify those in the Beacon.
Consequently, we additionally consider a stronger  \emph{adaptive} threat model that sets $\theta(x_Q)$ as a function of the responses $x_Q$ to queries $Q$, but with the aim of limiting the false positive rate for any membership claims to be at most $\alpha$.
This \emph{adaptive} threat model requires that  attacker can set the threshold in precisely the right place based on actual queries $Q$. However, this can effectively be accomplished with the aid of simulation experiments using a private dataset $D$ (now, simulating Beacon queries that implement the defensive measures).
This threat model captures highly informed attackers, for example, those who do systematic harvesting of sensitive data for profit.

\subsection{Query Process}
In this work, we distinguish between three mechanisms of query access that can be provisioned to a Beacon: 1) batch query, 2) unauthenticated online query access, and 3) authenticated online query access.
In the \emph{batch} setting, the attacker is assumed to query \emph{all} $m$ SNP positions effectively simultaneously.
In a sense, this is the most favorable setting for the adversary, as it provides maximum information for making membership claim decisions.
It is also the setting that has received much of the attention in prior literature~\cite{Hagestedt19,wan2017controlling}.
However, typical Beacons in practice (such as the service provided by the GA4GH Beacon Network) can be queried \emph{sequentially}, and we thus need to ensure that privacy is guaranteed even in such settings.
Consequently, we also consider two online settings.
The \emph{unauthenticated online} setting assumes that the attacker can submit an arbitrary subset of queries.
This is because if queries are not connected to a particular identity, there is no way for the Beacon to know which queries have been made by the same individual in the past.
The public Beacon Network is an example of this situation.
The \emph{authenticated online} setting, by contrast, assumes that we can keep track of all the past queries by each individual, including the potential adversary.
This entails only allowing registered and verified users access (we assume no collusion), and allows (as we show in the experiments) privacy guarantees with significantly higher utility for such users.

We now make an observation that enables us to uniformly talk about the three variants of the query process above: note that the sole mathematical distinction between them is the set of queries $Q$ that we are concerned about.
In the \emph{batch} setting, $Q = S$, the set of all queries: that is, the LRT statistics of relevance for the purposes of membership inference attack is computed with respect to the set of all possible queries.
In the online authenticated online setting, $Q$ is the set of all past queries, together with the current query $j$, since we are concerned that an individual $i$ may be identifiable as soon as $L_i(Q)$ drops below $\theta$ (or $\theta(Q)$, in the adaptive model).
Finally, in the unauthenticated online setting, since we do not know which set of queries have been asked by the adversary, or what the target set is, we make the worst-case assumption that the adversary makes \emph{most identifying queries} for each individual $i$, that is, the set of queries $Q_i$ is specific to each individual in the Beacon and minimizes $L_i(Q_i) - \theta(Q_i)$ for each individual.

Since the choices for adversarial queries $Q$ (or $Q_i$ in the unauthenticated online case) are thus isomorphic with the particular query process in the threat model, we henceforth simply focus on two aspects of threat models: 1) the choice of query set $Q$ and 2) whether $\theta(Q)$ depends on the responses to $Q$.

\section{Privacy and Utility Goals}
We  now formalize the goals for protecting Beacon service privacy.
Following the framework of the 2016 iDash \emph{Practical Protection of Genomic Data Sharing through Beacon Services} challenge~\cite{wan2017controlling}, the primary means to protect the data allow the Beacon service to flip the answer to a subset of possible SNV queries $F \subseteq S$.
We encode the choice of which responses to flip as a binary vector $y = \{y_1,\ldots,y_m\}$, where $y_j = 1$ implies that the response to query $j \in S$ is flipped, and $y_j = 0$ means that the query answer is unchanged.
We define $x_Q(F)$ as the vector of Beacon  responses to queries $Q$ when the set $F$ of responses is flipped.

Our privacy goal is to ensure that the privacy is preserved \emph{for all individuals in the Beacon}.
Let $L_i(Q_i,x_{Q_i}(F))$ be the LRT statistic  and $\theta(x_{Q_i}(F))$ the threshold after we flip the set $F$ of query responses.
Formally, we wish to guarantee that
\begin{equation}
\label{E:privacy}
\forall i \in B, \quad L_i(Q_i,x_{Q_i}(F)) - \theta(x_{Q_i}(F)) \ge 0.
\end{equation}
In the case of a fixed-threshold attack, $\theta(x_{Q_i}(F))$ is a constant independent of $x$; in the batch setting, $Q_i = S$ for all $i$; and in the authenticated online setting, $Q_i = Q$ for all $i$, where $Q$ is the set of queries made thus far by the authenticated user.

Clearly, we can preserve privacy by simply shutting down the Beacon service.
However, there is value in genomic data sharing, and it is this value that has motivated creative ideas for sharing it in a privacy respectful manner.
Our broader goal, therefore, is thus to achieve privacy, as defined by Equation~\eqref{E:privacy}, with a minimal impact on utility, which in this context means minimizing the number of query responses that are flipped.
The resulting optimization problem, which we refer to as the \textsc{Beacon-Privacy-Problem} can be formalized as follows:
\begin{equation}
\begin{aligned}
\label{E:prob}
\min_{F \subseteq S} |F| \ \mathrm{subject\ to:}\quad\quad\quad\\ L_i(Q_i,x_{Q_i}(F)) - \theta(x_{Q_i}(F))\ge 0 \ \forall i \in B.
\end{aligned}
\end{equation}
In addition, we aim to solve Problem~\ref{E:privacy} effectively and efficiently for each of the threat model settings described in Section~\ref{S:threat}.

To begin, we make a useful structural observation that significantly limits the set of query responses to be considered for flips: we would never want to flip responses from 0 to 1.
We formalize this in the following proposition.
\begin{proposition}
\label{lemma_flip}
Suppose that $x_j = 0$ given the Beacon dataset.  Then $y_j = 1$ can never increase the LRT statistic for any individual $i \in B$, provided sampling error $\delta < \frac{D^j_n}{D^j_{n-1}}$ for all $j$.
\end{proposition}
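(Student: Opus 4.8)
The plan is to exploit the additive, position-by-position structure of the LRT statistic in Equation~\eqref{eq:LRT}: setting $y_j = 1$ changes only the single $j$-th summand, leaving every other term untouched. Hence it suffices to compare the $j$-th contribution before and after the flip and show the flip never makes it larger. Since the proposition assumes the true Beacon response is $x_j = 0$, the unflipped summand is $d_{ij}\log\frac{D^j_n}{\delta D^j_{n-1}}$, while flipping the response (so the reported value becomes $x_j = 1$) replaces it with $d_{ij}\log\frac{1-D^j_n}{1-\delta D^j_{n-1}}$. When $d_{ij}=0$ both contributions vanish and the flip has no effect, so I only need to handle $d_{ij}=1$.

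For $d_{ij}=1$, using strict monotonicity of the logarithm, the claim that the flipped summand does not exceed the original reduces to the single scalar inequality $\frac{1-D^j_n}{1-\delta D^j_{n-1}} \le \frac{D^j_n}{\delta D^j_{n-1}}$. The next step is to clear denominators. Before doing so I must verify both denominators are positive: $\delta D^j_{n-1}>0$ is immediate, and $1-\delta D^j_{n-1}>0$ follows because the hypothesis $\delta < \frac{D^j_n}{D^j_{n-1}}$ yields $\delta D^j_{n-1} < D^j_n \le 1$. With positive denominators cross-multiplication preserves the direction of the inequality, and after expanding and cancelling the common term $-D^j_n\,\delta D^j_{n-1}$ from both sides, the inequality collapses exactly to $\delta D^j_{n-1} \le D^j_n$, i.e.\ $\delta \le \frac{D^j_n}{D^j_{n-1}}$---precisely (a weak form of) the stated sampling-error hypothesis. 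Thus each summand is non-increasing under the flip, and summing over all $j \in Q_i$ gives $L_i$ non-increasing, as required for every $i \in B$.

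The only real subtlety---and the single place the hypothesis is consumed---is the sign check that licenses the cross-multiplication; everything else is routine algebra. I would emphasize that the reduction lands \emph{exactly} on the assumed bound $\delta < \frac{D^j_n}{D^j_{n-1}}$ rather than some looser condition, so the hypothesis is tight and no slack is wasted. Since $\frac{D^j_n}{D^j_{n-1}} = (1-f^j)^2 \le 1$, this also makes transparent why the condition is benign for realistic sequencing-error rates, which is presumably why the authors state it as a mild side assumption.
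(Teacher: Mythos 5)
Your proof is correct and follows essentially the same route as the paper's: both isolate the single $j$-th summand, dispose of the $d_{ij}=0$ case trivially, and then derive the per-term inequality from the hypothesis $\delta < D^j_n/D^j_{n-1}$. The only cosmetic difference is that you compare the two log arguments directly by cross-multiplication, whereas the paper shows $\log\frac{D^j_n}{\delta D^j_{n-1}} > 0$ and $\log\frac{1-D^j_n}{1-\delta D^j_{n-1}} < 0$ separately --- a slightly stronger sign statement ($B_j>0>A_j$) that the authors reuse in later arguments, but which is not needed for the proposition itself.
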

Due to space constraints, we defer the proof to Appendix~\ref{A:a1}.
Since a privacy violation corresponds to LRT statistics for at least one individual in the Beacon being too small, our goal is necessarily to increase these scores until privacy is guaranteed for all individuals in the Beacon.
Henceforth, we assume that $\delta < 0.25 \le  \frac{D^j_n}{D^j_{n-1}} = (1-f^j)^2$ (since $f^j < 0.5$ for all $j$).
Consequently, Proposition~\ref{lemma_flip} implies that flipping a 0 response to a 1 is counterproductive, and we need not consider it as a possible solution to Problem~\ref{E:prob}.
Then, without loss of generality, we can assume that our consideration set $S$ includes only the query responses which are initially 1.

Next, we show that even in a very restricted special case, the problem of minimizing the number of flips in order to guarantee the privacy of all individuals in the Beacon is \NP-Hard.
First, we define a decision version of Problem~\ref{E:prob}, which we refer to as \textsc{Beacon-Privacy-D}.
\begin{definition}[\textsc{Beacon-Privacy-D}]
\textbf{Input:} A collection of individuals $i \in B$ with genomic information $D$, and induced Beacon query responses $x$; a constant $k$.
\textbf{Question:} Can we flip a subset $F$ of query responses where~$|F| \le k$ and $L_i(Q_i,x_{Q_i}(F)) - \theta(x_{Q_i}(F))\ge 0$?
\end{definition}
We reduce from the \textsc{Set Cover} problem.
\begin{definition}[\textsc{Set Cover}]
\textbf{Input:} A universe $U$ of elements, and a collection of sets $R = \{R_1,\ldots,R_n\}$ with $R_j \subseteq U$ and $\cup_j R_j = U$; a constant $k$.
\textbf{Question:} Is there a subset $T \subseteq R$ of sets such that $U = \cup_{t \in T} t$ and $|T| \le k$.
\end{definition}

\begin{theorem}
\label{T:hard}
\textsc{Beacon-Privacy-D} is \NP-Complete even if $\delta = 0$ and $\theta(x_{Q_i}(F))$ is a constant.
\end{theorem}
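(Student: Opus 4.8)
The plan is to first place \textsc{Beacon-Privacy-D} in \NP\ and then prove \NP-hardness by a reduction from \textsc{Set Cover}. Membership is immediate: given a candidate flip set $F$ with $|F|\le k$, one computes each statistic $L_i(S,x_S(F))$ in polynomial time and checks the privacy constraint~\eqref{E:privacy}, so any valid certificate is verifiable efficiently.

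The crux of the hardness argument is a structural simplification that occurs when $\delta = 0$. Recall that $L_i$ sums, over positions $j$ with $d_{ij}=1$, the term $\log\frac{1-D^j_n}{1-\delta D^j_{n-1}}$ when $x_j = 1$ and $\log\frac{D^j_n}{\delta D^j_{n-1}}$ when $x_j = 0$. Because a member $i \in B$ with $d_{ij}=1$ forces the unflipped response to be $x_j = 1$, the only way position $j$ can enter with $x_j = 0$ is if we flip it, and with $\delta = 0$ that flipped term $\log\frac{D^j_n}{\delta D^j_{n-1}}$ diverges to $+\infty$ (intuitively, with no sequencing error it is impossible for a member carrying the allele to receive a negative response, so the test rules membership out). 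Consequently, flipping a position $j$ at which $i$ carries the minor allele sends $L_i$ to $+\infty$ and trivially satisfies privacy, while a position with $d_{ij}=0$ contributes $0$ regardless of flipping; since we never flip a $0$ to a $1$ (Proposition~\ref{lemma_flip}), no flip can decrease any member's statistic, and $i$'s privacy is satisfied exactly when at least one position carrying its minor allele is flipped.

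Given a \textsc{Set Cover} instance $(U, R=\{R_1,\dots,R_n\}, k)$, I will build a Beacon instance in the batch setting ($Q_i = S$) as follows: create one individual $i_u \in B$ for each element $u \in U$, and one SNV position $j \in S$ for each set $R_j$, setting $d_{i_u, j} = 1$ if and only if $u \in R_j$. I fix $\delta = 0$, take the Beacon size to be $|U|$, pick any common AAF $f^j \in (0, 1/2)$ so that each $D^j_n \in (0,1)$ and $\log(1-D^j_n) < 0$, and set the constant threshold $\theta = 0$. Since $\bigcup_j R_j = U$ we may assume every set is nonempty, so every induced response is $x_j = 1$ and every position is a candidate $1\!\to\!0$ flip; the construction is clearly polynomial.

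It then remains to verify the equivalence. Without any flips, each member's statistic is $L_{i_u} = \sum_{j : u \in R_j}\log(1-D^j_n) < 0 = \theta$, so privacy is violated for everyone. After flipping a set $F \subseteq S$, the analysis above gives $L_{i_u}(F) = +\infty \ge \theta$ precisely when some flipped position carries $i_u$'s allele, i.e.\ when $u \in \bigcup_{j \in F} R_j$, and otherwise $L_{i_u}(F) < 0$ remains a violation. Hence all members are private if and only if $\{R_j : j \in F\}$ covers $U$, while $|F| \le k$ matches the set-cover budget, completing the reduction. I expect the main obstacle to be the careful treatment of the $\delta = 0$ boundary case: justifying the $+\infty$ convention for the flipped term and confirming that, together with Proposition~\ref{lemma_flip}, it collapses the per-member privacy constraint into a pure covering condition that is independent of the particular values of $f^j$ and $n$.
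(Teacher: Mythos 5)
Your proof is correct and follows essentially the same route as the paper: membership in \NP\ by direct verification, followed by the identical reduction from \textsc{Set Cover} in which elements become Beacon members, sets become SNV positions with $d_{ij}=1$ iff $u\in R_j$, and the $\delta=0$ convention makes any single flip at a carried position send $L_i$ to $+\infty$, collapsing privacy into a covering condition. If anything, your write-up is slightly more careful than the paper's in explicitly fixing $\theta=0$ and checking that an uncovered individual's statistic stays strictly negative, so the ``only if'' direction of the equivalence is fully justified.
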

The proof is deferred to Appendix \ref{A:a2} due to space constraints.

\section{The Batch Setting}

We begin by considering the \emph{batch} setting in which the adversary submits a set of queries $Q$ all at once, where $Q \ne \emptyset$.
This provides the building blocks for all the query settings in our threat model, both what we called \emph{batch query} setting with $Q = S$ above, and for the online settings.

Recall that a binary vector $x$ corresponds to the \emph{true} query responses, while $y$ represents whether responses have been flipped.
Let $Q_1$ be the subset of queries $Q$ with $x_j = 1$ and $Q_0$ be the subset with $x_j = 0$.
Let $A_j = \log \frac{1-D^j_n}{1-\delta D^j_{n-1}}$ and $B_j = \log \frac{D^j_n}{\delta D^j_{n-1}}$. 
We can then rewrite $L_i(Q,x)$ as follows:
\[
  L_i(Q,x) = \sum_{j \in Q_1} d_{ij}A_j + \sum_{j \in Q_0}d_{ij}B_j.
\]
Moreover, by Proposition~\ref{lemma_flip}, we never flip queries in $Q_0$, which means that for our purposes the second term above is a constant.
Now, if we apply the query flip strategy $y$, the resulting LRT statistic, which we denote by $L_i(Q,x,y)$, becomes
\begin{align*}
  L_i(Q,x,y) = \sum_{j \in Q_1}  d_{ij}((1-y_j)A_j + y_j B_j) + \sum_{j \in Q_0}d_{ij}B_j\\
=\sum_{j \in Q_1} y_j d_{ij}(B_j-A_j) + \sum_{j \in Q_1} d_{ij} A_j + \sum_{j \in Q_0}d_{ij}B_j.
\end{align*}
Define $\Delta_{ij} = d_{ij}(B_j-A_j)$ and $\eta_i = \sum_{j \in Q_1}  d_{ij} A_j + \sum_{j \in Q_0}d_{ij}B_j$.
Then
\begin{align}
\label{E:lrt_y}
    L_i(Q,x,y) = \sum_{j \in Q_1} \Delta_{ij} y_j + \eta_i.
\end{align}
Note that $\eta_i$ is actually also a function of the set of queries $Q$.
For the remainder of this section this will not be important and so we omit this dependence.
However, this becomes important in the online setting below.

Next, we consider approaches to solve \textsc{Beacon-Privacy-Problem} first in the fixed-threshold and subsequently in the adaptive attacks.

\subsection{Fixed-Threshold Attacks}
\label{S:ft}

We begin by presenting an integer linear programming (ILP) approach for optimally solving the general \textsc{Beacon-Privacy-Problem} in the batch setting with fixed-threshold attacks.
This is a straightforward consequence of the problem structure that we had already derived above.
First, note that we wish to minimize the number of flips, which is equivalent to minimizing the number of $1$s in $y$.
Second, note that the privacy constraint is $L_i(Q,x,y) \ge \theta$ which is linear in $y$.
Consequently, the following ILP solves the \textsc{Beacon-Privacy-Problem}:
\begin{align}
\label{E:ilp}    
\min_{y \in \{0,1\}^m} \sum_j y_j~\mathrm{subject\ to:} \nonumber
\\\sum_{j \in Q_1} \Delta_{ij} y_j + \eta_i \ge \theta \ \forall \ i \in B. 
\end{align}

This ILP has worst-case exponential running time, and consequently will have trouble scaling to large problems that include thousands of individuals and millions of SNPs.
We proceed to address the scalability in two ways: first, we identify important special cases which either enable the ILP solvers to leverage problem structure, or admit an approximation algorithm with worst-case guarantees; second, we present two greedy algorithms for solving the general variant of the \textsc{Beacon-Privacy-Problem}.
The key property of \emph{all} the solutions we propose is that they \emph{satisfy the privacy constraints by construction}.

\subsubsection{Small Sequencing Error Rates}

Genomic sequencing error rates $\delta$ are often quite small, on the order of $10^{-6}$.
We now show that for sufficiently small sequencing error rates (with $\delta = 0$ a special case), we can represent the \textsc{Beacon-Privacy-Problem} as a \textsc{Minimum Set Cover} instance.
This, in turn, implies that we can solve our problem using a greedy algorithm with a logarithmic worst-case approximation guarantee.

Generally speaking, for a sufficiently small $\delta$, the terms $B_j$ will be extremely large for any query $j$ that we may choose to flip from a 1 to a 0 and, in particular, will be much larger than $A_j$.
Thus, for every individual $i$ in the Beacon, flipping any $j \in Q_1$ will result in a very large increase in $\Delta_{ij} = d_{ij} (B_j - A_j)$.
This increase will, indeed, be so large as to guarantee that $L_i(Q,x,y) \ge \theta$.
As a consequence, it will suffice to flip \emph{any} query with $d_{ij} = 1$ for $i$ to no longer be categorized as in the Beacon by the attack.
Of course, just how small $\delta$ needs to be for this to work will be a function of both the problem parameters $D_n^j$ and $D_{n-1}^j$, as well as $\theta$.
We emphasize that this line of reasoning is \emph{specific to the fixed-threshold attack} model; the issue is far more subtle in adaptive attacks (Section~\ref{S:adaptive}).
Next, we make this premise precise.

For each individual $i \in B$, define $P_i = \{j \in Q_1 | d_{ij} = 1\}$; in other words, $P_i$ is the set of all queries $j$ for which a) $x_j = 1$ and b) the individual $i$ actually has the associated alternate allele for query $j$, i.e., $d_{ij} = 1$.
We now provide a sufficient condition on $\delta$ such that we can flip any query in $P_i$ for each $i \in B$ to guarantee privacy under fixed-threshold attacks.

\begin{definition}
A set of queries $F$ that have been chosen to flip is a \emph{Beacon-Cover} if $\forall i \in B, F \cap P_i \ne \varnothing$.
\end{definition}
In words, $F$ is a \emph{Beacon-Cover} if each individual in the Beacon is covered by some query flipped in $F$ which is also in $P_i$.
We now define two additional components to the notation that will be useful throughout our analysis.
First, define
\[
D_n = \min_{j \in Q_1} \log(D_n^j/(1-D_n^j)).
\]
Second, define
\begin{align*}
\eta=\min_i\left(\sum_{j \in Q_1} d_{ij} \log(1-D_n^j) + \right.
\\ \left. \sum_{j \in Q_0} d_{ij} \log\frac{D_n^j}{0.25 D_{n-1}^j} \right)
\end{align*}
The following proposition presents a bound on $\delta$ that ensures that a \textsc{Beacon-Cover} guarantees privacy against fixed-threshold attacks.

\begin{theorem}
\label{T:bc}
Suppose that $\delta \le \frac{1}{1+e^{\theta - \eta - D_n}}$.
Then, if $F$ is a \emph{Beacon-Cover}, it guarantees privacy of all $i \in B$ against fixed-threshold attacks with threshold $\theta$.
\end{theorem}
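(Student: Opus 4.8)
The plan is to show that when $F$ is a Beacon-Cover and $y$ is its indicator vector, the post-flip statistic satisfies $L_i(Q,x,y)\ge\theta$ for every $i\in B$, which is exactly the privacy condition~\eqref{E:privacy} in the fixed-threshold case (where $\theta$ is a constant and $Q_i=S$). I would work entirely from the linearized form~\eqref{E:lrt_y}, $L_i(Q,x,y)=\sum_{j\in Q_1}\Delta_{ij}y_j+\eta_i$, and exploit the fact that flipping a single well-chosen query already suffices once $\delta$ is small.

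First I would establish that every summand is nonnegative. Since the standing assumption gives $\delta<0.25\le D_n^j/D_{n-1}^j$, a short computation shows $B_j-A_j>0$, hence $\Delta_{ij}=d_{ij}(B_j-A_j)\ge0$ for all $j\in Q_1$. Because $F$ is a Beacon-Cover, for each $i\in B$ there is some $j^\star\in F\cap P_i$, i.e. $j^\star\in Q_1$ with $d_{ij^\star}=1$ and $y_{j^\star}=1$. Dropping all other (nonnegative) flip terms then yields the lower bound $L_i(Q,x,y)\ge(B_{j^\star}-A_{j^\star})+\eta_i$.

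Next I would control the two remaining pieces separately. For the baseline term, bounding $A_j\ge\log(1-D_n^j)$ (because $1-\delta D_{n-1}^j\le1$) and $B_j\ge\log\frac{D_n^j}{0.25\,D_{n-1}^j}$ (because $\delta\le0.25$) term by term shows $\eta_i\ge\eta$ for every $i$, matching the definition of $\eta$ as the minimum over $i$. For the flip term, I would expand $B_{j^\star}-A_{j^\star}=\log\frac{D_n^{j^\star}}{1-D_n^{j^\star}}+\log\frac{1-\delta D_{n-1}^{j^\star}}{\delta D_{n-1}^{j^\star}}$, bound the first logarithm below by $D_n$ (its definition as a minimum over $Q_1$), and bound the second below by $\log\frac{1-\delta}{\delta}$, using that $u\mapsto\log\frac{1-\delta u}{\delta u}$ is decreasing and $D_{n-1}^{j^\star}\le1$. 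Combining, $L_i(Q,x,y)\ge D_n+\log\frac{1-\delta}{\delta}+\eta$.

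Finally I would close the loop with the hypothesis on $\delta$. Rearranging $\delta\le\frac{1}{1+e^{\theta-\eta-D_n}}$ gives $\frac{1-\delta}{\delta}\ge e^{\theta-\eta-D_n}$, i.e. $\log\frac{1-\delta}{\delta}\ge\theta-\eta-D_n$; substituting into the bound above cancels $D_n$ and $\eta$ and leaves $L_i(Q,x,y)\ge\theta$, as desired. The main obstacle is the flip-term estimate: the per-query $D_{n-1}^{j^\star}$ factors must be eliminated cleanly, and the monotonicity argument that replaces $D_{n-1}^{j^\star}$ by $1$ is what makes the resulting threshold on $\delta$ match the stated closed form exactly; keeping the direction of each relaxation consistent (so that every inequality only decreases $L_i$) is the step requiring the most care.
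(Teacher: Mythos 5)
Your proposal is correct and follows essentially the same route as the paper's proof: isolate the single flipped query guaranteed by the Beacon-Cover, lower-bound the remaining contribution of $\eta_i$ by $\eta$ term by term, decompose $B_{j^\star}-A_{j^\star}$ into $\log\frac{D_n^{j^\star}}{1-D_n^{j^\star}}+\log\bigl(\frac{1}{\delta D_{n-1}^{j^\star}}-1\bigr)$, use $D_{n-1}^{j^\star}\le 1$ to reduce to $\log\bigl(\frac{1}{\delta}-1\bigr)$, and invert the hypothesis on $\delta$. The only cosmetic difference is that you phrase the conclusion as a direct lower bound on $L_i(Q,x,y)$ rather than the paper's equivalent reformulation $\Delta_i\ge\theta-\eta$.
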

The full proof of this result is provided in Appendix~\ref{A:a3}.
The benefit of Theorem~\ref{T:bc} is that it suffices for $F$ to ``cover'' each individual in the sense that for every individual $i$ in the Beacon, there is at least one flipped query in $F$ that suffices to ensure that the score $L_i(Q,x,y) \ge \theta$, that is, to ensure that $i$'s privacy is preserved under the fixed-threshold threat model in the batch setting.
This, in turn, allows us to represent the \textsc{Beacon-Privacy-Problem} as a \textsc{Min Set Cover} instance.
The \textsc{Min Set Cover} problem is the optimization variant of \textsc{Set Cover}, which we now define formally.
\begin{definition}[\textsc{Min Set Cover}]
\textbf{Input:} A universe $U$ of elements, and a collection of sets $R = \{R_1,\ldots,R_n\}$ with $R_j \subseteq U$ and $\cup_j R_j = U$; a constant $k$.
\textbf{Goal:} Minimize $|T|$ over $T \subseteq R$ such that $U = \cup_{t \in T} t$.
\end{definition}
We now show how to represent our problem as an instance of the \textsc{Min Set Cover} problem.
The key advantage of this representation will be a greedy algorithm for solving the \textsc{Beacon-Privacy-Problem} in this setting that yields a logarithmic worst-case approximation guarantee~\cite{Slavik96}.
The representation is similar to the one used in the proof of Theorem~\ref{T:hard} but, of course, is in the opposite direction.
Specifically, we are given a \textsc{Beacon-Privacy-Problem} instance, which we now use to construct a \textsc{Min Set Cover} instance.
Let $U = B$, the set of the individuals in the Beacon, while each $R_j$ corresponds to query $j$, and is comprised of the individuals $i \in B$ whose privacy will be protected if we flip $j$.
Formally, $R_j = \{i \in B | j \in P_i\}$.

Now, we can leverage the greedy algorithm for \textsc{Min Set Cover} to solve our problem.
The greedy algorithm works as follows.
The collection of subsets $T$ is initialized to be empty.
Then, in each iteration we add a subset $S_j$ to $T$ which maximizes the number of elements in $U$ it covers that are not already covered by $T$.
We stop when the entire universe $U$ is covered.
Algorithm~\ref{algo:cover} in Appendix~\ref{A:alg1}, which we refer to as \textsc{Greedy Min Beacon Cover} presents a direct adaptation of this to our problem.
The following is, thus, a direct corollary of Theorem~\ref{T:bc}.
\begin{corollary}
Suppose that $\delta \le \frac{1}{1+e^{\theta - \eta - D_n}}$.
Then Algorithm~\ref{algo:cover} gives an $O(\log(n))$-approximation to the \textsc{Beacon-Privacy-Problem}.
\end{corollary}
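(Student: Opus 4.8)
The plan is to derive this as a direct corollary of Theorem~\ref{T:bc} together with the classical analysis of the greedy algorithm for \textsc{Min Set Cover}. First I would make precise the correspondence induced by the construction $U = B$ and $R_j = \{i \in B \mid j \in P_i\}$: a set of flipped queries $F$ covers the universe $U$ in this instance exactly when, for every $i \in B$, some $j \in F$ satisfies $i \in R_j$, i.e. $j \in P_i$, which is precisely the definition of a \emph{Beacon-Cover}. Thus the feasible solutions of the constructed \textsc{Min Set Cover} instance are in one-to-one correspondence with Beacon-Covers, with matching cardinalities.

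Next I would establish feasibility of the algorithm's output for the \textsc{Beacon-Privacy-Problem}. Since $\delta \le 1/(1 + e^{\theta - \eta - D_n})$ by hypothesis, Theorem~\ref{T:bc} guarantees that every Beacon-Cover preserves the privacy of all $i \in B$. In particular, the cover returned by \textsc{Greedy Min Beacon Cover} (Algorithm~\ref{algo:cover}) is a Beacon-Cover, hence feasible for Problem~\eqref{E:prob}. This already shows the algorithm produces a valid solution; it remains to bound its size against the optimum $\mathrm{OPT}_{\mathrm{BPP}}$.

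For the approximation ratio I would argue that the optimal Beacon-Cover value $\mathrm{OPT}_{\mathrm{SC}}$ and $\mathrm{OPT}_{\mathrm{BPP}}$ coincide. One inequality is immediate from Theorem~\ref{T:bc}: any Beacon-Cover is privacy-preserving, so $\mathrm{OPT}_{\mathrm{BPP}} \le \mathrm{OPT}_{\mathrm{SC}}$. For the reverse inequality I would use the structure of $L_i$ in Equation~\eqref{E:lrt_y}: flipping a query $j \notin P_i$ contributes $\Delta_{ij} = d_{ij}(B_j - A_j) = 0$ (because either $d_{ij} = 0$ or $j \in Q_0$), and by Proposition~\ref{lemma_flip} we never flip queries in $Q_0$. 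Hence flips outside $P_i$ leave $L_i$ unchanged, so the only way to raise the statistic of an individual whose privacy is violated is to flip some $j \in P_i$; consequently every privacy-preserving $F$ must intersect $P_i$ for every at-risk individual, giving $\mathrm{OPT}_{\mathrm{SC}} \le \mathrm{OPT}_{\mathrm{BPP}}$. Combining with the greedy guarantee $|F| \le H_{|U|}\,\mathrm{OPT}_{\mathrm{SC}}$, where $H_{|U|} = O(\log n)$~\cite{Slavik96}, yields the claimed bound.

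The main obstacle is this reverse inequality, and specifically the handling of individuals whose privacy is \emph{already} satisfied before any flips: for such an $i$, the empty flip set preserves privacy yet does not intersect $P_i$, so a privacy-preserving solution need not be a Beacon-Cover over all of $B$. I would resolve this by taking the set-cover universe to be the at-risk set $\{i \in B : L_i(Q,x) < \theta\}$ rather than all of $B$; the argument above then shows that a privacy-preserving flip set is exactly a cover of this universe, and since it has size at most $n$ the greedy bound remains $O(\log n)$. The remaining routine check is that the hypothesis on $\delta$ is precisely what makes a \emph{single} flip in $P_i$ sufficient, so that plain covering — rather than a more delicate weighted accumulation of the increments $\Delta_{ij}$ — is the correct combinatorial abstraction.
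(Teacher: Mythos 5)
Your proposal is correct and follows essentially the same route the paper takes: the paper gives no explicit proof, asserting the result as ``a direct corollary'' of Theorem~\ref{T:bc} combined with the classical greedy guarantee for \textsc{Min Set Cover}, which is exactly your feasibility argument plus the bound $|F| \le H_{|U|}\,\mathrm{OPT}_{\mathrm{SC}}$. You go beyond the paper in two useful ways that the paper leaves implicit: you prove the reverse inequality $\mathrm{OPT}_{\mathrm{SC}} \le \mathrm{OPT}_{\mathrm{BPP}}$ (needed so that the greedy bound is against the true optimum of Problem~\eqref{E:prob} and not merely the cover optimum), and you correctly observe that the universe should be the at-risk individuals rather than all of $B$, which also avoids a potential infeasibility of the constructed instance when some $P_i$ is empty. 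Both additions are sound and strengthen rather than deviate from the paper's argument.
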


\subsubsection{Alternate Allele Frequencies Drawn from the Beta Distribution}
\label{AAF_Beta}
A common assumption in prior literature is that the alternate allele frequencies (AAFs) are drawn from a Beta distribution~\cite{wan2017controlling}, replacing the $D^j_n$ and $D^j_{n-1}$ terms in the LRT score calculation with the expectation over the distribution, which we denote by $\bar{D}_n$ and $\bar{D}_{n-1}$, respectively.
This, in turn, means that $A_j$ and $B_j$ are independent of $j$, and we now denote them by constants $A$ and $B$, respectively.
As a result, $\Delta_{ij} = d_{ij} (B-A)$ and $\eta_i = A\sum_{j \in Q_1} d_{ij} + B\sum_{j \in Q_0} d_{ij}$, and we obtain a simpler expression for the LRT statistic induced by query flips $y$:
\[
L_i(Q,x,y) = (B-A) \sum_{j \in Q_1}d_{ij}y_j + \eta_i.
\]
Consequently, the privacy condition for each $i \in B$ is equivalent to
\[
\sum_{j \in P_i} y_j \ge k_i, \quad \mathrm{where} \quad k_i = \frac{\theta - \eta_i}{B-A}.
\]
Note that, under our assumption,  $\delta < 1/4$, $B-A > 0$.
This has two algorithmic implications.
First, it yields a significantly simpler set of privacy constraints in the integer linear program~\eqref{E:ilp} to obtain the optimal solution to the \textsc{Beacon Privacy Problem}.
Second, we can derive a natural greedy heuristic for this case which generalizes the Greedy \textsc{Min Beacon Cover} algorithm above.

The high-level idea of the greedy heuristic is to iteratively choose a query result to flip that affects the largest number of individuals.
This idea is formalized in Algorithm~\ref{algo:kcover}, which we term \textsc{Greedy $k$-Cover}.

\begin{algorithm}[h]
\SetAlgoLined
\KwInput{A set $B$ of individuals in the Beacon, a subset $P_i$ and a constant $k_i$ for each individual, and a collection of queries $Q$.
}
\KwOutput{Subsets of queries $F \subseteq S$ to flip.}
\KwInit{$F = \emptyset$, $C = \emptyset$.}
  \While{$(B \setminus C) \ne \emptyset$} {
    Set $l = 1, N = -1$.\\
    \For{$j \in (Q \setminus F)$}{
      Set $T_j = \{i \in (B \setminus C) | j \in P_i\}$.\\
      \If{$|T_j| > N$}{
      Set $N = |T_j|$.\\
      Set $l = j$.
      }
      Set $F = F \cup l$.\\
      \For{$i \in (B\setminus C)$}{
         \If{$F \cap P_i \ge k_i$}{
           Set $C = C \cup i$.
         }
      }
    }
 }
 \caption{Greedy $k$-Cover}
 \label{algo:kcover}
\end{algorithm}

\subsubsection{Heuristic Approaches for the General Case}
\label{S:ftgen}
While the two special cases considered above are common, the assumptions in these do not always hold.
On the other hand, the integer programming approach~\eqref{E:ilp} is unlikely to scale to large problems, especially when we have millions of queries to consider.
We now present a general-purpose greedy heuristic that builds on the \textsc{Greedy $k$-Cover} algorithm.
First, observe that in the general setting there is no longer a meaningful notion of "cover", since each query and individual have an associated specific contribution $\Delta_{ij}$.
On the other hand, recall that $\Delta_{ij} = d_{ij} (B_j - A_j)$ and, consequently, if $j \in P_i$, then the marginal impact of flipping query $j$ on the LRT statistic of $i$ only depends on query $j$.
Define $\Delta_j = (B_j - A_j)$, so that $\Delta_{ij} = d_{ij} \Delta_j$.
For any subset of individuals $P \subseteq B$ and query $j$, let $T_j = \{i \in P | j \in P_i\}$ be the set of individuals for whom $j \in P_i$. We can then define the \emph{average marginal contribution} of each query $j$ and population $P$ as \[\bar{\Delta}_j(P) = \frac{|T_j|\Delta_j}{|P|}.\]
\begin{algorithm}[h]
\SetAlgoLined
\KwInput{A set $B$ of individuals in the Beacon, a subset $P_i$ and $\eta_i$ for each individual, marginal contributions $\Delta_j$ for each query, a collection of queries $Q$, and a threshold $\theta$.
}
\KwOutput{Subsets of queries $F \subseteq S$ to flip.}
\KwInit{$F = \emptyset$, $C = \emptyset$.}
  \While{$(B \setminus C) \ne \emptyset$} {
    Set $l = 1, N = -1$.\\
    \For{$j \in (Q \setminus F)$}{
      Set $T_j = \{i \in (B \setminus C) | j \in P_i\}$.\\
      Set $\bar{\Delta}_j = \Delta_j\frac{|T_j|}{|B \setminus C|}$.\\
      \If{$\bar{\Delta}_j > N$}{
      Set $N = \bar{\Delta}_j$.\\
      Set $l = j$.
      }
      Set $F = F \cup l$.\\
       \For{$i \in (B\setminus C)$}{
         \If{$\sum_{j \in F} \Delta_{ij} + \eta_i \ge \theta$}{
           Set $C = C \cup i$.
         }
      }
    }
 }
 \caption{MI Greedy}
 \label{algo:sortave}
\end{algorithm}

The greedy heuristic we propose iteratively chooses a query $j$ to flip with the largest marginal contribution $\bar{\Delta}_j(P)$, where the population $P$ consists of the individuals whose privacy has yet to be guaranteed.
Note that for this heuristic to work reasonably well, it is crucial that $\Delta_j > 0$. This is indeed the case as shown in the proof of Proposition~\ref{lemma_flip} (which implies that $B_j - A_j > 0$) if $\delta < 1/4$.
This means that as we flip query responses, we cannot decrease the LRT score for any individual, and, consequently, any individual $i$ whose privacy is already protected remains protected.
This heuristic is formalized as Algorithm~\ref{algo:sortave}, which we call \textsc{MI-Greedy} (where MI stands for \emph{Marginal Impact}).

\subsection{Adaptive Attacks}
\label{S:adaptive}
Thus far, our attention has centered on the threat model in which the attacker fixes the decision threshold $\theta$ prior to executing any queries, and $\theta$ is independent of queries.
However, since our defense involves the alteration of query responses, an adaptive attacker should make use of query responses in identifying an appropriate threshold.
This means that from the perspective of privacy protection, it is not sufficient to ensure that LRT statistics for all individuals exceed some predefined threshold, but we must actually ensure that the Beacon and non-Beacon populations are \emph{well-mixed} in terms of the respective LRT statistics as calculated \emph{based on the modified query responses}.
We now formalize this idea.

Consider our encoding $y$ of which query responses to flip, and let $\bar{B}$ denote a  set of individual genomes not in the Beacon (e.g., a data sample of these from the general population).
The LRT statistic for each individual $i \in \bar{B}$ can be computed just as for any $i \in B$.
Let us take $K$ individuals from $\bar{B}$ with the lowest LRT statistics, denoting the set of these individuals by $\bar{B}^{(K)}$.
The concrete instantiation of the adaptive threat model then uses the following threshold:
\[
\theta(Q) = \frac{1}{K}\sum_{k \in \bar{B}^{(K)}} \left(\sum_{j \in Q_1} \Delta_{kj} y_j + \eta_k\right).
\]
We can interpret this as representing an attacker's tolerance for false positives. For example, if the distribution of LRT scores is approximately symmetric around the mean, then $K/(2|\bar{B}|)$ is approximately the false positive rate.
As a result, the privacy constraint for each $i \in B$ in the adaptive attack setting becomes
\begin{align*}
\sum_{j \in Q_1} \Delta_{ij} y_j + \eta_i &\ge \frac{1}{K}\sum_{k \in \bar{B}^{(K)}} \left(\sum_{j \in Q_1} \Delta_{kj} y_j + \eta_k\right)\\
&=\sum_{j \in Q_1} \sum_{k \in \bar{B}^{(K)}}\left(\frac{\Delta_{kj}}{K}\right) y_j + \sum_{k \in \bar{B}^{(K)}}\left(\frac{\eta_k}{K}\right).
\end{align*}
Define
\[
\Delta^{(K)}_j = \sum_{k \in \bar{B}^{(K)}} \frac{\Delta_{kj}}{K} \quad \mathrm{and} \quad \eta^{(K)} = \sum_{k \in \bar{B}^{(K)}} \frac{\eta_{k}}{K}.
\]
Rewriting the expression above, we then obtain the following privacy condition for $i \in B$:
\begin{align}
\label{E:adaptiveprivacy}
\sum_{j \in Q_1} (\Delta_{ij} - \Delta_j^{(K)}) y_j + \eta_i \ge \eta^{(K)}.
\end{align}
Finally, by defining $\Delta_{ij}^{(K)} = \Delta_{ij} - \Delta_j^{(K)}$, we can rewrite this in the form identical to Equation~\eqref{E:lrt_y} for the fixed threshold attacks:
\begin{align}
\label{E:lrt_ad_y}
\sum_{j \in Q_1}\Delta_{ij}^{(K)}y_j + \eta_i \ge \eta^{(K)}.
\end{align}

Superficially, this suggests that we can directly apply the methods developed in Section~\ref{S:ft} for privacy-protection against fixed-threshold attacks.
And, indeed, we can directly incorporate the linear privacy constraint~\eqref{E:lrt_ad_y} into the linear integer program~\eqref{E:ilp}.
However, this threat model now breaks the greedy algorithms we previously proposed.
The first reason is that $\delta$ now figures as a part of the threshold $\theta(Q)$ and, consequently, is embedded in $\Delta_{ij}^{(K)}$ in two potentially conflicting ways.
The second (and related) issue is that while a fixed-threshold threat model implied, for $\delta < 0.25$, that $\Delta_{ij} > 0$, this is clearly no longer necessarily the case for $\Delta_{ij} - \Delta_j^{(K)}$.
This has two consequences: 1) greedily adding one query $j$ to the flip set $F$ may actually cause privacy violation of an individual whose privacy constraint was previously satisfied, and 2) the integer linear program~\eqref{E:ilp} may no longer have a feasible solution even though it is feasible for a fixed $\theta$.
In practice, this means that the choice of $K$ cannot be overly conservative.
Moreover, to enable us to directly reuse the general-purpose greedy algorithm from Section~\ref{S:ft} for privacy protection against the adaptive threat model, we only consider flipping queries $j$ for which $\Delta_{ij}^{(K)} \ge 0$ for all $i \in B$ (in our experiments below, that corresponds to 631047 out of 1338843 total SNVs on chromosome 10).

\section{The Online Setting}

Thus far, we assumed that the attacker submits all queries $Q$ all at once, computes LRT statistics, and decides which individuals to make membership claims about.
In practice, however, queries to the Beacon arrive over time, and privacy violations may arise even inadvertently if the attacker is, say, a relative of an individual in the Beacon who happens to observe that a rare collection of minor alleles that their family member possesses is also in the Beacon.
Since individual queries may increase as well as decrease LRT statistics, it may well be the case that queries flipped in anticipation of a batch attack---even with $Q=S$---nevertheless violate privacy for \emph{some} query sequences.
Consequently, in the online setting we need to assure Beacon service privacy \emph{for subsets of queries}.

However, note that, in practice, we may not need to be concerned about \emph{arbitrary} subsets of queries: since it is optimal for an attacker's perspective to make use of all query responses they have observed, we need only guarantee privacy for the subset of queries \emph{submitted by any user thus far}.
That is, of course, if we \emph{know} which queries the user submitted.
This issue of whether or not the Beacon service knows which queries have previously been submitted by a user motivates a natural distinction between two classes of online use settings that we discussed in Section~\ref{S:threat}: \emph{authenticated access}, where the Beacon knows all prior queries (i.e., access \emph{requires} authentication and identity is carefully verified), and \emph{unauthenticated access}, where the Beacon does not have this information.
Next, we formalize the online query setting, and subsequently consider in turn authenticated and unauthenticated access.

\subsection{A Model of the Online Beacon}

The online query setting is characterized by a sequence of $T$ queries $\{q_1,\ldots,q_T\}$, with $q_t \in S$ denoting a $t$th query about a particular SNV (we alternatively refer to this as a query at time $t$, with time here being equivalent to the order in the query sequence).
Similarly, at each point in time, including $t=0$ (i.e., before any queries), the Beacon can decide to flip a subset of query responses $F_t$.
In this setting, the set of all queries flipped is $F = \cup_t F_t$; however, in the online setting we need not flip them all at once.
The reason we may choose to defer flipping a particular query response is that observed queries are informative, and a particular observed query sequence may warrant flipping many fewer responses than, say, a worst-case sequence or the batch of all queries $S$.
There is an additional constraint that we must impose on $F_t$: query responses are \emph{commitments}, in the sense that if at any point $t$ we choose to honestly respond to a query $j$, we must do so in the future; similarly, if we chose to flip the query response, we must do so in the future as well.
Given this constraint, we assume without loss of generality that the query sequence is \emph{non-repeating}, that is, for all $1\le t,t' \le T$, $q_t \ne q_{t'}$ (since future identical queries are responded to exactly as the first time they are encountered).

At time $1\le t\le T$, we have a collection $Q_{t-1}$ of past queries, along with the query $q_t$ that just arrived, resulting in the query set $Q_t$ observed thus far.
A privacy guarantee now entails that privacy of no individual $i 
\in B$ is violated \emph{at any time} $t$.
For a fixed-threshold threat model, this translates into the following privacy condition;
\[
\forall i,t, \quad L_i(Q_t,x) \ge \theta(Q_t).
\]
As we observed in Section~\ref{S:adaptive}, the condition has an analogous form for adaptive attacks.
Since we are in the online setting, we can now choose subsets of queries to flip over time, rather than all at once, we can encode the associated decisions $F_t$ as binary vectors $y_t$, resulting in the following privacy condition:
\[
\forall i,t, \quad L_i(Q_t,x,y_t) \equiv \sum_{j \in Q_{1,t}} \Delta_{ij}y_{j,t} + \eta_i(Q_t) \ge \theta(Q_t),
\]
where we now make it explicit that $\eta_i$ in the modified LRT statistics depends on the query set $Q_t$.

\subsection{Authenticated Access}

The key feature of authenticated access settings that we can leverage is the knowledge at any time $t$ of the prior queries $Q_{t-1}$ as well as the current query $q_t$ to which the Beacon is about to respond (effectively, assuming that there is no collusion among Beacon clients, unlike in the unauthenticated setting below). 
The following proposition makes the intuitive observation that in the authenticated access setting, one never needs to make a decision whether to flip a query or not at time $t$ for any $j \ne q_t$.
\begin{proposition}
\label{P:oneflip}
For any $1\le t \le T$, there is an optimal query flip policy with $F_t \subseteq \{q_t\}$.
\end{proposition}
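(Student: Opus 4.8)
The plan is to argue by a deferral (exchange) argument: starting from any optimal privacy-preserving online policy, I will postpone every flip commitment until the exact moment the corresponding query is submitted, and show that this neither increases the number of flips nor violates privacy at any time. The structural fact driving the argument is that the privacy constraint at time $t$, namely $L_i(Q_t,x,y_t) = \sum_{j \in Q_{1,t}} \Delta_{ij} y_{j,t} + \eta_i(Q_t) \ge \theta(Q_t)$, depends on the flip decisions $y_{j,t}$ only through the queries $j \in Q_t$ that have already been submitted; the flip status of any query not yet in $Q_t$ is simply absent from both sides of the inequality (in the adaptive case, $\theta(Q_t)$ likewise aggregates LRT contributions only over $Q_t$).

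First I would fix an arbitrary optimal policy $\{F_t\}$, let $F = \cup_t F_t$ be its total set of flipped queries, and note that we may assume $F \subseteq \{q_1,\ldots,q_T\}$, since flipping a query that is never submitted affects no response and only wastes utility. Because the sequence is non-repeating, each $j \in F$ has a unique submission time $t_j$ (the $t$ with $q_t = j$), and by the commitment constraint the flip status of $j$ is fixed from $t_j$ onward. I would then define a new policy that commits each flip exactly at its submission time: $F_t' = \{q_t\}$ if $q_t \in F$, and $F_t' = \varnothing$ otherwise. By construction $F_t' \subseteq \{q_t\}$, and $F' = \cup_t F_t' = F$, so the utility is unchanged.

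The second step verifies that the deferred policy preserves privacy at every time $t$. For any $j \in Q_t$ we have $t_j \le t$ (the query was submitted at or before time $t$), so under the commitment constraint its flip status is already locked under both policies; hence $y_t'$ and $y_t$ agree on all of $Q_t$. Since both $L_i(Q_t,x,\cdot)$ and the threshold $\theta(Q_t)$ depend on the flip vector only through its restriction to $Q_t$, we obtain $L_i(Q_t,x,y_t') - \theta(Q_t) = L_i(Q_t,x,y_t) - \theta(Q_t) \ge 0$ for all $i \in B$, so no privacy constraint is violated.

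The main obstacle is precisely this privacy-preservation step: one must rule out that pre-committing to a future flip was ever genuinely needed to protect some individual at an earlier time. The argument above shows it never is, since an as-yet-unsubmitted query contributes nothing to the LRT sum at that earlier time, so its flip can be deferred to $t_j$ at no cost to privacy. The only care required is to keep the two bookkeeping ingredients aligned, the commitment constraint (guaranteeing the policies coincide on $Q_t$) and the non-repeating assumption (guaranteeing each $t_j$ is well defined); together these close the argument.
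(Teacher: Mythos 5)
Your deferral argument is correct and is precisely the formalization of the paper's own (one-sentence) justification, which says that a flip decision need not be \emph{implemented} until the corresponding query is actually observed. You make the same point rigorously by noting that $L_i(Q_t,x,y_t)$ and $\theta(Q_t)$ depend on the flip vector only through its restriction to $Q_t$, so postponing each flip to its submission time preserves both privacy and the flip count.
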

This follows because if you wish to flip a particular query $j$, you need not \emph{implement} this decision until you actually observe the query.

\subsubsection{Fixed-Threshold Attacks}

We begin in the authenticated setting by again considering the  fixed-threshold attacks. 
For this setting, our assumptions imply that $\Delta_j = B_j - A_j > 0$ for all $j$.
Consequently, if responding honestly to the query $q_t$ would violate privacy, we would \emph{always} wish to flip the response.
This is captured in the following proposition.
\begin{proposition}
\label{P:exactlyone}
In the fixed-threshold threat model and authenticated access setting, if $\exists~i \in B$ such that $L_i(Q_t,x,y) < \theta$, then $F_t = \{q_t\}$.
\end{proposition}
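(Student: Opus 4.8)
The plan is to reduce the claim to a short feasibility argument built on Proposition~\ref{P:oneflip} together with the sign structure of the coefficients in the fixed-threshold regime. By Proposition~\ref{P:oneflip}, there is an optimal policy with $F_t \subseteq \{q_t\}$, so at time $t$ only two actions are admissible: respond honestly ($F_t = \varnothing$) or flip ($F_t = \{q_t\}$). The hypothesis states that the honest action already violates privacy for some $i \in B$. It therefore suffices to show that flipping $q_t$ is a \emph{feasible} action, i.e.\ that it restores $L_i(Q_t,x,y_t) \ge \theta$ for every $i \in B$; feasibility of flipping together with infeasibility of honesty forces $F_t = \{q_t\}$.

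First I would observe that a privacy violation can only be triggered by a query whose true response is $x_{q_t} = 1$. Indeed, if $x_{q_t} = 0$ then no individual in the Beacon carries the minor allele at $q_t$, so $d_{i,q_t} = 0$ for all $i \in B$ and the arrival of $q_t$ leaves every LRT statistic unchanged. Hence under the hypothesis we may assume $q_t \in Q_{1,t}$.

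Next I would decompose the modified LRT statistic additively across the newly arrived query. Writing $Q_t = Q_{t-1}\cup\{q_t\}$ and using the form $L_i(Q_t,x,y_t)=\sum_{j\in Q_{1,t}}\Delta_{ij}y_{j,t}+\eta_i(Q_t)$ from Equation~\eqref{E:lrt_y}, and holding the committed flips on $Q_{t-1}$ fixed, one obtains
\[
L_i(Q_t,x,y_t) = L_i(Q_{t-1},x,y_{t-1}) + \Delta_{i,q_t}\,y_{q_t,t} + d_{i,q_t}A_{q_t}.
\]
Responding honestly ($y_{q_t,t}=0$) adds $d_{i,q_t}A_{q_t}$, while flipping ($y_{q_t,t}=1$) adds $d_{i,q_t}(B_{q_t}-A_{q_t})+d_{i,q_t}A_{q_t}=d_{i,q_t}B_{q_t}$. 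Under the standing assumption $\delta<\tfrac14\le(1-f^j)^2$, a direct sign check gives $A_{q_t}<0<B_{q_t}$, the positivity of $B_{q_t}-A_{q_t}=\Delta_{q_t}$ being exactly the inequality established in the proof of Proposition~\ref{lemma_flip}.

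Finally I would invoke the online invariant that privacy is preserved at every earlier time, maintained by induction on $t$, so that $L_i(Q_{t-1},x,y_{t-1})\ge\theta$ for all $i\in B$. Flipping then yields $L_i(Q_t,x,y_t)=L_i(Q_{t-1},x,y_{t-1})+d_{i,q_t}B_{q_t}\ge\theta$ for every $i$, since $d_{i,q_t}B_{q_t}\ge 0$; thus $F_t=\{q_t\}$ is feasible, whereas $F_t=\varnothing$ is not, and the claim follows. The one point requiring care—and the crux of the argument—is this monotonicity: flipping $q_t$ can only \emph{raise} LRT statistics and therefore never jeopardizes an individual already protected at time $t-1$. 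This is precisely where positivity of $\Delta_{q_t}$ is used, and it is exactly the property that fails in the adaptive setting of Section~\ref{S:adaptive}, where $\Delta_{ij}^{(K)}$ may be negative, so the same one-line feasibility argument does not carry over.
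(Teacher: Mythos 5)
Your argument is correct and follows the same route the paper takes: the paper offers no formal proof of this proposition, only the remark preceding it that $\Delta_j = B_j - A_j > 0$ under $\delta < 1/4$, so that flipping can only raise LRT statistics, combined with Proposition~\ref{P:oneflip} to restrict the choice to $F_t \in \{\varnothing, \{q_t\}\}$; your necessity step is exactly that reasoning made explicit, and your additive decomposition and sign checks are the same computation the paper carries out in the proof of Proposition~\ref{P:onlinegreedy} in Appendix~\ref{A:a4}. One small remark: the feasibility half of your argument (that flipping restores $L_i \ge \theta$ for \emph{every} $i$) is not needed for the stated claim --- infeasibility of the honest response together with $F_t \subseteq \{q_t\}$ already forces $F_t = \{q_t\}$ --- and it quietly imports the inductive invariant $L_i(Q_{t-1},x,y_{t-1}) \ge \theta$, whose base case requires $\theta \le 0$, an assumption the paper only introduces later in Proposition~\ref{P:onlinegreedy} rather than here. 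So your proof in effect bundles Proposition~\ref{P:exactlyone} and part of Proposition~\ref{P:onlinegreedy} together; that is harmless, but worth flagging so the $\theta \le 0$ hypothesis is not silently assumed where it is not stated.
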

Proposition~\ref{P:exactlyone}  suggests a simple online heuristic  for ensuring privacy while minimizing the number of query flips: flip $j$ if and only if $q_t = j$ and adding $q_t$ violates privacy.
This is formalized in Algorithm~\ref{algo:authonline}.
Our experiments below will demonstrate that this simple heuristic is remarkably effective in practice.
We now observe that while this heuristic may not be optimal, it does guarantee privacy in this setting under the reasonable assumption that $\theta \le 0$ (otherwise, privacy is impossible, because the constraint is violated even before any queries are made).
\begin{algorithm}
\SetAlgoLined
\KwInput{A set $B$ of individuals in the Beacon, previous queries $Q$, history of which queries have been flipped previously $y$, the incoming query $q$, and threshold $\theta$.
}
\KwOutput{Decision whether or not to flip query $q$.}

  \If{$\exists i \in B: L_i(Q\cup q,x,y) < \theta$ }{
    \Return \textbf{Yes}
  }

  \Return \textbf{No}
 \caption{\textsc{Online Greedy} Algorithm}
 \label{algo:authonline}
\end{algorithm}
\begin{proposition}
\label{P:onlinegreedy}
Suppose that $\theta \le 0$.
Then \textsc{Online Greedy Algorithm} guarantees privacy against fixed-threshold attacks in the online authenticated access setting.
\end{proposition}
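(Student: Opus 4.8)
The plan is to establish the time-indexed privacy invariant $L_i(Q_t,x,y_t) \ge \theta$ for all $i \in B$ and all $t$ by induction on the time step $t$, exploiting the monotonicity of the LRT statistic under both flips and honest responses in the fixed-threshold setting. For the base case $t=0$, no queries have been issued, so the empty sum in Equation~\eqref{E:lrt_y} gives $L_i = 0$, and the invariant $0 \ge \theta$ holds precisely because of the hypothesis $\theta \le 0$. This is the only place the assumption $\theta \le 0$ is used, and it is exactly what rules out the degenerate case noted in the statement, namely that privacy is impossible before any query is made.

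For the inductive step I would assume $L_i(Q_{t-1},x,y_{t-1}) \ge \theta$ for every $i \in B$ and track how the incoming query $q_t$ changes each individual's statistic. By Proposition~\ref{lemma_flip} and the standing assumption $\delta < 1/4$ (equivalently $\delta < D_n^j/D_{n-1}^j$), we have $\Delta_j = B_j - A_j > 0$ and, moreover, $B_j > 0$ for all $j$, so the contribution of any query to any individual's LRT is nonnegative whenever the \emph{reported} response is $0$. I would then split into cases. If $x_{q_t}=0$ (an honest ``no''), the algorithm leaves the response unflipped and the new term $d_{i,q_t}B_{q_t} \ge 0$ only increases each $L_i$, preserving the invariant. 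If $x_{q_t}=1$ and Algorithm~\ref{algo:authonline} returns \textbf{No}, then by the very condition checked in the algorithm we have $L_i(Q_t,x,y) \ge \theta$ for all $i$, which is the invariant directly. Finally, if $x_{q_t}=1$ and the algorithm flips, the reported response becomes $0$, so the added term is again $d_{i,q_t}B_{q_t} \ge 0$ and $L_i(Q_t) = L_i(Q_{t-1}) + d_{i,q_t}B_{q_t} \ge L_i(Q_{t-1}) \ge \theta$ by the inductive hypothesis.

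The commitment constraint together with the non-repeating assumption on the query sequence guarantees that each query is decided exactly once and never revisited, so the per-step invariant assembled by the induction yields privacy at every time $t$ simultaneously, which is the claim. The one point requiring care, though not a genuine obstacle, is verifying that a \emph{flip} can never harm any individual's statistic: this is exactly the monotonicity $\Delta_j > 0$ established in the proof of Proposition~\ref{lemma_flip}, which also certifies that flipping always suffices to restore the invariant whenever an honest $1$-response would break it. I do not anticipate a deeper difficulty, since the argument is essentially monotonicity-plus-induction bookkeeping; the substantive content is the base case's reliance on $\theta \le 0$ combined with the signs of $\Delta_j$ and $B_j$ guaranteed by $\delta < 1/4$.
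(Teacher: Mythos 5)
Your proof is correct and follows essentially the same route as the paper's: induction on $t$, with the base case $L_i(\emptyset,x,y)=0\ge\theta$ relying on $\theta\le 0$, and the inductive step relying on $B_j>0$ and $B_j-A_j>0$ under $\delta<1/4$ so that an honest $0$-response or a flipped $1$-response can only increase every $L_i$. If anything, your case split on $x_{q_t}$ and the algorithm's decision makes slightly more explicit than the paper does that a flip triggered by one individual's violated constraint cannot harm any other individual, but this is a presentational difference rather than a different argument.
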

We defer the proof to Appendix~\ref{A:a4}.
\subsubsection{Adaptive Attacks}

As before, adaptive attacks complicate things considerably, but we can nevertheless leverage the algorithmic idea developed for fixed-threshold attacks.
In the case of adaptive attacks, recall that the privacy condition for each $i \in B$ becomes
\[
\sum_{j \in Q_{t,1}} \Delta_{ij}^{(K)}y_j + \eta_i(Q_t) \ge \eta^{(K)}(Q_t),
\]
where we now make the dependence of $\eta_i(Q_t)$ and $\eta^{(K)}(Q_t)$ explicit.
Note that we can still apply the \textsc{Online Greedy Algorithm} above, but with an important change: now, both $\eta_i(Q_t)$ and $\eta^{(K)}(Q_t)$ must be updated after receiving each query $q$.
Modulo this change, the algorithm, upon observing a query $q$, checks whether $\exists i \in B$ such that $\sum_{j \in Q_{t,1}} \Delta_{ij}^{(K)}y_j + \eta_i(Q_t) < \eta^{(K)}(Q_t)$, flips $q$ if this is true, and does not otherwise.

The crucial issue, however, is that we can no longer guarantee privacy in this setting, since flipping a query $q$ may now actually cause the privacy condition for some other individual to be violated.
However, in our experiments below we show that our populations remain well-mixed in terms of LRT statistics (for which the \emph{adaptive} privacy condition is a proxy).

\subsection{Unauthenticated Access}
The key distinction between authenticated and unauthenticated access in our model is that in the latter case the Beacon \emph{does not know which queries have previously been made} when it receives a new query $q$ at any given point in time.
We therefore model this setting by assuming that the query sequence (besides $q$) is adversarial.
Specifically, the privacy constraint now takes the following form:
\begin{align}
\label{E:advprivacy}
\forall i,t, \quad \min_{Q_i \subseteq Q_{t-1}} L_i(Q_i \cup q_t,x,y_t) - \theta(Q_i \cup q_t) \ge 0.
\end{align}
We use notation $Q_i$ to emphasize that since we do not know the past query sequence and wish to protect the privacy of \emph{every} $i \in B$, we are assuming that the sequence of queries is independently adversarial for each $i$.
We now show that in the unauthenticated setting, the temporal aspect collapses, and the optimal decision about which queries to flip can be made at time $t=0$.
\begin{proposition}
\label{P:upfront}
In the unauthenticated access setting, if all $j \in S$ are queried by some finite time $t$, there exists an optimal solution to the \textsc{Beacon Privacy Problem}, then there is an optimal solution with the property that $F = F_0$ and $F_t = \emptyset$ for all $t > 0$.
\end{proposition}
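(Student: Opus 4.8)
The plan is to exploit the \emph{commitment} constraint to show that the temporal ordering of queries is irrelevant to the adversary's view, so that the online problem in the unauthenticated setting collapses to a single static worst-case problem that depends only on the total flip set $F = \bigcup_t F_t$. Writing $y$ for the flip vector with $y_j = 1$ exactly when $j \in F$, I would establish that (i) any feasible online policy with total flip set $F$ must satisfy one static constraint determined entirely by $F$, and (ii) the upfront policy that sets $F_0 = F$ and $F_t = \varnothing$ for $t > 0$ satisfies precisely the same constraint while flipping the same number $|F|$ of queries, hence preserves optimality.

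The first step is to observe that, by the commitment constraint, the response the Beacon returns to a query $j$ is flipped if and only if $j \in F$, regardless of when $j$ is first asked. Combining this with the form of the LRT statistic in Equation~\eqref{E:lrt_y}, the modified statistic $L_i(Q', x, y)$ --- and, in the adaptive model, the threshold $\theta(Q')$ --- depends only on $Q'$ and on $F \cap Q'$, and never on the time at which the adversary completes the query set $Q'$. I would make this precise by noting that at the first time $t^*$ by which every element of $Q'$ has been queried, all flip decisions for queries in $Q'$ are already finalized and coincide with the restriction of $y$ to $Q'$.

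Next I would use the hypothesis that all $j \in S$ are queried by some finite time to show that the family of online constraints~\eqref{E:advprivacy}, taken over all $i$ and $t$, is equivalent to the static worst-case condition
\[
\forall i \in B, \quad \min_{\varnothing \ne Q' \subseteq S} \bigl[ L_i(Q', x, y) - \theta(Q') \bigr] \ge 0.
\]
The forward direction is immediate, since each adversarial set $Q_i \cup \{q_t\}$ appearing in~\eqref{E:advprivacy} is a nonempty subset of $S$ and is therefore dominated by the static minimum. For the reverse direction, given any nonempty $Q' \subseteq S$, I would let $t^*$ be the first time all of $Q'$ has been queried; then $q_{t^*} \in Q'$ and $Q' \setminus \{q_{t^*}\} \subseteq Q_{t^*-1}$, so $Q'$ arises exactly as a feasible adversarial query set in the constraint at time $t^*$. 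Thus every nonempty subset of $S$ is realized, and the two constraint systems coincide.

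Since this static condition depends only on $F$, the upfront policy $F_0 = F$, $F_t = \varnothing$ is feasible if and only if the original online policy is, and it incurs the same objective $|F|$. Hence, beginning from any optimal online solution with flip set $F^*$, relocating all flips to time $0$ yields a solution that is feasible and optimal, which is the claim. The step I expect to be the main obstacle is the equivalence of the two constraint systems: one must verify carefully that ranging the current query $q_t$ and the admissible history $Q_i \subseteq Q_{t-1}$ over all times exhausts precisely the nonempty subsets of $S$, and that the adaptive threshold $\theta(Q')$ genuinely introduces no temporal dependence beyond $F$.
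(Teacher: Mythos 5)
Your proof is correct and takes essentially the same route as the paper's (much terser) sketch: both collapse the online constraint family to a single static worst-case condition over subsets of $S$ that depends only on the total flip set $F$, and then relocate all flips to $t=0$ without changing feasibility or the objective $|F|$. Your write-up is more careful than the paper's, in particular in using the commitment constraint to identify $y_{j,t}$ with the final $y_j$ on already-asked queries and in verifying that every nonempty $Q' \subseteq S$ is realized as $Q_i \cup \{q_{t^*}\}$ at the time $t^*$ its last element is queried.
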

\begin{proof}[Proof Sketch]
For a sufficiently large $t$, $Q_{t-1} = S$. As such, $Q_i \cup q_t = Q_i$.  Then, it must be true that $F_t = \emptyset$ and $F = \cup_{t' < t} F_t$.
Since $F$ must guarantee privacy for all queries at time $t' \ge t$, it must be a minimal set of queries to do so, and we can simply identify such a set at $t=0$.
\end{proof}
This means that the unauthenticated online access setting is effectively a worst-case batch setting, where the worst case set of queries is chosen independently for each $i \in B$.
Note that this proposition appears to contradict Proposition~\ref{P:oneflip}, but in fact it does not, as neither claims that the optimal solution it characterizes is \emph{unique}.
In this case, too, we can wait to implement the flips in $F$ until the associated queries are actually observed for the first time.

The consequence of Proposition~\ref{P:upfront} is that we can simplify somewhat the definition of privacy in the unauthenticated setting:
\begin{align}
\label{E:advprivacysimple}
\forall i, \quad \min_{Q_i \subseteq S} L_i(Q_i,x,y) -\theta(Q_i) \ge 0.
\end{align}

\subsubsection{Fixed-Threshold Attacks}

Recall that $P_i(Q) = \{j \in Q_1|d_{ij} = 1\}$.
While we previously omitted the dependence of $P_i$ on $Q$, this must be explicit in the online setting.
The next proposition shows that in the case of fixed-threshold attacks, the privacy condition reduces to a particularly simple form.
\begin{proposition}
\label{P:advftatt}
In the unauthenticated access setting with fixed-threshold attacks, the privacy condition~\eqref{E:advprivacysimple} is equivalent to
\begin{align}
\label{E:advprivft}
\forall i, \quad  \sum_{j \in P_i(S) \setminus F} A_j \ge \theta.
\end{align}
\end{proposition}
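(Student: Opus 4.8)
The plan is to start from the simplified unauthenticated privacy condition~\eqref{E:advprivacysimple} and show that, for fixed-threshold attacks, the inner minimization over $Q_i \subseteq S$ has a clean closed form, yielding~\eqref{E:advprivft}. Since $\theta$ is constant in the fixed-threshold model, the condition reads $\min_{Q_i \subseteq S} L_i(Q_i,x,y) \ge \theta$ for each $i \in B$, so the entire task reduces to characterizing the query set $Q_i$ that \emph{minimizes} individual $i$'s LRT statistic after flips are applied.

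First I would recall the decomposition derived in the batch section: after applying flips $y$, the contribution of a queried position $j$ to $L_i$ depends on whether $j \in P_i(S)$ (i.e.\ $x_j = 1$ and $d_{ij} = 1$) and whether $j$ was flipped. A position contributes $A_j$ if $j \in P_i(S) \setminus F$, contributes $B_j$ if $j \in P_i(S) \cap F$, and the $Q_0$ positions contribute their (constant) $B_j$ terms as usual. The key structural facts I would invoke are that under our standing assumption $\delta < 1/4$ we have $A_j > 0$ and $B_j - A_j > 0$, so $B_j > A_j > 0$. Consequently, every term available to the adversary is \emph{positive}: including any query $j$ in $Q_i$ can only \emph{increase} $L_i$. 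Therefore the adversary minimizes $L_i$ by including as \emph{few} positive-contribution queries as possible.

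The next step is to pin down which queries the adversary is forced to include versus free to omit. The subtlety is that the adversary wants the statistic small, but flipped positions in $P_i(S) \cap F$ contribute the large $B_j$ term, whereas unflipped positions in $P_i(S)\setminus F$ contribute the smaller $A_j$. The adversary's optimal choice is thus to \emph{omit} every query whose inclusion would add a $B_j$ (the flipped ones) and also omit queries that add nothing, leaving exactly the unflipped positions in $P_i(S)$ as the queries it cannot avoid contributing to the minimizing statistic. Making this precise: for the worst-case (minimizing) $Q_i$, the adversary drops all $j \in F$ and all $j$ with $d_{ij}=0$ or $x_j=0$ (these contribute $0$ to $\Delta_{ij}y_j + \eta_i$ for that position or can be dropped freely), so that $L_i$ collapses to the sum of $A_j$ over the unflipped, truly-present positions, namely $\sum_{j \in P_i(S)\setminus F} A_j$. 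Since adding any further term only raises the statistic, this is genuinely the minimum, and the privacy condition becomes $\sum_{j \in P_i(S)\setminus F} A_j \ge \theta$ for all $i$, which is exactly~\eqref{E:advprivft}.

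The main obstacle I anticipate is the bookkeeping in the minimization argument: one must argue carefully that the adversary gains nothing by \emph{including} a flipped query (which would add a positive $B_j$ and hence fail to minimize) and loses nothing by \emph{excluding} positions outside $P_i(S)$, while also confirming that $Q_0$ positions do not interfere once flips are fixed. The positivity of $A_j$ and $B_j$ (from $\delta<1/4$, as established in the proof of Proposition~\ref{lemma_flip}) is what makes the minimizer monotone and hence explicitly computable; I would state this positivity explicitly and then the reduction is essentially immediate. The only genuinely delicate point is verifying that the worst-case $Q_i$ is well-defined and attained by the set $P_i(S)\setminus F$ rather than some strict subset — but since every term $A_j$ is strictly positive, dropping any $j \in P_i(S)\setminus F$ would only decrease the sum, so the adversary would want to do so; here I would note that the adversary is constrained to eventually query all of $S$ (the hypothesis of Proposition~\ref{P:upfront}), so those positions \emph{are} queried and their honest-but-unflipped responses are observed, forcing their $A_j$ contributions into the statistic. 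That constraint is what makes $P_i(S)\setminus F$ the true minimizer and closes the argument.
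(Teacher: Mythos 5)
There is a genuine gap: you have the sign of $A_j$ backwards, and this error breaks the core of the minimization argument. Recall $A_j = \log\frac{1-D_n^j}{1-\delta D_{n-1}^j}$; the proof of Proposition~\ref{lemma_flip} shows $\frac{1-D_n^j}{1-\delta D_{n-1}^j} < 1$ whenever $\delta < D_n^j/D_{n-1}^j$, so $A_j < 0$ (while $B_j > 0$). The paper's argument hinges precisely on this: an unflipped query $j \in P_i(S)\setminus F$ contributes the \emph{negative} quantity $A_j$ to the LRT statistic, so the adversary minimizing $L_i$ \emph{wants} to include every such query; flipped queries contribute $d_{ij}B_j > 0$ and queries in $Q_0$ contribute $d_{ij}B_j > 0$, so the minimizer omits all of them. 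That is why the minimum collapses to $\sum_{j \in P_i(S)\setminus F} A_j$. Your claim that $B_j > A_j > 0$ and that ``every term available to the adversary is positive'' would instead imply the minimizer is $Q_i = \emptyset$ with minimum value $0$, which is not the stated condition.

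Your attempted patch --- that the adversary is ``forced'' to include the positions in $P_i(S)\setminus F$ because all of $S$ is eventually queried --- is both inconsistent and contrary to the threat model. It is inconsistent because if the adversary were compelled to account for all queried positions, the same compulsion would apply to the flipped positions and the $Q_0$ positions (which you correctly drop from the minimizer), so you cannot selectively force inclusion of only the unflipped $P_i(S)$ terms. It also contradicts the definition of the unauthenticated privacy condition~\eqref{E:advprivacysimple}, which explicitly minimizes over \emph{all} subsets $Q_i \subseteq S$ precisely because the Beacon cannot know which queries the adversary used in its LRT computation; Proposition~\ref{P:upfront} only establishes that the flip decisions can be made at $t=0$, not that the adversarial query set equals $S$. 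Once the sign of $A_j$ is corrected, no such patch is needed and the rest of your bookkeeping (dropping $d_{ij}=0$ terms, dropping $Q_0$, dropping flipped queries) goes through exactly as in the paper. As a sanity check, note the paper's subsequent remark that for $\theta = 0$ the only feasible solution is $F = \cup_i P_i$: this is only sensible if each $A_j$ is strictly negative.
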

The proof is deferred to Appendix~\ref{A:a5}. An important  implication of Proposition~\ref{P:advftatt} is  that, in this setting, \emph{flipping queries is equivalent to masking them}.
The reason is that since flipping increases LRT statistics, the worst-case subset of queries will never include any queries that have been flipped, effectively masking all of them.

As a consequence of Proposition~\ref{P:advftatt}, we can represent the solution to the \textsc{Beacon-Privacy-Problem} in this setting as the following integer linear program:
\begin{align}
\min_{y} \quad \sum_j y_j \quad \mathrm{subject\ to:}\quad \sum_{j \in P_i} |A_j|y_j \ge \theta - \sum_{j \in P_i} A_j,
\end{align}
where $|A_j|$ refers to the absolute value of $A_j$.
Moreover, in the special case that AAFs follow the beta distribution and we use their expectations, we can make direct use of the methods from Section~\ref{AAF_Beta}, including the \textsc{Greedy $k$-Cover} algorithm (with $k_i = \theta + |P_i|$), where $|P_j|$ is the size of the set $P_j$ (slightly overloading notation).
Similarly, even in the general case, we can leverage the heuristic algorithm in Section~\ref{S:ftgen}, replacing $\Delta_j$ with $|A_j|$.
Finally, we observe that in the special case $\theta = 0$, there is only one feasible solution, which is $F = \cup_i P_i$.
On the other hand, this solution is always feasible (but not necessarily optimal) if $\theta \le 0$, and we can thus always guarantee privacy in such a setting for fixed-threshold attacks.

\subsubsection{Adaptive Attacks}

Recall that, even in the batch setting, since $\Delta_{ij}^{(K)}$ may be negative for some $i,j$, privacy constraints may be violated for some individuals if we flip certain queries $j$.
Since in the unauthenticated setting we are making decisions up-front, we only consider the subset of queries $j$ for which $\Delta_{ij}^{(K)} \ge 0$ for all $i$.

Unpacking the condition in Equation~\eqref{E:advprivacysimple} and rearranging terms, we obtain the following privacy condition for adaptive attacks for each individual $i \in B$:
\[
\min_{Q_{i,1} \subseteq S_1} \sum_{j \in Q_{i,1}} \left(\Delta_{ij}^{(K)}y_j + d_{ij}^{(K)}A_j\right) + \min_{Q_{i,0} \subseteq S_0} d_{ij}^{(K)}B_j \ge 0,
\]
where $d_{ij}^{(K)} = d_{ij} - \sum_{k \in \bar{B}^{(K)}} \frac{d_{kj}}{K}$.
Since the second term on the left hand side doesn't depend on $y$ (equivalently, $F$), we can pre-compute it, setting $k_i = -\min_{Q_{i,0}}d_{ij}^{(K)}B_j$.
Consequently, we obtain the condition
\[
\min_{Q_{i,1}} \sum_{j \in Q_{i,1}} \left(\Delta_{ij}^{(K)}y_j + d_{ij}^{(K)}A_j\right) \ge k_i.
\]
We now use this expression to obtain a variant of the \textsc{MI Greedy} heuristic for this setting.
The key idea behind this heuristic was to choose a query $j$ to flip that has the highest average marginal impact in each iteration (omitting individuals previously "covered" in the sense that their privacy is satisfied).
Since we only consider flipping queries with $\Delta_{ij}^{(K)} \ge 0$, this will not have a detrimental impact on any such "covered" individuals, as it can only increase their LRT statistics.
For any $i$ not yet covered,
define $\mu_{ij}$ to be the marginal impact of flipping a query $j$.
If $\Delta_{ij}^{(K)} + d_{ij}^{(K)}A_j \ge 0$, this query will be omitted as a result of the flip, and the marginal contribution is thus $\mu_{ij} = |d_{ij}^{(K)}A_j|$.
If, on the other hand, $\Delta_{ij}^{(K)} + d_{ij}^{(K)}A_j < 0$, this query will remain, but its contribution reduced by $\Delta_{ij}^{(K)}$ and the marginal contribution is therefore $\mu_{ij} = \Delta_{ij}^{(K)}$ as in the batch setting.

\section{Experiments}
\subsection{Experiment Setup}

\smallskip
\noindent{\bf Dataset }
The dataset used in this paper was originally made available by the organizers of the 2016 iDash Privacy and Security Workshop~\cite{idash16} as part of their \emph{Practical Protection of Genomic Data Sharing through Beacon Services} challenge. The goal of the challenge was for teams to develop approaches that release as many truthful responses as possible through a modified beacon before the Shringarpure-Bustamante attack \cite{shringarpure2015privacy} could be used to re-identify an individual. 
In this study, we use SNVs from Chromosome 10 for a subset of 400 individuals to construct the beacon, and another 400 individuals excluded from the beacon.
Unless otherwise specified, we set the genomic sequencing error rate to $\delta = 10^{-6}$, as in the iDash challenge.

\smallskip
\noindent{\bf Computational Environment }
Experiments were carried out on a PC with an AMD Ryzen 7 3800x processor and 64 GB DDR4 3600 MHz-CL19 RAM running Ubuntu version 18.04.5, using Python version 3.6.12.

\smallskip
\noindent{\bf Baselines }
We compare our approaches to three state of the art baselines.
The first is \emph{strategic flipping (SF)}~\cite{wan2017controlling}, which is the winning entry to the 2016 iDash Privacy Challenge, and uses a combination of greedy and local search. 
We compare two versions of this approach: first, the version as previous implemented (\emph{SF}), and second, a variant, \emph{SF-M} in the adaptive settings which uses our definition of privacy instead of setting a static threshold using a maximum allowable false positive rate.
Note that \emph{SF} and \emph{SF-M} are equivalent in the fixed-threshold setting.
The second baseline is \emph{random flipping (RF)} proposed by Raisaro et al.~\cite{raisaro2017addressing}.
RF randomly flips a subset of unique alleles in the Beacon dataset by sampling from a binomial distribution.
The third baseline we consider is \emph{differential privacy (DP)} as proposed for this setting by Cho et al.~\cite{cho2020privacy}.
These baselines are configured so that in the fixed-threshold batch setting they maximize utility within their respective parameter configuration space while guaranteeing privacy.

\subsection{The Batch Setting}
\subsubsection{Fixed-Threshold Attacks with a Small Sequencing Error}

In these experiments, we assume that genomic sequencing error $\delta$ is extremely small (we set it to $\delta=10^{-240}$ in this set of experiments), and it suffices to flip a single beacon response per individual to guarantee privacy in this setting. 
Because flipping queries when the minor allele is very frequent is likely to degrade the trust in the system, we consider the effect of a restriction on the rarity of occurrence of the alternate allele on the number of flips needed to secure privacy.

\begin{figure}[h]
\centering
\captionsetup[subfigure]{justification=centering}
	\begin{subfigure}[t]{0.5\linewidth}
	    \centering
	    \includegraphics[width=\textwidth]{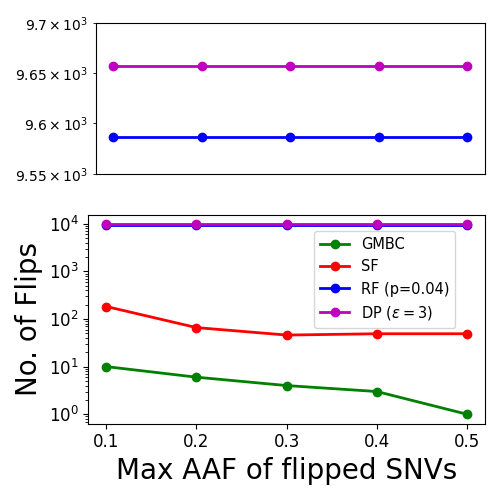}
	    \caption{Small $\delta$ setting.}
	    \label{fig:simple_a}
	\end{subfigure}%
	\hfill
	\begin{subfigure}[t]{0.5\linewidth}
	    \centering
	    \includegraphics[width=\textwidth]{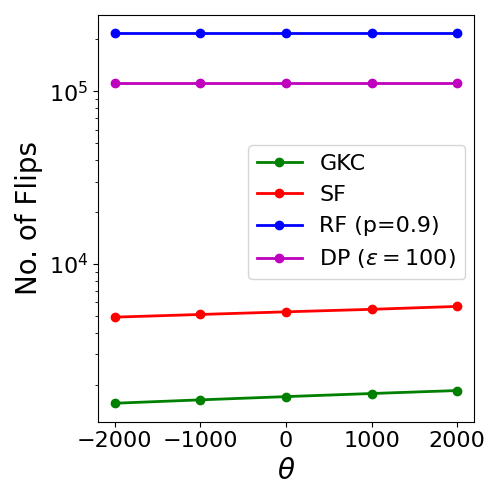}
	    \caption{AAFs drawn from a Beta distribution.}
	    \label{fig:simple_b}
	\end{subfigure}
	\caption{Number of SNVs flipped to guarantee privacy over 400 individuals in a beacon compared to baselines. 
	}
	\label{fig:simple_cases}
\end{figure}

Fig.~\ref{fig:simple_a} compares the number of flipped queries between the proposed Greedy \textsc{Min-Beacon-Cover} (GMBC) and the three baselines.
We can see that GMBC allows us to guarantee privacy with significantly (more than an order of magnitude) fewer false beacon responses compared to the baselines. The suboptimality of strategic flipping stems from not accounting for how many individuals a SNV affects, and only looking at the average over the population, a limitation that GMBC overcomes. 

\subsubsection{Fixed-Threshold Attacks with AAFs Drawn from Beta Distribution}
Next, we consider the setting with a static prediction threshold $\theta$, where the alternate allele frequencies (AAFs) are assumed to be drawn from a Beta-distribution. Recall that in this setting, it suffices to flip a minimum number of SNVs, $k_i$, per individual. Once again, we present results comparing to the three baselines, now varying the value of $\theta$. From here on, we limit ourselves to experiments where there are no restrictions on how frequently an alternate allele can be present in a population, due to the much larger compute times needed to handle multiple high-precision values arising from the AAFs for over 1.3 million SNVs. 
Henceforth, we also set $\delta = 10^{-6}$.

Fig.~\ref{fig:simple_b} presents the results comparing the proposed \textsc{GKC} approach to the baselines for $\theta \in [-2000-2000]$.
Again, we see that the proposed GKC algorithm has significantly higher utility (fewer flips) compared to the alternatives.

\subsubsection{General Case - True AAFs}
\begin{figure}[h]
  \centering
  \captionsetup[subfigure]{justification=centering}
  \begin{subfigure}[t]{0.47\linewidth}
      \centering
      \includegraphics[width=\textwidth]{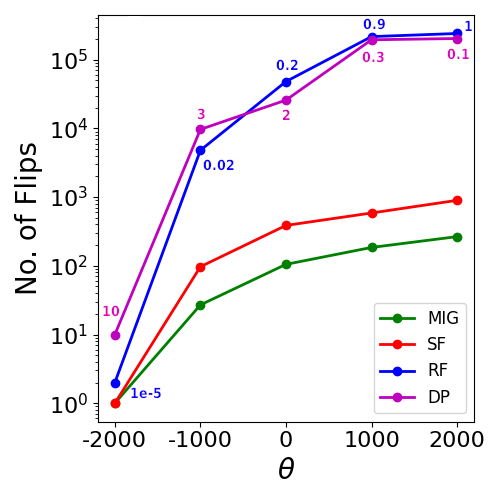}
      \caption{Utility comparison in the general AAF setting.}
      \label{fig:general_a}
  \end{subfigure}
  \begin{subfigure}[t]{0.47\linewidth}
      \centering
      \includegraphics[width=\textwidth]{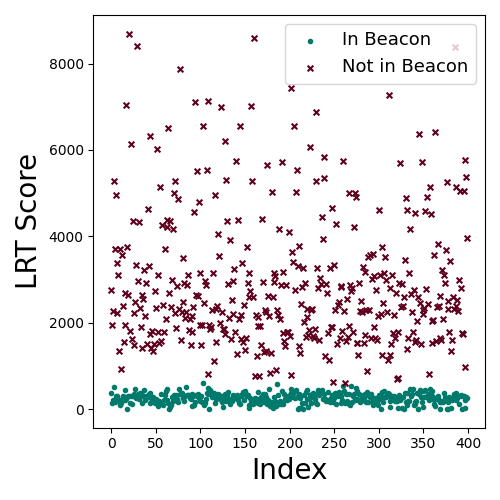}
      \caption{LRT scores of individuals in and not in the Beacon.}
      \label{fig:general_b}
  \end{subfigure}
    \caption{Performance in the fixed threshold batch setting with general AAFs. 
    }
  \label{fig:general_AAF}
\end{figure}
Next, we look at the more general case with no assumptions on AAFs, a realistic $\delta = 10^{-6}$, and an adversary who computes a static threshold $\theta$ based on some prior knowledge. 
This setting is much more representative of a real-world attack. 
Fig.~\ref{fig:general_a} compares the number of queries flipped by \textsc{Marginal-Impact Greedy (MIG)} to the baselines, over a range of prediction thresholds $\theta$. In this figure, also note that for the differential privacy and random flipping baselines, for each value of $\theta$, we present the performance with an empirically selected parameter ($\epsilon$ and $p$ respectively) which yields the highest utility while preserving privacy for all individuals, with the corresponding parameter denoted in the plot.
We can observe that \textsc{MIG} again outperforms both \textsc{RF} and \textsc{DP} by several orders of magnitude in terms of utility (all approaches preserve privacy of all individuals in the beacon dataset).
\textsc{SF} is closer to \textsc{MIG}, but still flips considerably more queries.

\subsubsection{Adaptive Attacks}
While the algorithms devised with a fixed $\theta$ in mind guarantee privacy given this assumption, adaptive attackers can defeat these approaches by taking the revised Beacon queries explicitly into account when determining the threshold.
This is illustrated in Fig.~\ref{fig:general_b}, which shows the LRT scores for 400 individuals in the beacon, and 400 others not in the beacon after flipping responses using \textsc{MI-Greedy} for $\theta=0$. 
While the LRT scores for the individuals in the Beacon do end up above 0, they also remain below the scores computed on those not in the Beacon.
Since the attacker does not in fact know who is in the Beacon, a simple clustering attack can indeed separate the two populations.
Specifically, using $1$-dimensional $k$-means clustering of the LRT scores achieves $100\%$ true positive rate at the cost of a $30\%$ false positive rate, averaged over 20 runs. Next, we evaluate the effectiveness of the proposed algorithms that aim to explicitly account for this more sophisticated attack.
\begin{figure}[h]
  \centering
  \captionsetup[subfigure]{justification=centering}
  \begin{subfigure}[t]{0.47\linewidth}
      \centering
      \includegraphics[width=\textwidth]{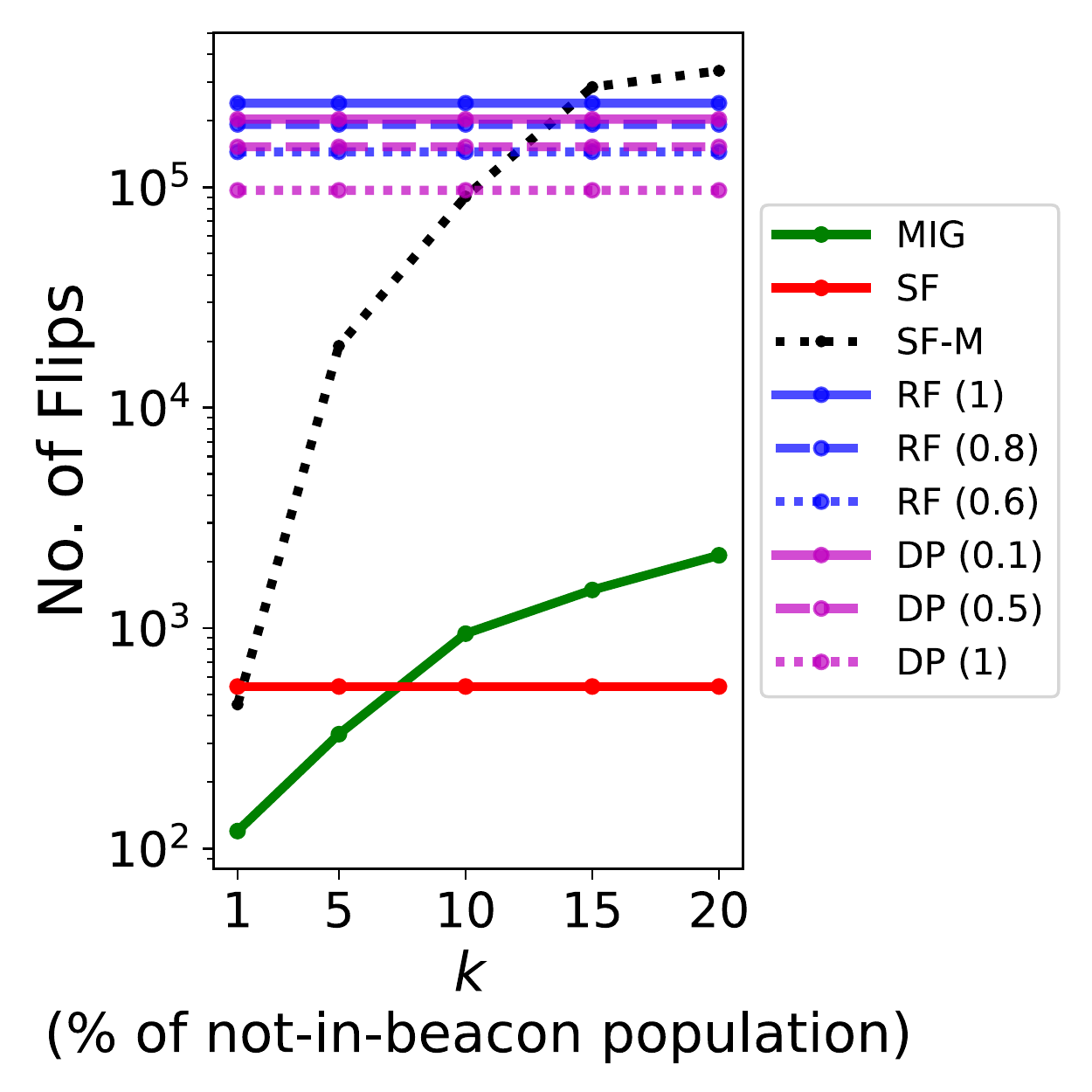}
      \caption{Utility comparison.}
      \label{fig:adaptive_a}
  \end{subfigure}
  \begin{subfigure}[t]{0.47\linewidth}
      \centering
      \includegraphics[width=\textwidth]{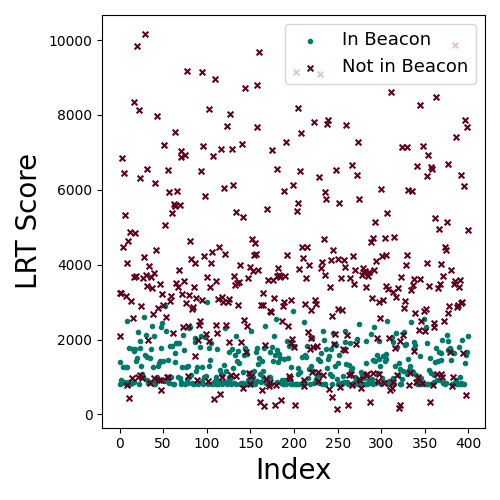}
      \caption{LRT scores after using the adaptive attack variant of \textsc{MIG}.}
      \label{fig:adaptive_b}
  \end{subfigure}
    \caption{Performance in the adaptive attack batch setting.}
  \label{fig:adaptive}
\end{figure}

Fig.~\ref{fig:adaptive_a} shows the number of flips that need to be flipped by our \emph{adaptive attack} variant of \emph{MIG}, as well as by the various baselines in this setting. The values of $\epsilon$ and $p$ used for DP and RF respectively are shown in parentheses in the plot legend. Appendix \ref{A:a6} presents a comparison of privacy achieved and the ROC curves. While MIG and SF-M preserve privacy of all individuals in the Beacon in this setting, our approach does so while flipping orders of magnitude fewer query responses. Both DP (for $\epsilon \in \{0.1, 0.5, 1\}$) and SF fail to achieve privacy for all individuals.
Fig.~\ref{fig:adaptive_b} illustrates that explicitly accounting for adaptive attacks, \textsc{MIG} yields much more mixed LRT scores between individuals in and not in the Beacon dataset. 
Quantitatively, setting $K=20$ (see Section~\ref{S:adaptive}), the false positive rate for the clustering attack increases from $30\%$ to over $50\%$ on average.

\subsection{The Online Setting}
\subsubsection{Authenticated Access} Recall that in the authenticated setting, the defender has access to each user's query history, and thus a decision about whether to flip the beacon response for a SNV can be greedily made at runtime. 
\begin{figure}[h]
  \centering
  \captionsetup[subfigure]{justification=centering}
  \begin{subfigure}[t]{0.47\linewidth}
      \centering
      \includegraphics[width=\textwidth]{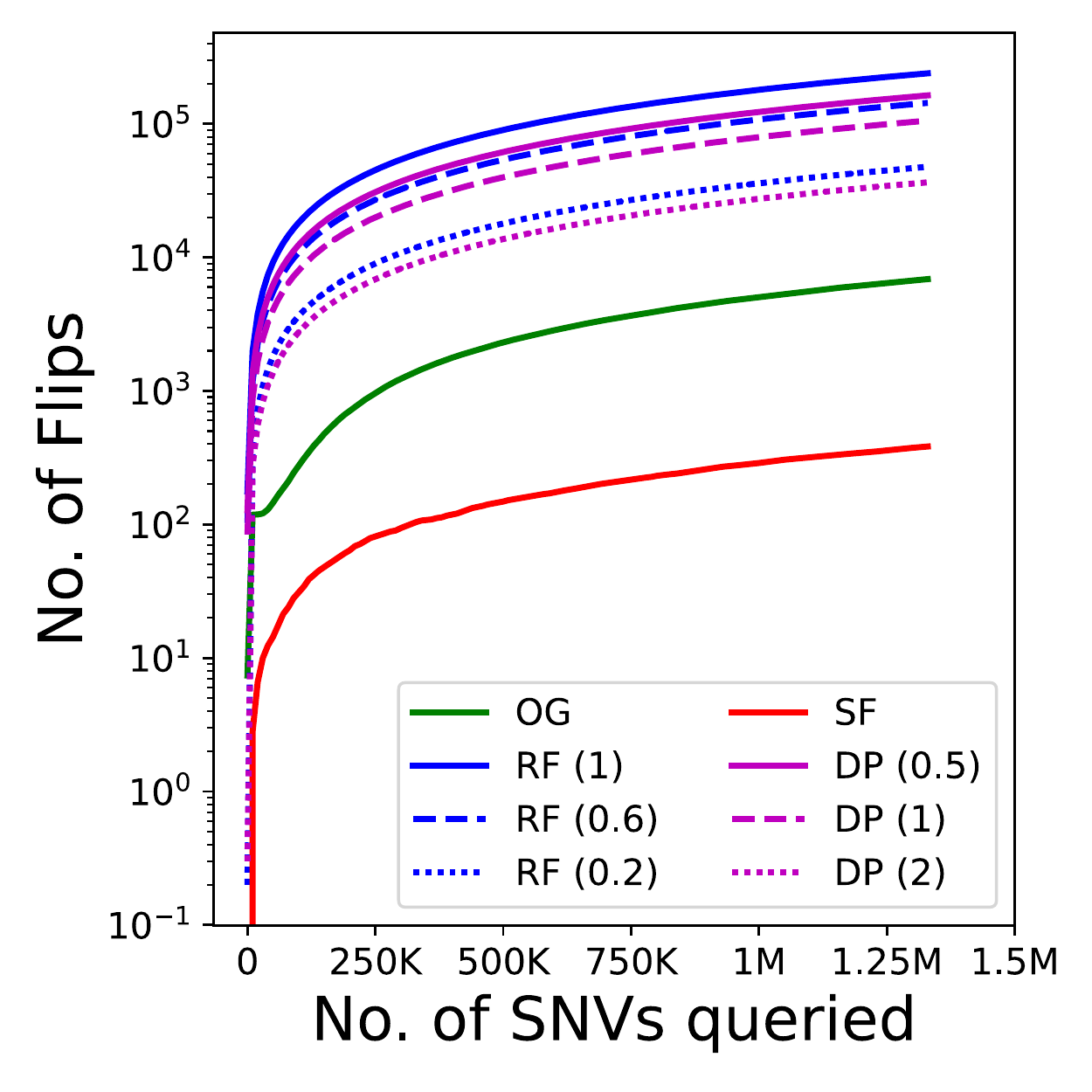}
      \caption{Utility comparison.}
      \label{fig:authOnline_fixed_a}
  \end{subfigure}
  \begin{subfigure}[t]{0.47\linewidth}
      \centering
      \includegraphics[width=\textwidth]{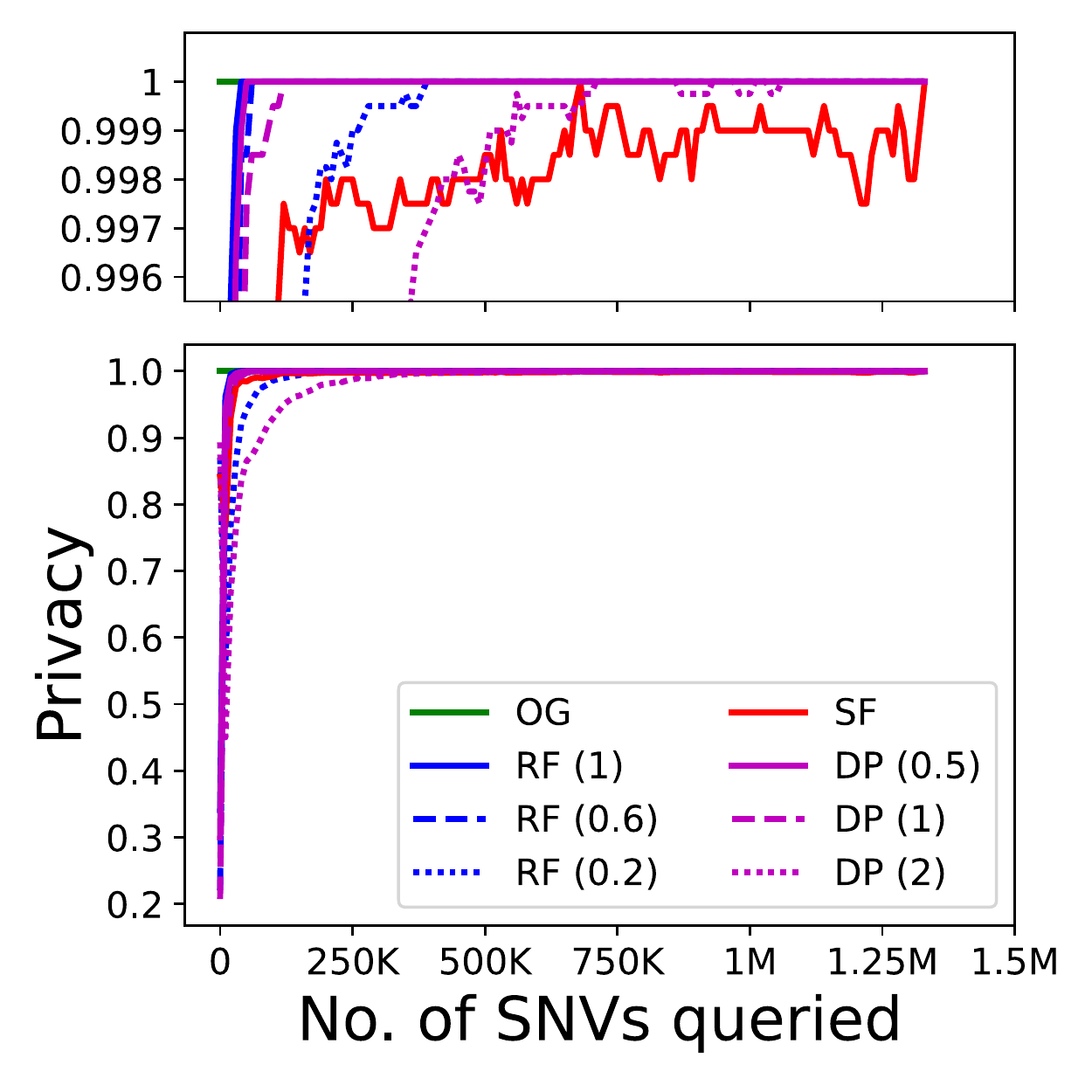}
      \caption{Privacy comparison.}
      \label{fig:authOnline_fixed_b}
  \end{subfigure}
    \caption{Performance in the authenticated online setting with fixed-threshold attacks; $\theta = 0$.}
  \label{fig:authOnline_fixed}
\end{figure}
Unlike previous settings where all methods were able to achieve perfect privacy for all individuals in the beacon, this will no longer always be the case in the online setting. Consequently, we also compare our methods with the baselines in terms of privacy, defined as the fraction of the individuals in the Beacon whose privacy is not violated.

\begin{figure}[h]
  \centering
  \captionsetup[subfigure]{justification=centering}
  \begin{subfigure}[t]{0.47\linewidth}
      \centering
      \includegraphics[width=\textwidth]{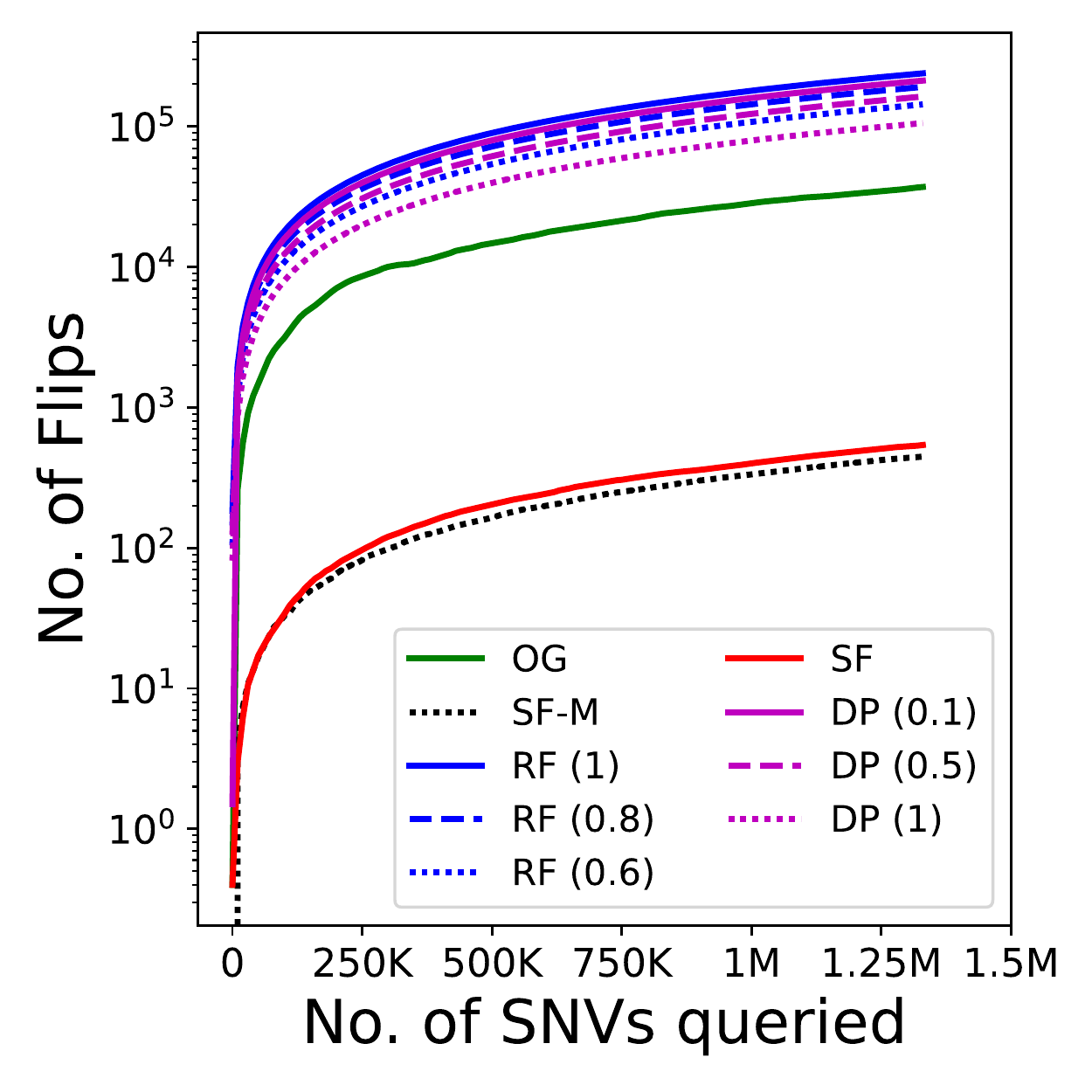}
      \caption{Utility ($K=1$).}
      \label{fig:authOnline_ada_a}
  \end{subfigure}
  \begin{subfigure}[t]{0.47\linewidth}
      \centering
      \includegraphics[width=\textwidth]{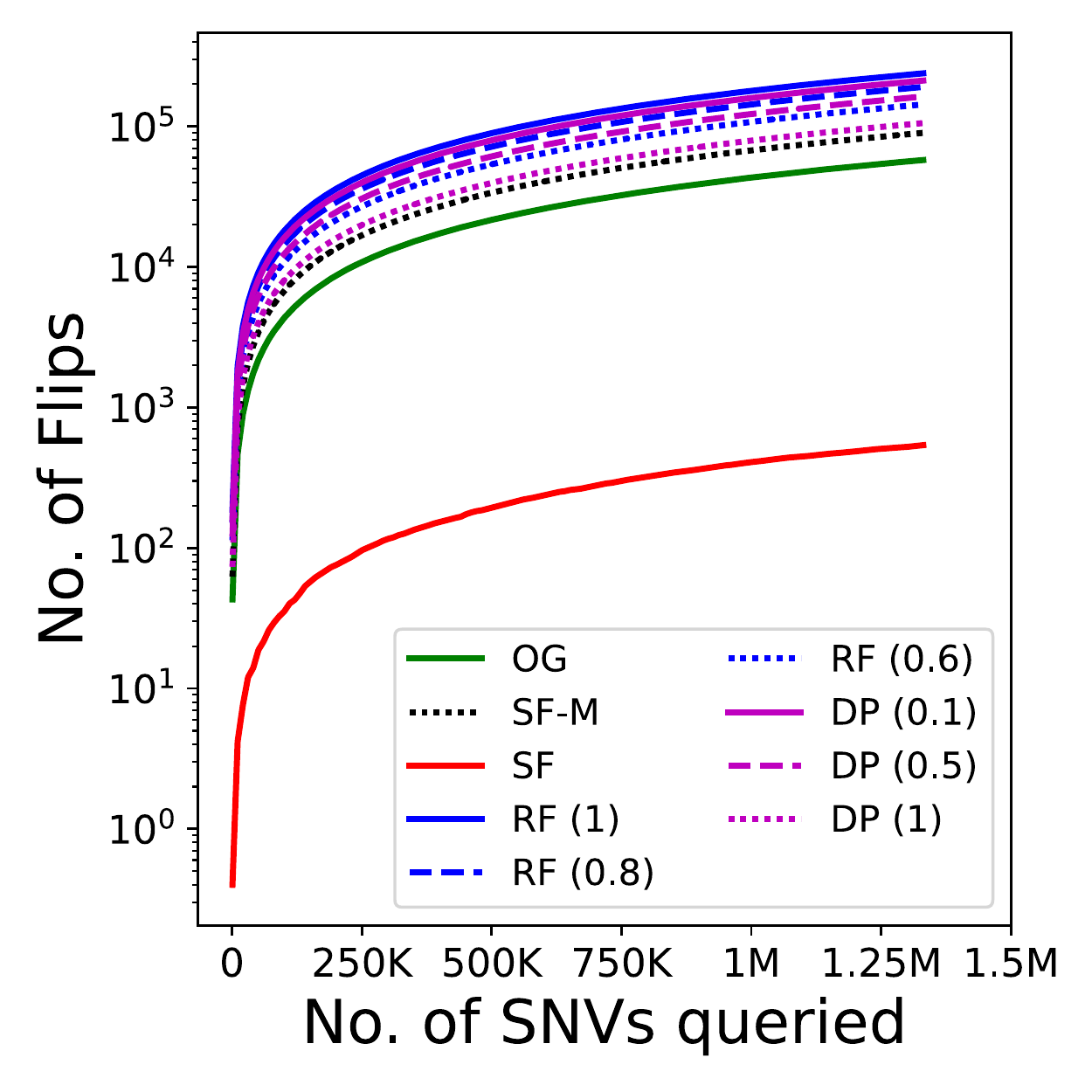}
      \caption{Utility ($K=10$).}
      \label{fig:authOnline_ada_b}
  \end{subfigure}
  \newline
  \begin{subfigure}[t]{0.47\linewidth}
      \centering
      \includegraphics[width=\textwidth]{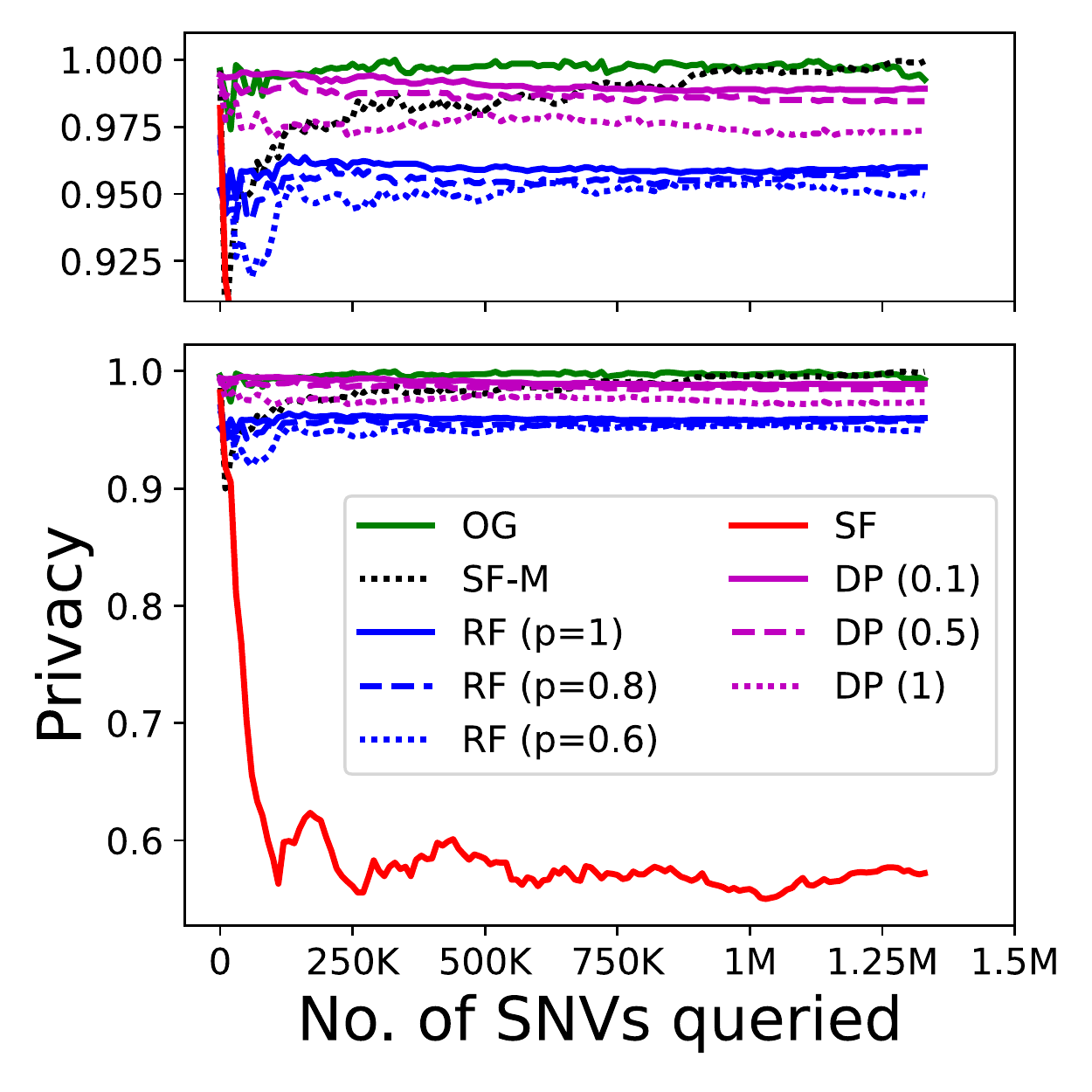}
      \caption{Privacy ($K=1$).}
      \label{fig:authOnline_ada_c}
  \end{subfigure}
  \begin{subfigure}[t]{0.47\linewidth}
      \centering
      \includegraphics[width=\textwidth]{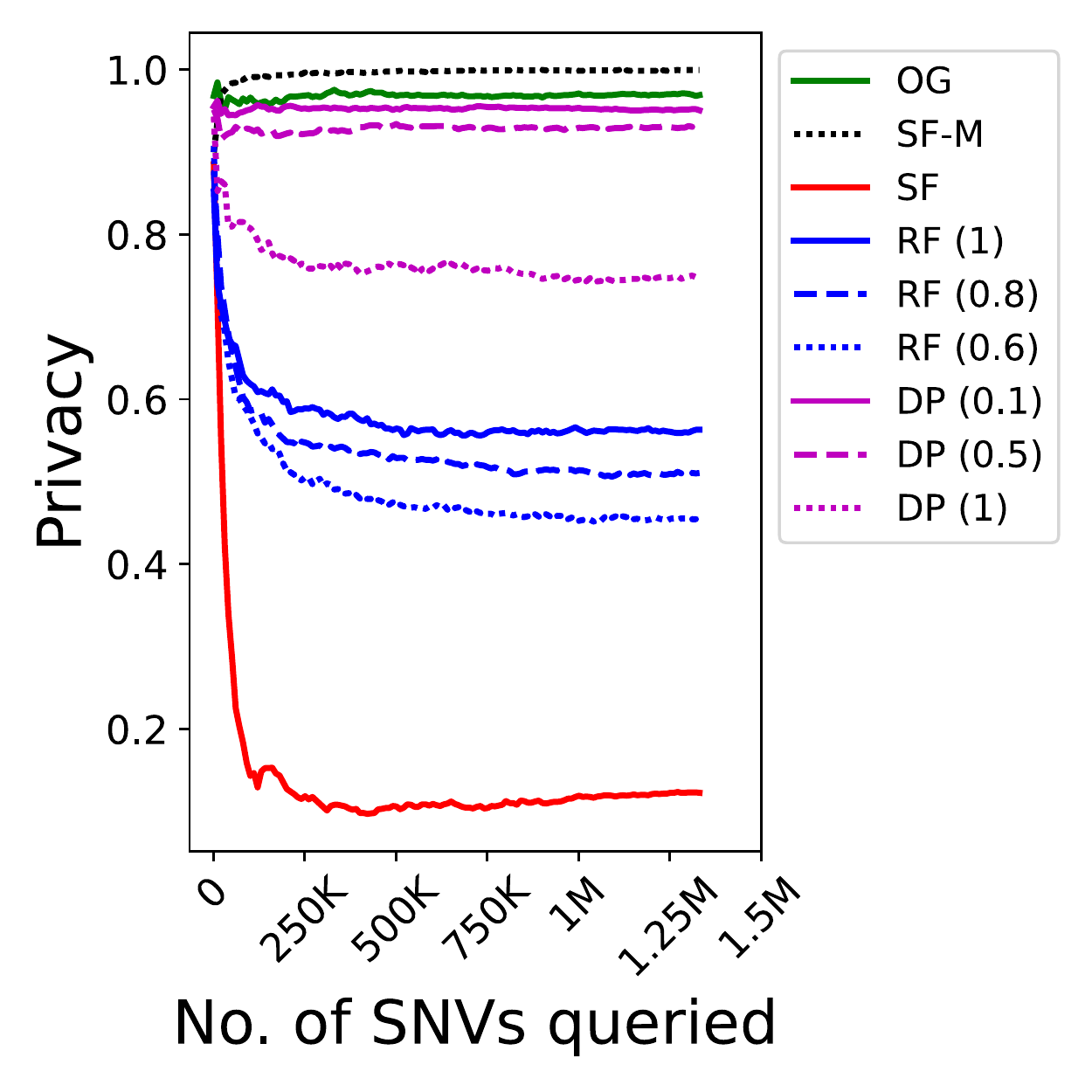}
      \caption{Privacy ($K=10$).}
      \label{fig:authOnline_ada_d}
  \end{subfigure}
    \caption{Performance comparison in the authenticated online setting with adaptive attacks.}
  \label{fig:authOnline_ada}
\end{figure}
First, we consider fixed-threshold attacks.
Fig.~\ref{fig:authOnline_fixed} compares \textsc{Online-Greedy} (OG) with the baselines when $\theta=0$ (the worst case threshold in the online setting).
Recall that OG provably achieves privacy in such settings, but does flip more queries than \textsc{SF} (but far fewer than other baselines).
In contrast, \textsc{SF} does compromise privacy of a considerable fraction of individuals in the Beacon (as do other baselines) when few SNVs have been queried.

In Fig.~\ref{fig:authOnline_ada}, we consider adaptive attacks. When $K=1$, \textsc{OG} tends to have better privacy, but considerably lower utility than \textsc{SF} and \textsc{SF-M} (and dominates the other two baselines in both).
For $K=10$, however, \textsc{OG} has better utility that all baselines but \textsc{SF}, but slightly lower privacy than \textsc{SF-M} (and better than others).
While \textsc{SF} achieves the highest utility, it has extremely poor privacy.
The corresponding ROC curves are provided in Appendix~\ref{A:a7}.

\subsubsection{Unauthenticated Access} 
Finally, we compare the proposed approaches to baselines in the unauthenticated online setting.
Once again, we begin with fixed-threshold attacks.
As shown in Fig.~\ref{fig:unauthOnline_fixed}, \textsc{OMIG} (our algorithm variant for this setting) flips more queries than \textsc{SF}, but does guarantee privacy, whereas \textsc{SF} compromises the privacy of a subset of individuals, particularly as more SNVs can be queried.
The other two baselines also achieve privacy, but at a considerable loss in utility compared to \textsc{OMIG}.

\begin{figure}[h]
  \centering
  \captionsetup[subfigure]{justification=centering}
  \begin{subfigure}[t]{0.47\linewidth}
      \centering
      \includegraphics[width=\textwidth]{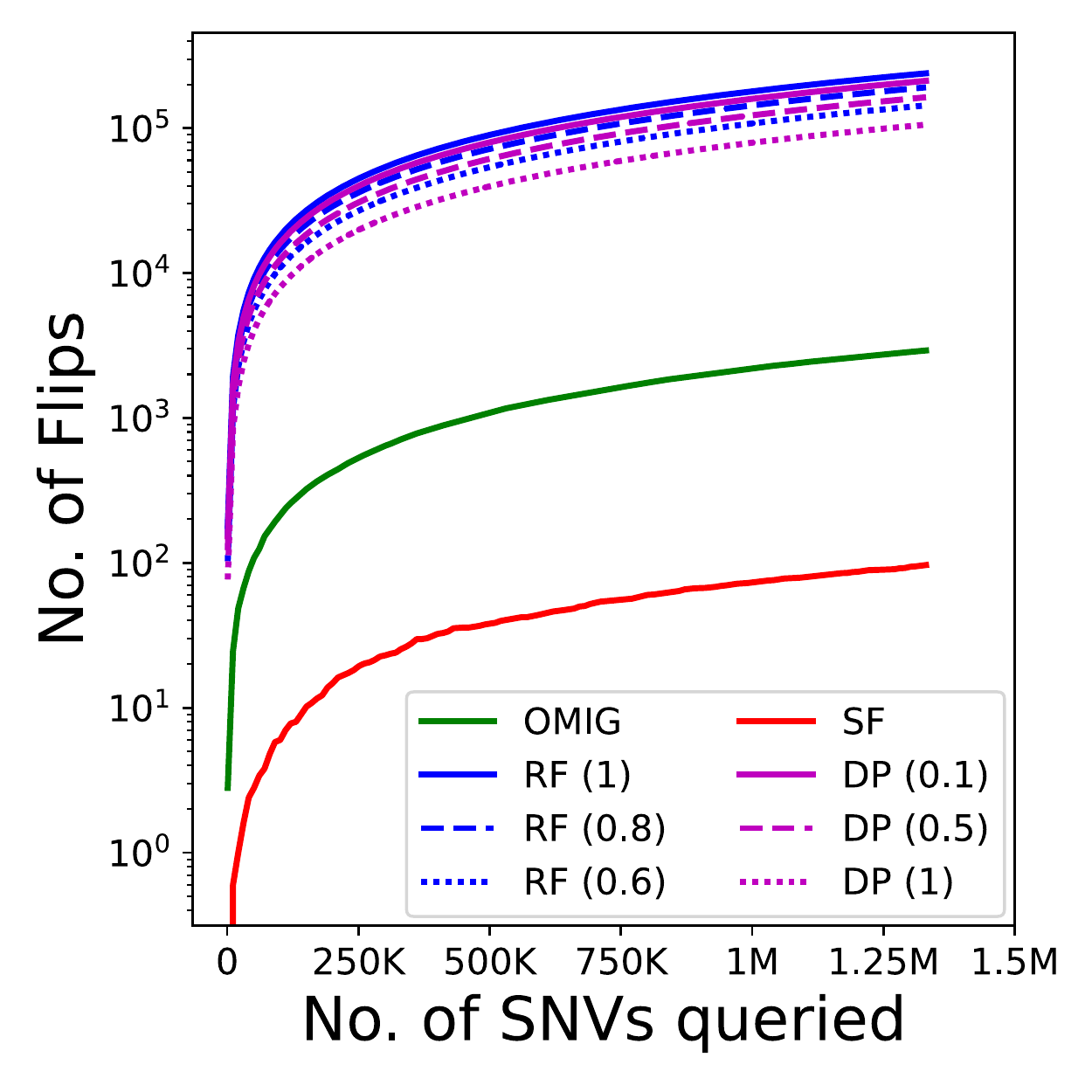}
      \caption{Utility comparison.}
      \label{fig:unauthOnline_fixed_a}
  \end{subfigure}
  \begin{subfigure}[t]{0.47\linewidth}
      \centering
      \includegraphics[width=\textwidth]{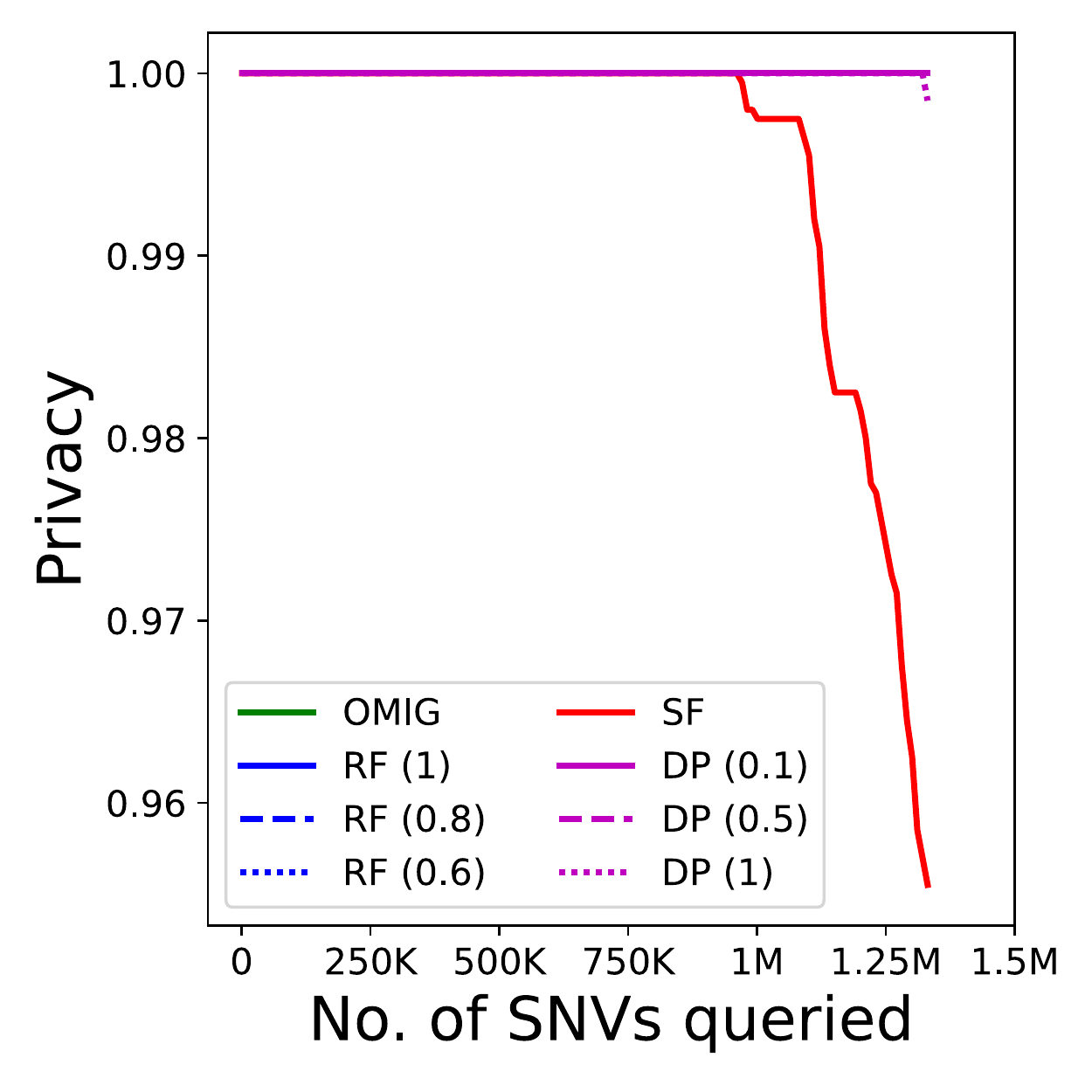}
      \caption{Privacy comparison.}
      \label{fig:unauthOnline_fixed_b}
  \end{subfigure}
  \caption{Performance in the online unauthenticated fixed-threshold attack setting; $\theta = -1000$.}
  \label{fig:unauthOnline_fixed}
 \end{figure}
  \begin{figure}[h]
  \centering
  \captionsetup[subfigure]{justification=centering}
  \begin{subfigure}[t]{0.47\linewidth}
      \centering
      \includegraphics[width=\textwidth]{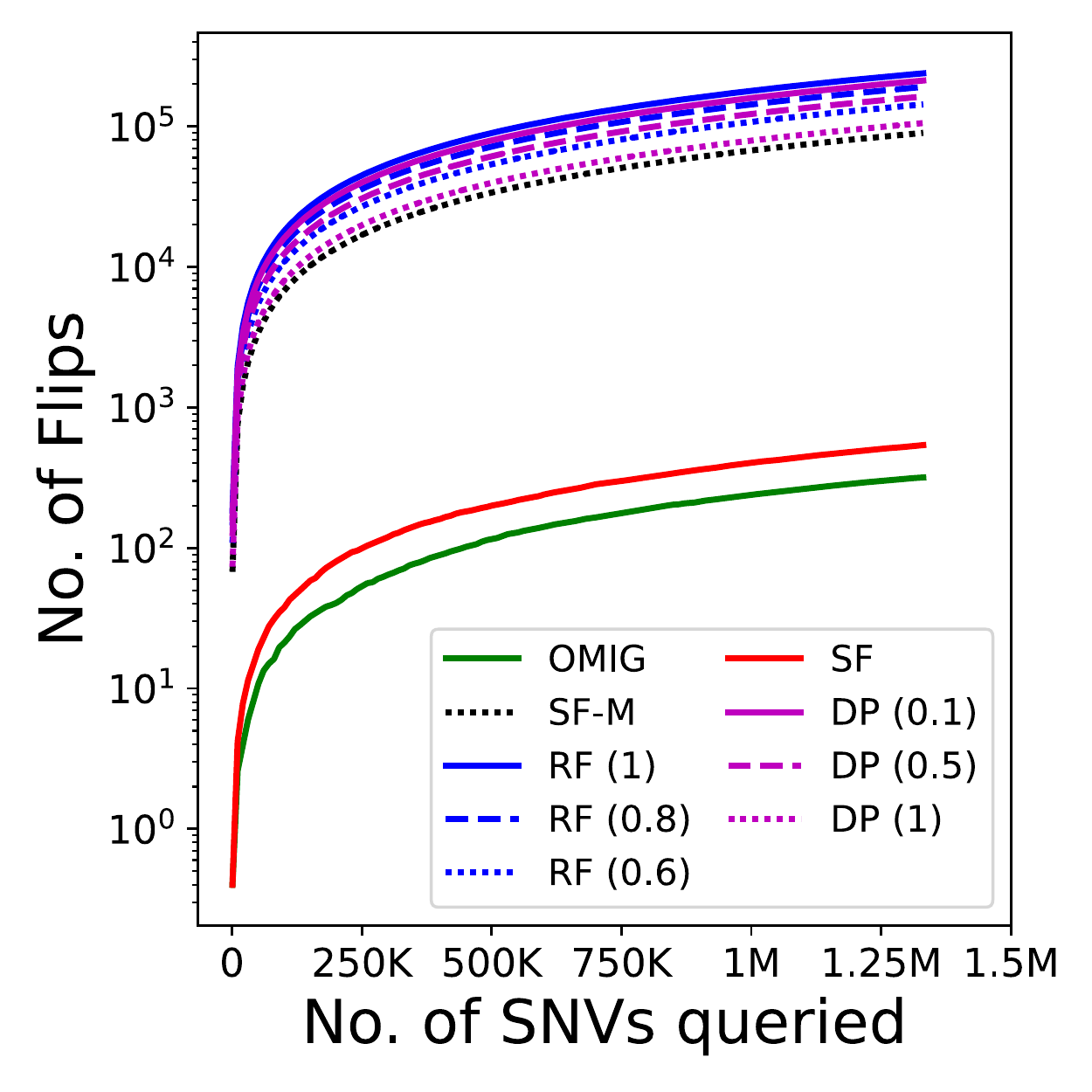}
      \caption{Utility comparison.}
      \label{fig:unauthOnline_ada_a}
  \end{subfigure}
  \begin{subfigure}[t]{0.47\linewidth}
      \centering
      \includegraphics[width=\textwidth]{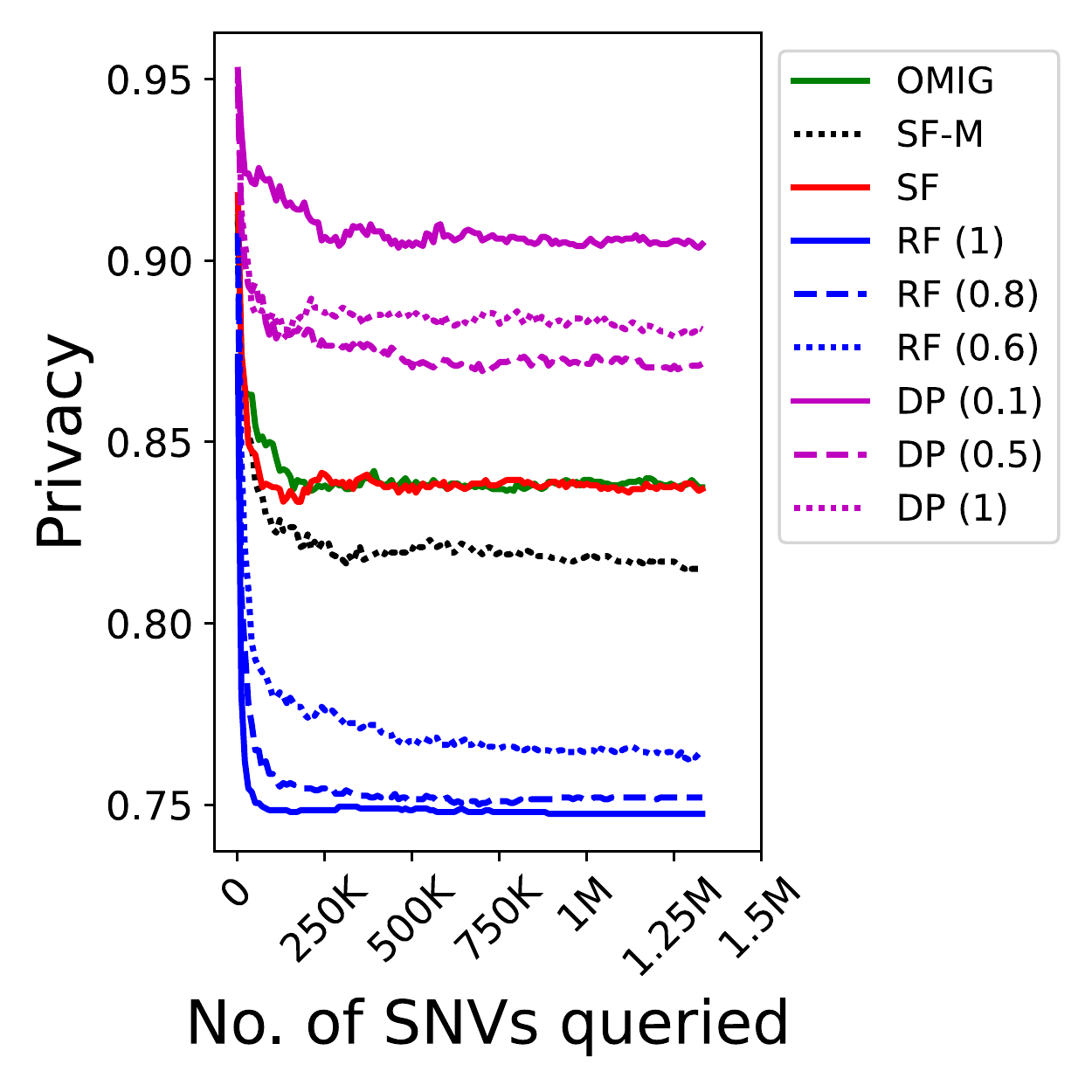}
      \caption{Privacy comparison.}
      \label{fig:unauthOnline_ada_b}
  \end{subfigure}
  \caption{Performance in the online unauthenticated adaptive attack setting; $K=10$.}
  \label{fig:unauthOnline_ada}
 \end{figure}
 
 Fig.~\ref{fig:unauthOnline_ada} presents a similar comparison for adaptive attacks.
 In this setting, all methods including \textsc{OMIG} now compromise privacy, with \textsc{DP} doing so the least.
 However, \textsc{OMIG} is now again orders of magnitude better than most of the baselines in terms of utility (with \textsc{SF-M} now performing relatively poorly in terms of both utility and privacy). While \textsc{SF} performs similarly to OMIG in terms of privacy, it has slightly lower utility.ROC curves are provided in Appendix~\ref{A:a8}.
 
\section{Related Work}

\noindent{\bf Privacy Violation of Shared Genomic Data: }
Genomic data sharing picked up pace with programs such as the Database of Genotypes and Phenotypes (dbGaP) by the NIH as a centralized repository of large-scale genomic data for researchers which made summary statistics about genomic data publicly available. However, shortly thereafter, it was shown by Homer et al. \cite{homer2008resolving} that it was indeed possible to resolve individuals in a DNA mixture - even if the individual's contribution to the mixture is as little as 1\% by comparing allele frequencies obtained from probe-intensities, to the allele frequencies of a reference population such as from the International HapMap Project \cite{gibbs2003international}. Various types of attacks on genomic data are addressed in \cite{erlich2014routes}, spanning identity tracing, attribute-disclosure and completion attacks. Since the Shringarpure-Bustamante attack in \cite{shringarpure2015privacy}, further attacks on beacon-services \cite{raisaro2017addressing, von2019re, ayoz2020genome, ayoz2020effect} have been proposed, including genomic-reconstruction and addressing the effects of kinship.

\noindent{\bf Protecting the Privacy of Genomic Data: }
In response, several defenses were proposed against attacks on genomic data - usually relying on injecting noise, or by suppressing a subset of the data. The efficacy of masking a subset of shared genomic data was studied in \cite{sankararaman2009genomic}. Preserving privacy was studied through a game-theoretic lens in \cite{wan2017expanding}. Several other genomic privacy-preserving methods are summarized in \cite{bonomi2020privacy}. Specific to beacon-services, \cite{raisaro2017addressing} present a random-flipping heuristic which perturbs unique alleles in the database using a binomial distribution, as well as a query-budget approach for authenticated users. A differential-privacy approach was adopted in \cite{cho2020privacy}, which forms one of the baselines in this paper. The winning entry \cite{wan2017controlling} to the 2016 iDash Privacy and Security Workshop challenge defines a differential discriminative power to select SNVs from the dataset for which responses are flipped.
\section{Discussion and Limitations}

We presented a novel framework for privacy-preserving design of Beacons in the context of membership inference attacks leveraging a likelihood-ratio-test statistic.
Our framework precisely dissects the many ways in which the Beacon service can be configured and used, such as allowing queries as a batch or in a sequence, and allowing authenticated access to individuals whose identities can be verified, or simply opening the service to the public.
We also consider two distinct threat models, one of which has been explicitly studied in prior literature, while the second involves a stronger adaptive attack and has not been formally defined or analyzed in prior work.
We present algorithms that exhibit privacy guarantees for some of these instantiations of our model, and in one special case, a provable approximation of optimal utility (while guaranteeing privacy).
All our algorithms run in polynomial time, and are highly scalable.
Moreover, the proposed algorithms typically outperform prior art in either privacy, utility, or both.

Our approach, however, has several limitations.
First, our privacy model is specific to the Beacon service, and the LRT-based attack; it is possible that other attacks can be devised that can defeat our approach.
Second, flipping query responses, while common in prior art, is not always a viable means to protect the Beacon service (for example, it may degrade public trust in the service). An alternative framework of \emph{masking} a subset of SNV queries may offer another practical solution without this limitation, but may in turn result in an even greater degradation of the utility of the Beacon service.
\bibliographystyle{acm}
\bibliography{sample-base}
\appendix
\section*{Appendix}

\section{Proof of Proposition~\ref{lemma_flip}}
\label{A:a1}

  Consider the $j^{th}$ query. If individual $i$ does not have an alternate allele at position $j$, flipping the beacon response makes no difference (refer Eq.~\ref{eq:LRT}; $d_{ij}=0$ when individual does not have alternate allele at position $j$). When the individual does have an alternate allele at position $j$ (i.e. $d_{ij}=1$), changing the beacon response $x_j$ from $0$ to $1$ changes the contribution of query $j$ to the LRT score from $\log \frac{D^j_n}{\delta D^j_{n-1}}$ to $\log \frac{1-D^j_n}{1-\delta D^j_{n-1}}$. Given sampling error $\delta < \frac{D^j_n}{D^j_{n-1}}$, dividing on both sides by $\delta$, we have
  \(
  \frac{D^j_n}{\delta D^j_{n-1}} > 1~ \textrm{(as $\delta\ge 0$)},\) and, consequently,
  \(
  \log \frac{D^j_n}{\delta D^j_{n-1}} > 0. 
  \)
  Since 
  \(
  \delta < \frac{D^j_n}{D^j_{n-1}},
  \)
  multiplying both sides by $D^j_{n-1}$ yields
  \(D^j_n > \delta D^j_{n-1}\) (since $D^j_{n-1} \in [0,1]$), which implies that
  \(1-D^j_n < 1-\delta D^j_{n-1}\).
  Dividing both sides by $1-\delta D^j_{n-1}$, we have
  \(\frac{1-D^j_n}{1-\delta D^j_{n-1}} < 1\), 
  since $\delta \in [0,1]$ and $D^j_{n-1}\in [0,1]$, which in turn implies that 
  \(\log\frac{1-D^j_n}{1-\delta D^j_{n-1}} < 0\).

\section{Proof of Theorem~\ref{T:hard}}
\label{A:a2}
First, note that \textsc{Beacon-Privacy-D} is in \NP, since given a set $F$ of flips, it is straightforward to verify that the privacy constraint holds for each individual $i$.

To prove that the problem is \NP-hard, we reduce from the \textsc{Set Cover} problem.
First, observe that in the case where $\delta = 0$, and by Proposition~\ref{lemma_flip}, to guarantee privacy of any individual $i \in B$, it suffices to flip a single response $x_j$ from 1 to 0 from all with $d_{ij} = 1$ (if we use the convention that division by $0$ results in $\infty$, any such flip causes $L_i(Q,x) = \infty$).

Now, let elements of $U$ correspond to individuals in the Beacon, i.e., $B = U$.
Let subsets $R_j$ correspond to queries $j$, where each element represents an individual $i$ with $d_{ij} = 1$.
Since without loss of generality we can assume that each $R_j$ is non-empty (since we can ignore any empty subsets in both \textsc{Set Cover}, and in the construction of \textsc{Beacon-Privacy-D} instance by Proposition~\ref{lemma_flip}), this also implies that the corresponding query response is $x_j = 1$, as at least one individual has $d_{ij} = 1$.
For any individual (element of $U$) $i \notin R_j$, we set $d_{ij} = 0$.
Furthermore, since $\cup_j R_j = U$, each individual has at least one $j$ with $d_{ij} = 1$.
Finally, the constant $k$ is now the constraint on the size of $F$, the subsets of queries to flip.

Suppose that we find the set $F$ of queries to flip that guarantees privacy.
Let $T = F$, that is, indices of subsets $R_j$ in \textsc{Set Cover}.
Since $|F| \le k$, $|T| \le k$, so it suffices to show that $U = \cup_{t \in T} t$.
Solution to \textsc{Beacon-Privacy-D} means that flips $F$ guarantee privacy of each $i \in B = U$.
By the observation above that it suffices to flip any query $j$ with $d_{ij} = 1$ to guarantee the privacy of $i$, $R_j$ is the subset of individuals for whom privacy is guaranteed, and, thus, $\cup_{j \in F} R_j = U$, since we must guarantee privacy of all individuals.
Since $T = F$, we have covered the universe $U$.

For the other direction, suppose that there exists a solution to \textsc{Set Cover}, $T$ with $|T| \le k$ and $U = \cup_{t \in T} t$.
Set $F = T$ and flip all queries with $j \in F$.
Since it suffices to guarantee privacy of any $i \in B$ by flipping any query $j$ with $d_{ij} = 1$, and since $R_j$ is the collection of all individuals for whom we can guarantee privacy by flipping $j$, and since $\cup_{t \in T} t = U$, by our construction this implies that privacy is guaranteed for all $i \in B$.
\section{Proof of Theorem~\ref{T:bc}}
\label{A:a3}
Recall that for a fixed-threshold attack, Beacon privacy guarantee for a given $F$ and associated indicator vector $y$, formalized in Equation~\eqref{E:privacy}, is that
\[\forall i \in B, \quad L_i(Q,x,y) = \sum_{j \in Q_1} \Delta_{ij}y_j + \eta_i \ge \theta.\]
Let $\Delta_i = \min_{j \in P_i} \Delta_{ij}$.
Then if for each $i$,
\(\Delta_{i} + \eta \ge \theta,\)
the condition above certainly follows as well, since $\sum_{j \in Q_1} \Delta_{ij}y_j \ge \Delta_i$ by definition of a \emph{Beacon-Cover}, and 
\begin{gather*}
\eta=\min_i \left(\sum_{j \in Q_1} d_{ij} \log(1-D_n^j) + \sum_{j \in Q_0} d_{ij}  \log\frac{D_n^j}{0.25 D_{n-1}^j} \right)\\
\le \min_i \left(\sum_{j \in Q_1} d_{ij} \log\frac{1-D_n^j}{1-\delta D_{n-1}^j} + \sum_{j \in Q_0} d_{ij}\log\frac{D_n^j}{\delta D_{n-1}^j} \right)\\
\le \eta_i.
\end{gather*}
Now, $\Delta_i = \min_{j \in P_i} d_{ij}(B_j - A_j)$, and since $\Delta_{ij} > 0$ for any $j \in P_i$, $d_{ij} = 1$ for any $j \in P_i$.
Consequently,
\begin{gather*}
\Delta_i = \min_{j \in P_i} (B_j - A_j) = \min_{j \in P_i} \left( \log\frac{D_n^j}{\delta D_{n-1}^j} - \log\frac{1-D_n^j}{1-\delta D_{n-1}^j}\right)\\
= \min_{j \in P_i} \left(\log D_n^j - \log \delta D_{n-1}^j - \log(1-D_n^j) \right.\\\left. + \log(1-\delta D_{n-1}^j)\right)\\
= \min_{j \in P_i} \left(\log \frac{D_n^j}{1-D_n^j} + \log\frac{1-\delta D_{n-1}^j}{\delta D_{n-1}^j }\right)\\\ge \min_{j \in P_i} \left(\log \frac{D_n^j}{1-D_n^j}\right) + \min_{j \in P_i}  \left(\log\frac{1-\delta D_{n-1}^j}{\delta D_{n-1}^j }\right)\\
\ge D_n + \min_{j \in P_i}  \left(\log\left(\frac{1}{\delta D_{n-1}^j } - 1\right)\right) \ge D_n + \log\left(\frac{1}{\delta} - 1\right),
\end{gather*}
where the last inequality follows since $D_{n-1}^j \le 1$.
Now, if $\delta \le \frac{1}{1+e^{\theta - \eta - D_n}}$,
then
\begin{align*}
\log\left(\frac{1}{\delta} - 1\right) \ge \log\left(e^{\theta - \eta - D_n}\right)
=\theta - \eta - D_n.
\end{align*}
Consequently, $\Delta_i \ge \theta - \eta$ for each $i \in B$, which is just a rearranging of the desired condition above.

\section{Algorithm - Greedy Min Beacon Cover}
\label{A:alg1}
\begin{algorithm}[h]
\SetAlgoLined
\KwInput{A set $B$ of individuals in the Beacon, a subset $P_i$ for each individual, and a collection of queries $Q$.}
\KwOutput{Subsets of queries $F \subseteq S$ to flip.}

\KwInit{$F = \emptyset$, $C = \emptyset$.}

  \While{$(B \setminus C) \ne \emptyset$} {
    Set $l = 1, T = \emptyset, N = -1$.\\
    \For{$j \in (Q \setminus F)$}{
      Set $T_j = \{i \in (B \setminus C) | j \in P_i\}$.\\
      \If{$|T_j| > N$}{
      Set $T = T_j$.\\
      Set $N = |T|$.\\
      Set $l = j$.
      }
      Set $F = F \cup l$.\\
      Set $C = C \cup T$.
    }
 }
 \caption{Greedy \textsc{Min Beacon Cover} Algorithm.}
 \label{algo:cover}
\end{algorithm}

\section{Proof of Proposition~\ref{P:onlinegreedy}}
\label{A:a4}
We prove this by induction.
For base case, note that privacy is guaranteed at $t=0$ since $L_i(\emptyset,x,y) = 0 \ge \theta$ for all $i$ and $\theta \le 0$.
Next, suppose that $L_i(Q_{t-1},x,y_{t-1}) \ge \theta$.
If $L_i(Q_{t},x,y_{t-1}) \ge \theta$, that is, we need not flip the response to the current query $q_t$, privacy is not violated at time $t$.
Suppose that $L_i(Q_{t},x,y_{t-1}) < \theta$, which means that we flip the response to query $q_t$ in the \textsc{Online Greedy Algorithm}.
Let $y_{j,t} = y_{j,t-1}$ for all $j \ne q_t$ and $y_{j,t} = 1$ for $j=q_t$.
Then
\begin{align*}
L_i(Q_t,&x,y_t) = \sum_{j \in Q_{t,1}} \Delta_{ij} y_{j,t} + \eta_i(Q_{t})\\
&=\sum_{j \in Q_{t-1,1}} \Delta_{ij} y_{j,t-1} + \eta_i(Q_{t-1}) + \eta_i(q_t) + \Delta_{i,q_t}\\
&=L_i(Q_{t-1},x,y_{t-1}) + \eta_i(q_t) + \Delta_{i,q_t}.
\end{align*}
Now, $\eta_i(q_t) = d_{i,q_t}(x_{q_t}A_{q_t} + (1-x_{q_t})B_{q_t})$, while $\Delta_{i,q_t} = d_{i,q_t}(B_{q_t} - A_{q_t})$.
Moreover, recall that if $\delta < 0.25$, $B_j > 0$ and $A_j < 0$ for all queries $j$.
Since $L_i(Q_{t},x,y_{t-1}) < \theta$, it must be that $d_{i,q_t} = 1$, since otherwise $\eta_i(q_t)=0$, and $x_{q_t} = 1$, since otherwise $\eta_i(q_t)>0$.
Thus, $\eta_i(q_t) = A_{q_t}$ and $\Delta_{i,q_t} = B_{q_t} - A_{q_t}$.
Consequently, $\eta_i(q_t) + \Delta_{i,q_t} = A_{q_t} + B_{q_t} - A_{q_t} = B_{q_t} > 0 \ge \theta$.
Since this holds for every individual $i$ and time $t$, privacy against threshold attacks is guaranteed for all individuals and query sequences.

\section{Proof of Proposition~\ref{P:advftatt}}
\label{A:a5}
Fix $i \in B$.
We begin by unpacking the LRT score resulting from a flipping strategy $y$ in Equation~\eqref{E:advprivacysimple} (since $\theta$ is fixed, that is the only thing affected by the choices of queries):
\begin{gather*}
\min_{Q_i \subseteq S} L_i(Q_i,x,y) = \sum_{j \in Q_{i,1}} \Delta_{ij} y_j + \sum_{j \in Q_{i,1}} d_{ij} A_j  \\ + \sum_{j \in Q_{i,0}} d_{ij} B_j.
\end{gather*}
First, observe that since $B_j > 0$ by our assumption that $\delta < 0.25$ and from (the proof of the) Proposition~\ref{lemma_flip}, query set $S_0$ (i.e., those with $x_j = 0$) would not be included, since they can only increase the LRT statistic.
Similarly, none of the queries with $d_{ij} = 0$ will be included since these do not contribute to the LRT statistic.
Consequently, $Q_i \subseteq P_i(S)$.
Moreover, since $B_j > A_j$ under the same assumptions, none of the terms with $y_j = 1$ are included.
Consequently, $Q_i \subseteq P_i(S) \setminus F$.
Moreover, since $A_j < 0$, \emph{all} queries in $P_i(S)\setminus F$ will be included.
\section{Adaptive Threshold, Batch Setting}
Fig.~\ref{fig:adaptive_c} presents a comparison of privacy achieved by \textsc{MIG} and the various baselines, Fig.~\ref{fig:adaptive_d} and~\ref{fig:adaptive_e} present ROC curves for the adaptive attack in the batch setting. Note that in this case, a lower area under the curve is better, as the ROC curve corresponds to attack success. 

Our approach, \textsc{MIG} outperforms all baselines, except \textsc{SF-M} which also achieves perfect privacy in the batch setting, although with significantly lower utility than \textsc{MIG}, as can be observed in Fig.~\ref{fig:adaptive_a}. Also note that the maximum false positive rate up until which the area under the ROC curve (AUC) remains zero for \textsc{MIG} and \textsc{SF-M} corresponds to the percentage of the total population for which the solution is computed ($20\%$ of individuals not in the beacon corresponds to $10\%$ of the total population), beyond which the AUC is non-zero for all approaches (the plot line corresponding to \textsc{SF-M} has a slight non-zero slope between FPR=$0.1$ and FPR=$0.2$ in Fig.~\ref{fig:adaptive_d}).
\label{A:a6}
\begin{figure}[h]
  \centering
  \captionsetup[subfigure]{justification=centering}
  \hspace*{3em}
  \begin{subfigure}[t]{0.47\linewidth}
      \includegraphics[width=\textwidth]{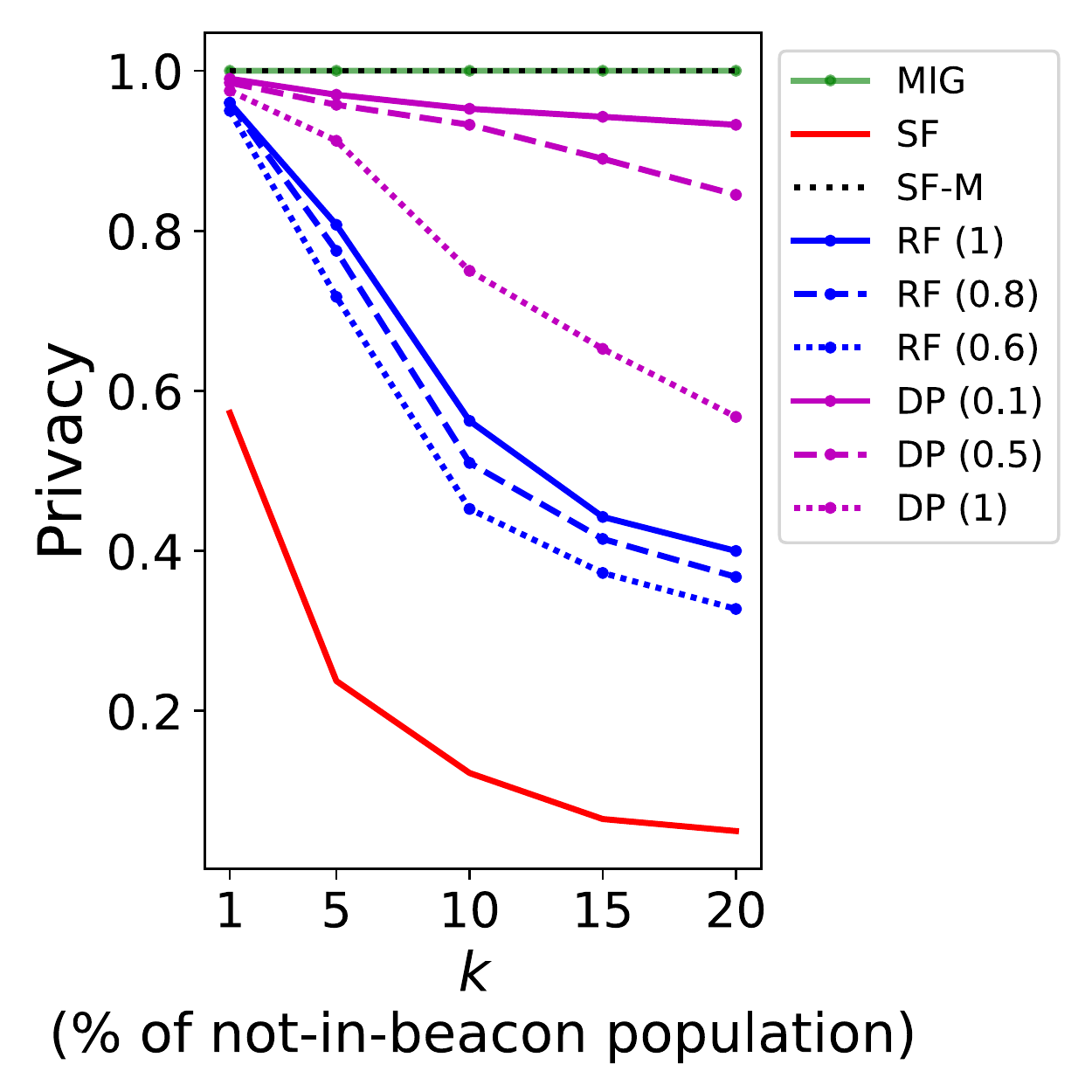}
      \caption{Privacy comparison}
      \label{fig:adaptive_c}
  \end{subfigure}
  \newline
  \begin{subfigure}[t]{0.47\linewidth}
      \centering
      \includegraphics[width=\textwidth]{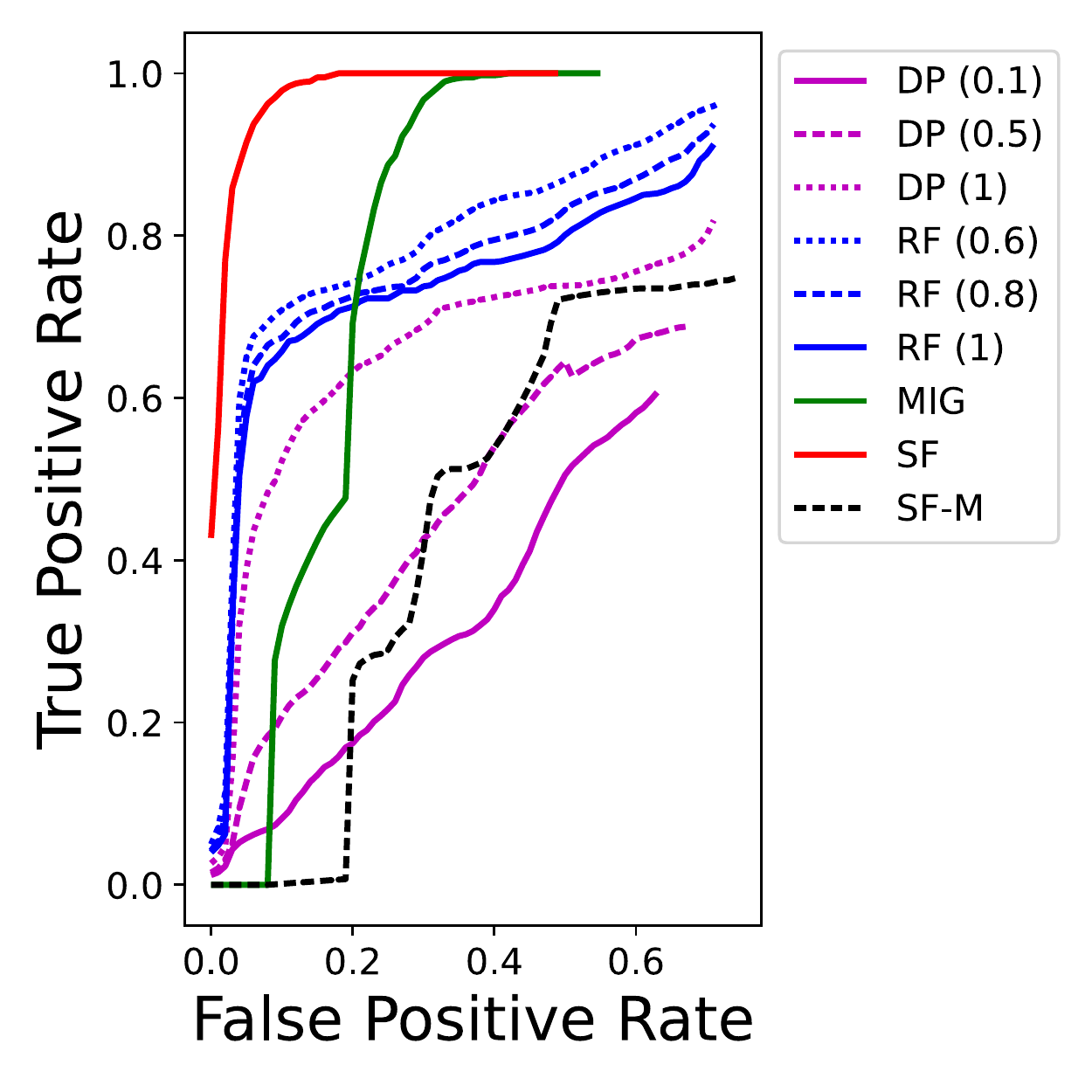}
      \caption{ROC Curves for the Adaptive Batch Setting ($k$=20)}
      \label{fig:adaptive_d}
  \end{subfigure}
  \begin{subfigure}[t]{0.47\linewidth}
      \centering
      \includegraphics[width=\textwidth]{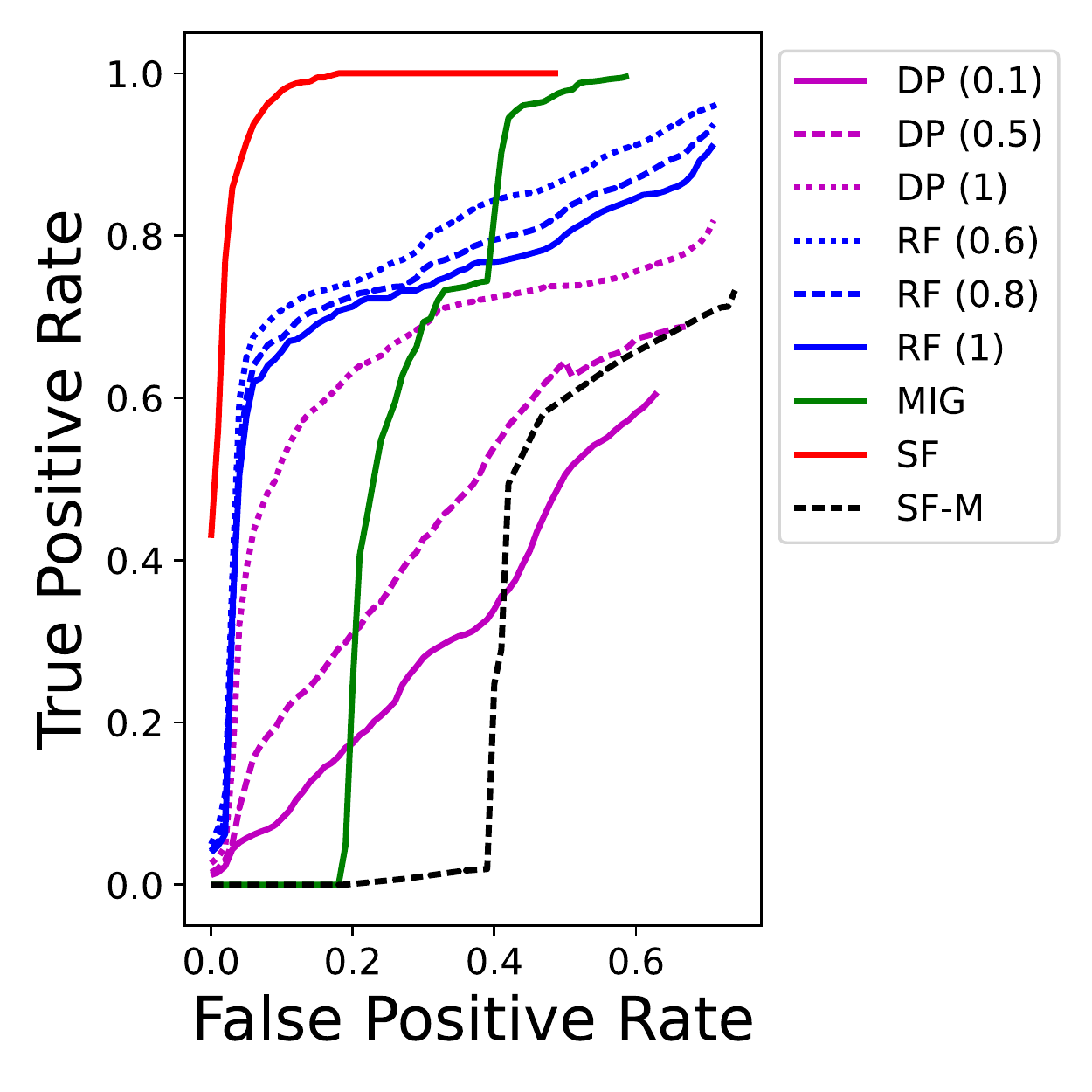}
      \caption{ROC Curves for the Adaptive Batch Setting ($k$=40)}
      \label{fig:adaptive_e}
  \end{subfigure}
    \caption{Performance in the adaptive attack batch setting.}
  \label{fig:adaptive_priv_roc}
\end{figure}
\section{Adaptive Threshold, Authenticated Online Setting}
\label{A:a7}
Next, we compare the performance of the various methods using ROC curves in the authenticated online setting with the adaptive threshold model. Figs.~\ref{fig:auth_online_k1_100k}, \ref{fig:auth_online_k1_500k} and~\ref{fig:auth_online_k1_1.3M} present ROC curves when $100000$, $500000$ and $1.3$ million SNVs are queried respectively, using $k=1$; and Figs.~\ref{fig:auth_online_k10_100k}, \ref{fig:auth_online_k10_500k} and~\ref{fig:auth_online_k10_1.3M} present corresponding results for $k=10$. 

When $k=1$, the variant of our approach in this setting (\textsc{OG}) outperforms all baselines except \textsc{DP} with $\epsilon=0.1$ when $100000$ SNVs are queried, and \textsc{DP} with $\epsilon=0.5$ and $\epsilon=0.1$ when the number of SNVs queried is increased to $500000$ and $1.3$ million. However, \textsc{DP} flips a significantly larger number of SNVs when compared to \textsc{MIG}, as can be observed in Fig.~\ref{fig:authOnline_ada_a}. Also note that at $k=1$, \textsc{DP} does violate privacy more often than \textsc{MIG}, as can be seen in Fig.~\ref{fig:authOnline_ada_c}. When $k=10$, \textsc{OG} outperforms all baselines, except \textsc{DP} with $\epsilon=0.1$, although the performance of the two approaches are closer in this case when compared to the case where $k=1$. The performance of \textsc{SF-M} also shows significant improvement relative to the $k=1$ case.

\section{Adaptive Threshold, Unauthenticated Online Setting}
\label{A:a8}
Fig.~\ref{fig:unauth_online_k10_ROCs} presents ROC curves comparing the performance of the various baselines to our variant (\textsc{OMIG}) in the unauthenticated online setting with an adaptive threshold attack, when $100000$, $500000$ and $1.3$ million SNVs are queried. In all three cases, \textsc{DP} outperforms \textsc{OMIG}, and \textsc{SF} achieves very similar performance to \textsc{OMIG}. However, both \textsc{DP} and \textsc{SF} provide lower utility when compared to \textsc{OMIG}, as can be seen from Fig.~\ref{fig:unauthOnline_ada}.
\begin{figure}[t]
  \centering
  \captionsetup[subfigure]{justification=centering}
  \begin{subfigure}[t]{0.47\linewidth}
      \centering
      \includegraphics[width=\textwidth]{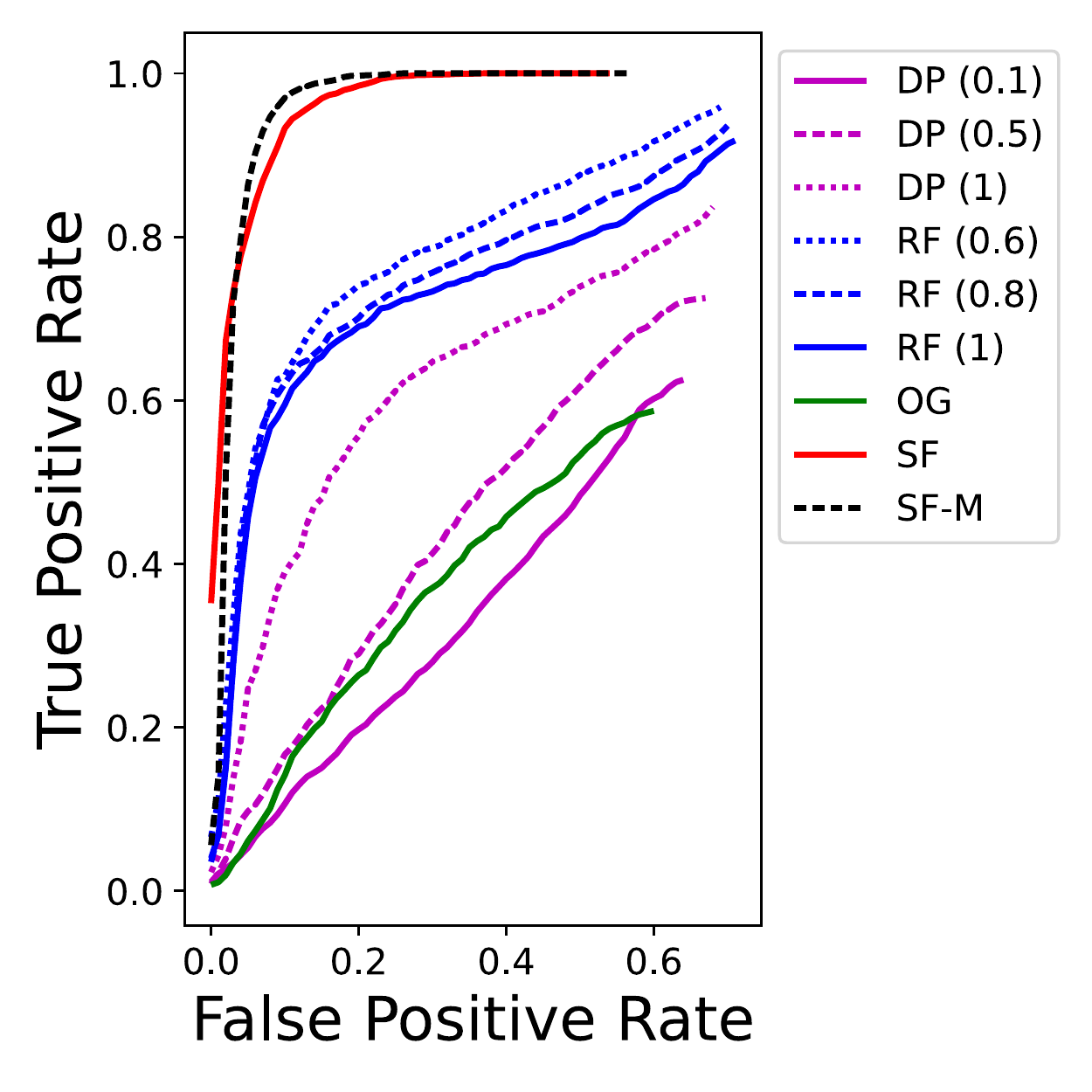}
      \caption{$k$=1, 100K SNVs queried}
      \label{fig:auth_online_k1_100k}
  \end{subfigure}
  \begin{subfigure}[t]{0.47\linewidth}
      \centering
      \includegraphics[width=\textwidth]{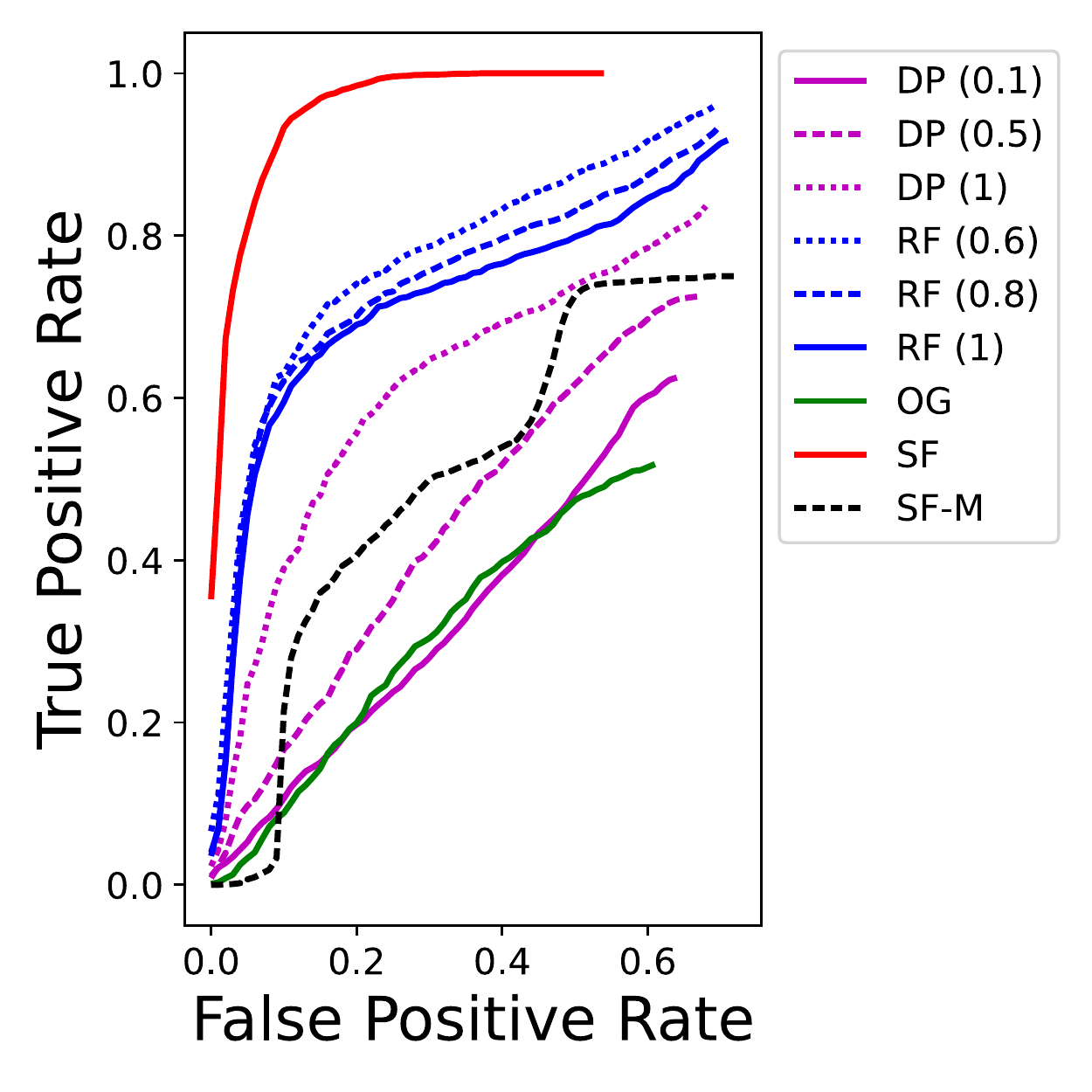}
      \caption{$k$=10, 100K SNVs queried}
      \label{fig:auth_online_k10_100k}
  \end{subfigure}
  \newline
  \begin{subfigure}[t]{0.47\linewidth}
      \centering
      \includegraphics[width=\textwidth]{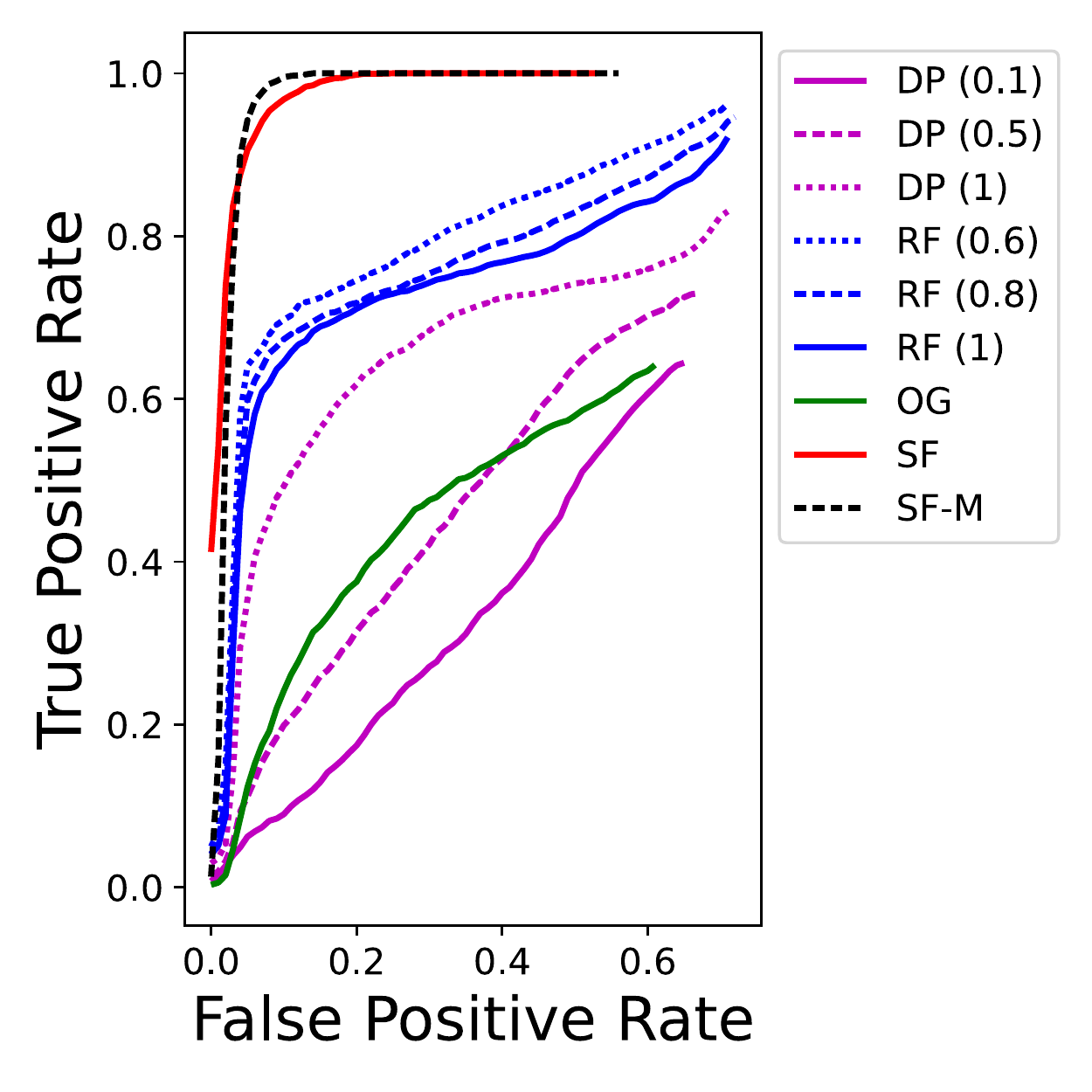}
      \caption{$k$=1, 500K SNVs queried}
      \label{fig:auth_online_k1_500k}
  \end{subfigure}
  \begin{subfigure}[t]{0.47\linewidth}
      \centering
      \includegraphics[width=\textwidth]{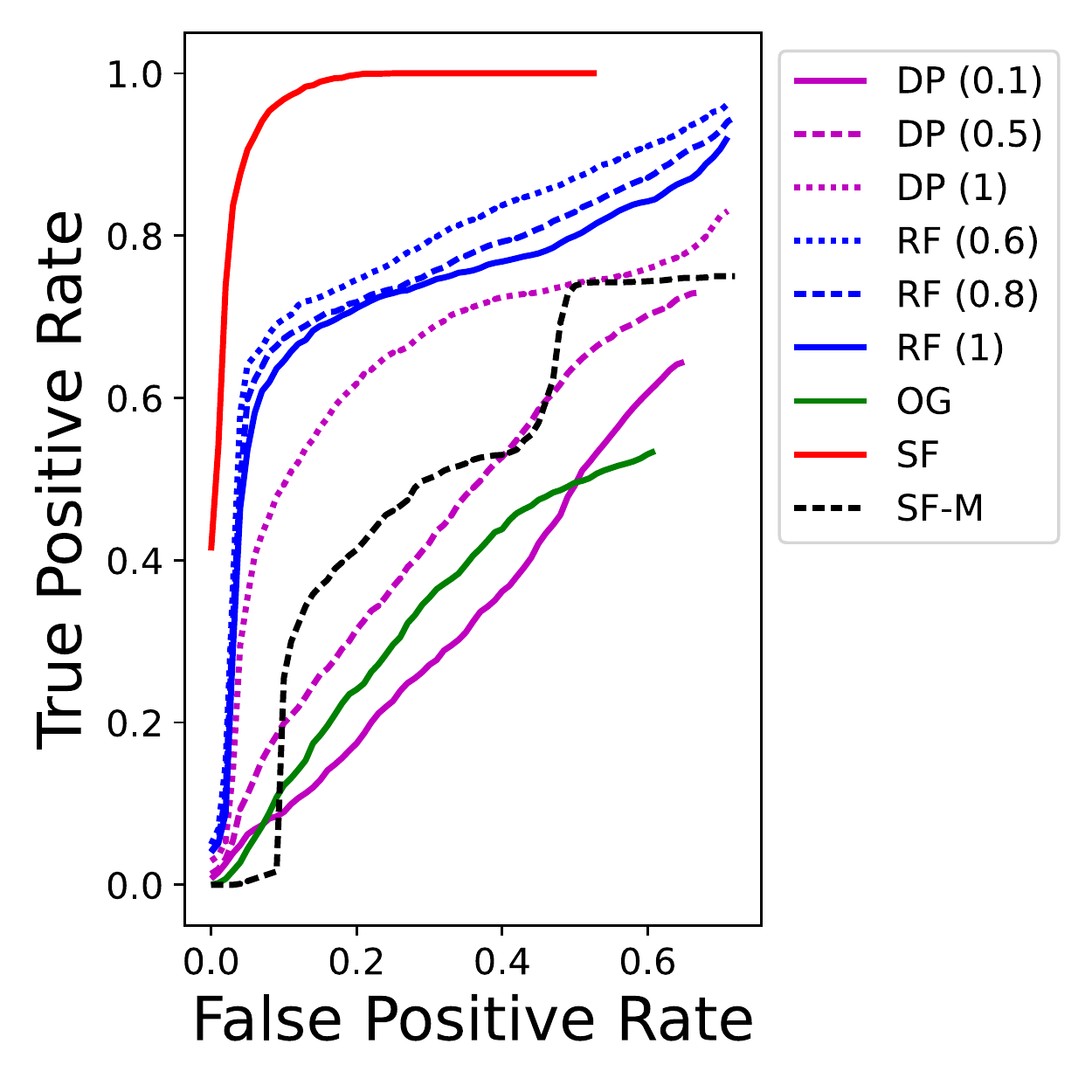}
      \caption{$k$=10, 500K SNVs queried}
      \label{fig:auth_online_k10_500k}
  \end{subfigure}
  \newline
  \begin{subfigure}[t]{0.47\linewidth}
      \centering
      \includegraphics[width=\textwidth]{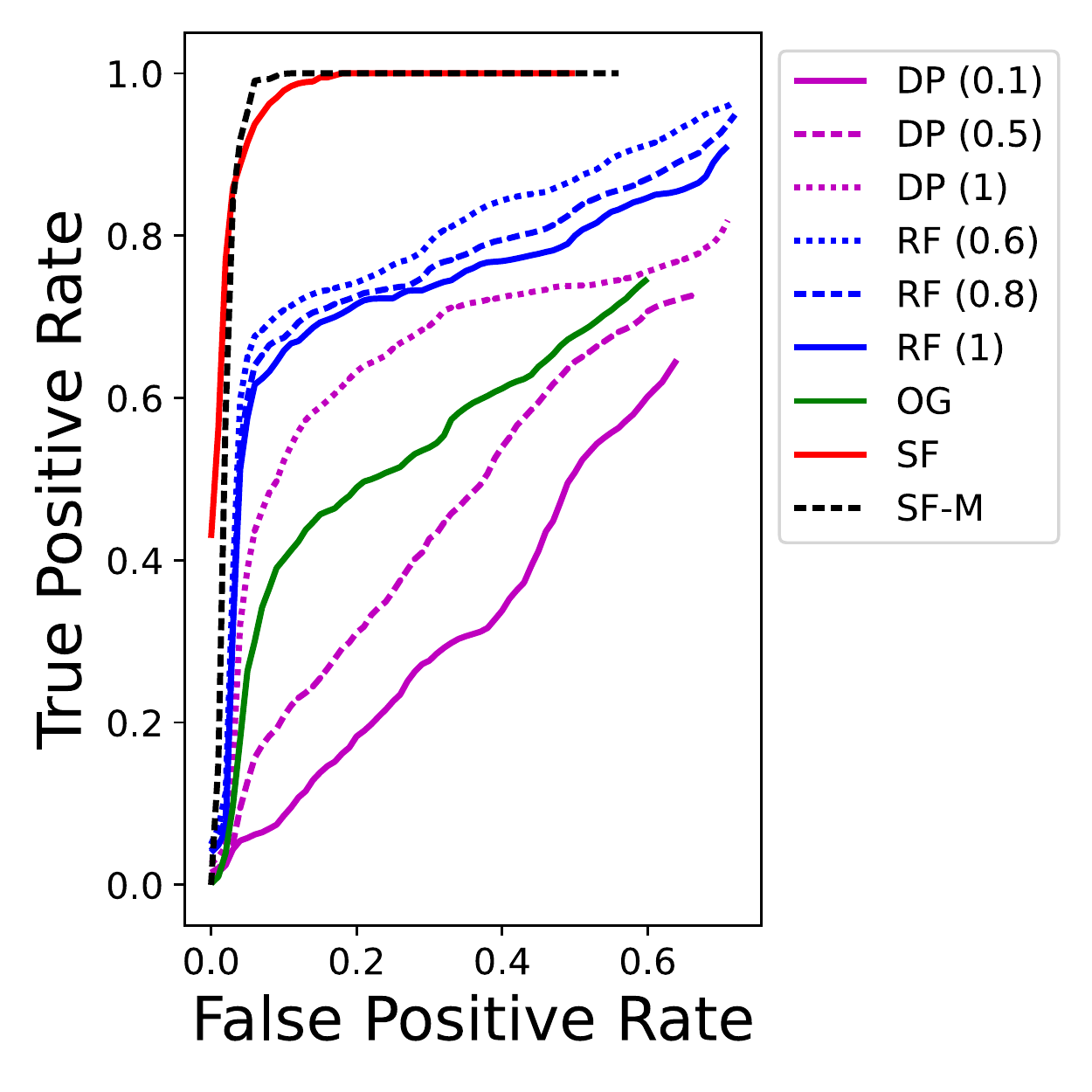}
      \caption{$k$=1, 1.3M SNVs queried}
      \label{fig:auth_online_k1_1.3M}
  \end{subfigure}
  \begin{subfigure}[t]{0.47\linewidth}
      \centering
       \includegraphics[width=\textwidth]{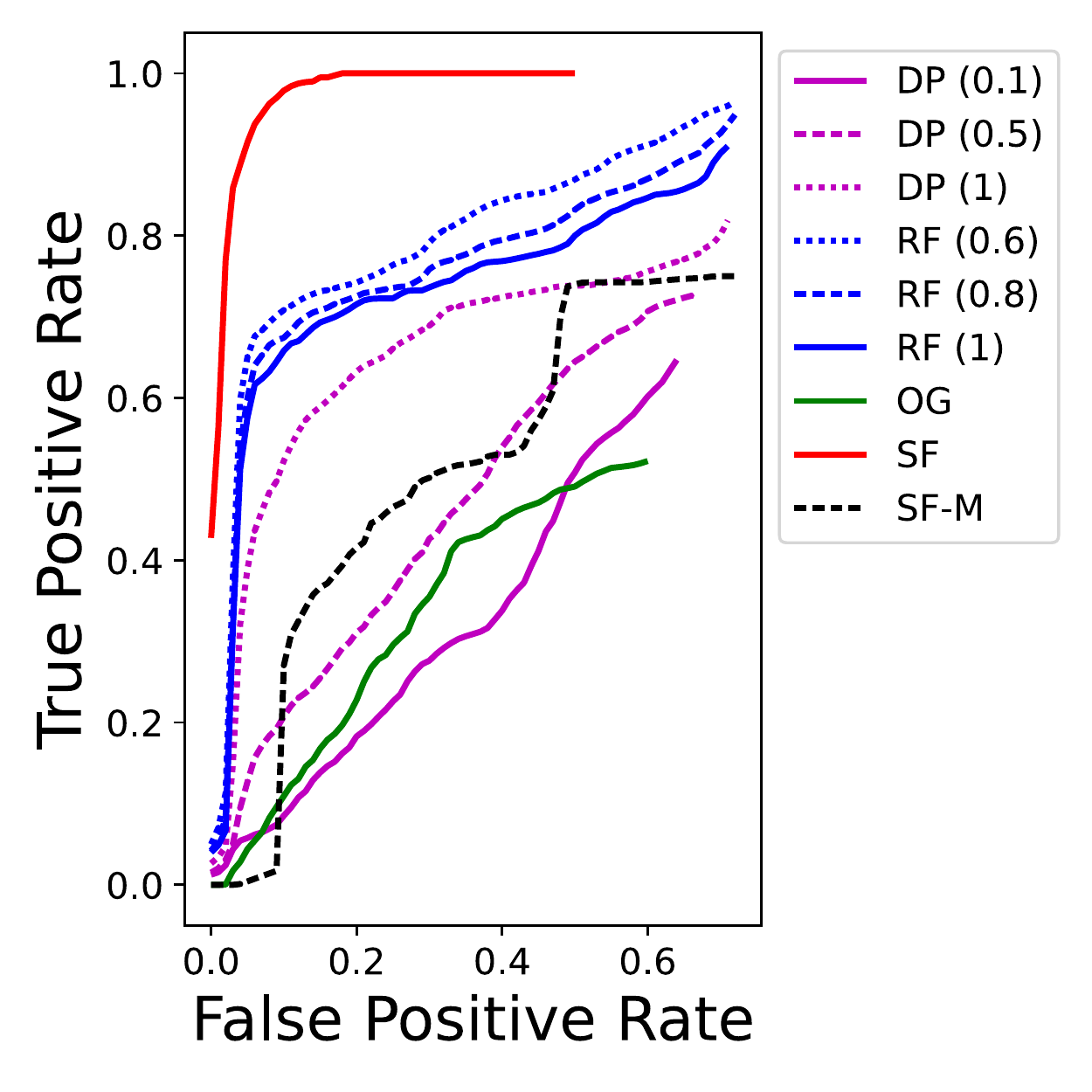}
      \caption{$k$=10, 1.3M SNVs queried}
      \label{fig:auth_online_k10_1.3M}
  \end{subfigure}
    \caption{ROC curves, authenticated online setting.}
  \label{fig:auth_online_k1_k10_ROCs}
\end{figure}
\begin{figure}[h]
  \centering
  \captionsetup[subfigure]{justification=centering}
  \begin{subfigure}[t]{0.47\linewidth}
      \centering
      \includegraphics[width=\textwidth]{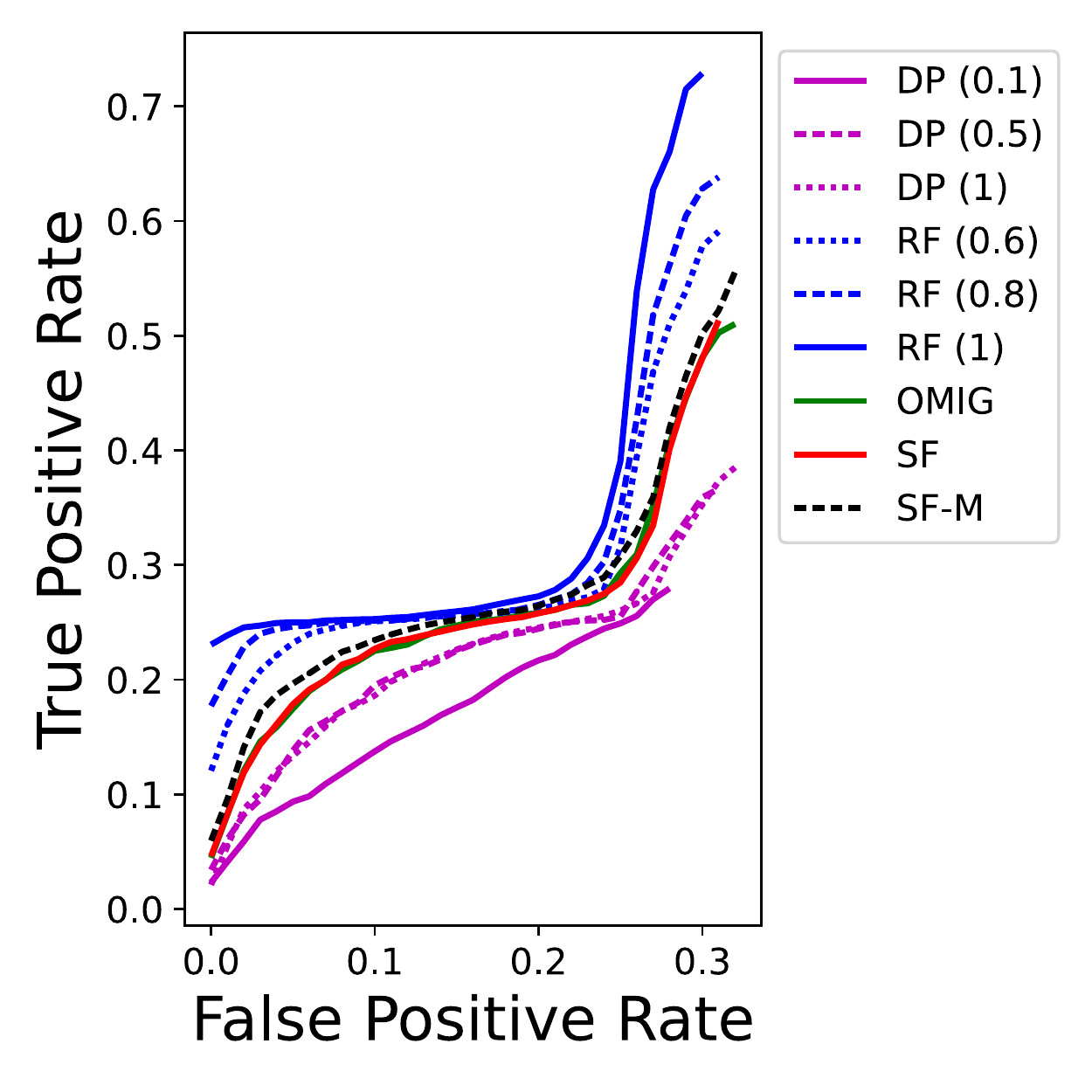}
      \caption{$k$=10, 100K SNVs queried}
      \label{fig:unauth_online_k10_100k}
  \end{subfigure}
  \begin{subfigure}[t]{0.47\linewidth}
      \centering
      \includegraphics[width=\textwidth]{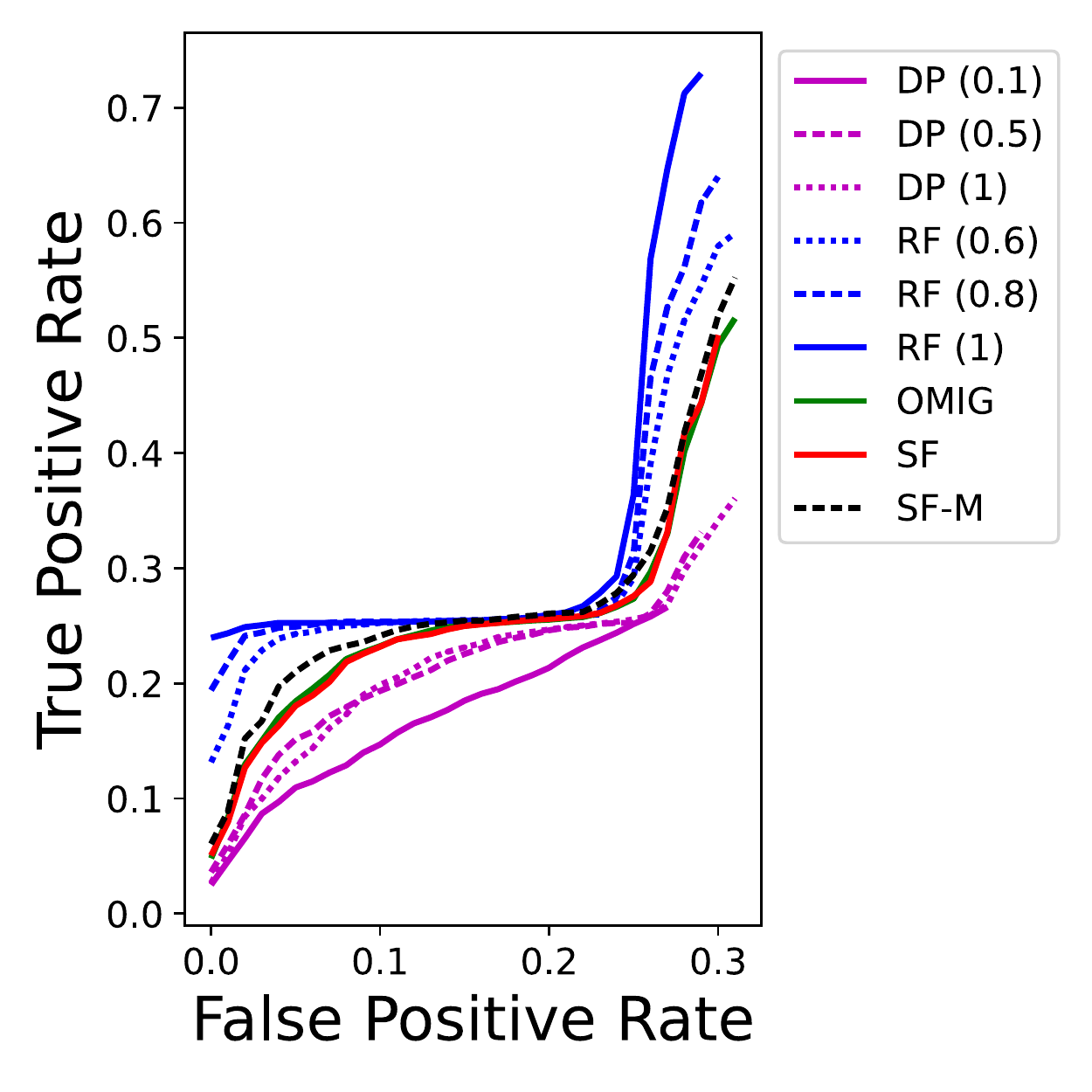}
      \caption{$k$=10, 500K SNVs queried}
      \label{fig:unauth_online_k10_500k}
  \end{subfigure}
  \newline
  \begin{subfigure}[t]{0.47\linewidth}
      \centering
      \includegraphics[width=\textwidth]{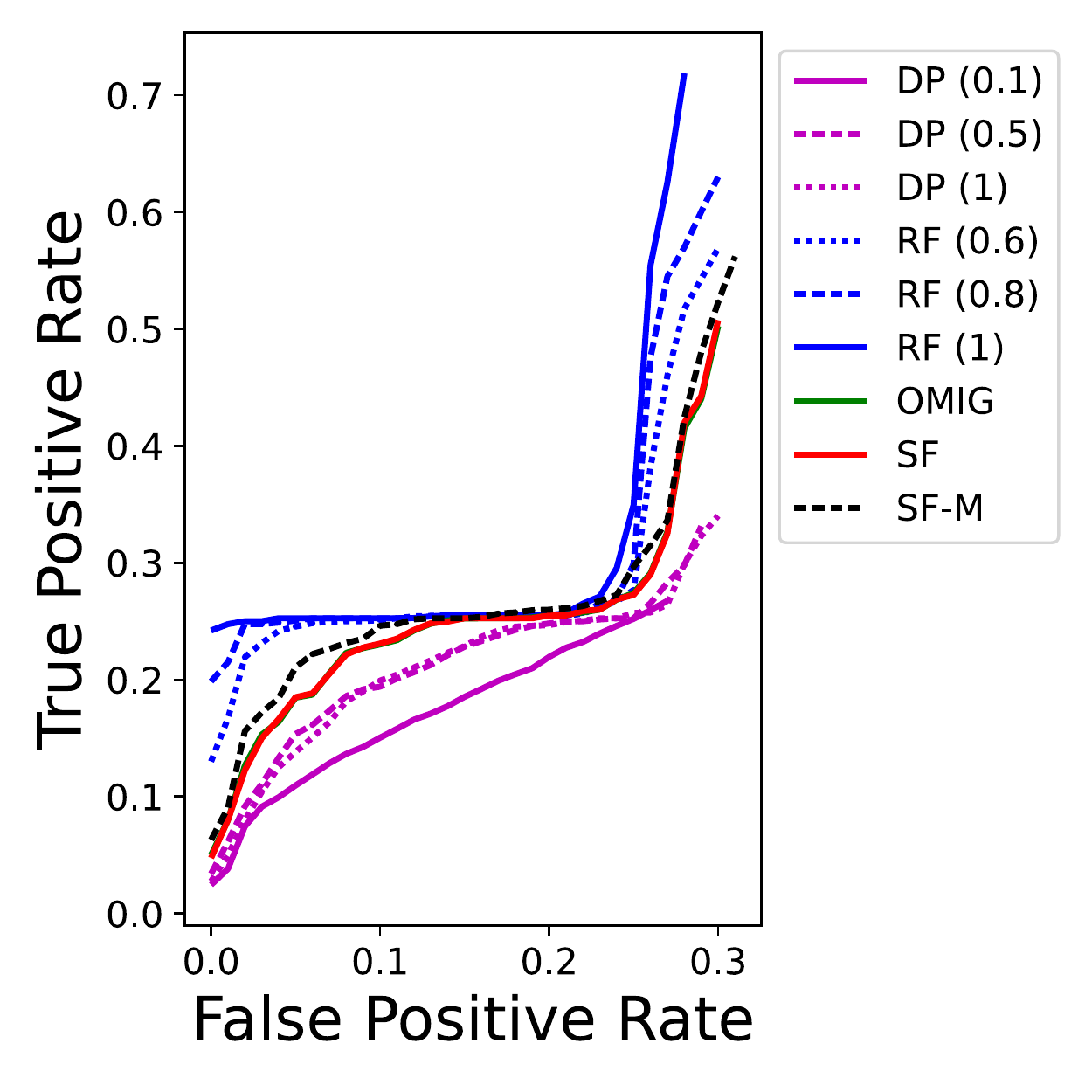}
      \caption{$k$=10, 1.3M SNVs queried}
      \label{fig:unauth_online_k10_1.3M}
  \end{subfigure}
    \caption{ROC curves, unauthenticated online setting.}
  \label{fig:unauth_online_k10_ROCs}
\end{figure}

\end{document}